\documentclass[11pt]{article}
\usepackage[top=1in,bottom=1in,right=.8in,left=.8in]{geometry}

\usepackage{amssymb,amsmath,amsthm}
\usepackage{tikz}
\usepackage{arydshln}
\usepackage{verbatim}
\usepackage{titlesec}
\usepackage{bbm}
\usepackage{bm}
\usepackage{subfigure}
\usepackage{ stmaryrd }
\usepackage{cleveref}
\usepackage{xcolor}
\usepackage{color}
\usepackage[suppress]{color-edits}
\addauthor{et}{purple}
\addauthor{jg}{blue}




\newcounter{subsubsubsection}[subsubsection]
\renewcommand\thesubsubsubsection{\thesubsubsection.\arabic{subsubsubsection}}

\titleformat{\subsubsubsection}
  {\normalfont\normalsize\bfseries}{\thesubsubsubsection}{1em}{}
  \titlespacing*{\subsubsubsection}
  {0pt}{3.25ex plus 1ex minus .2ex}{1.5ex plus .2ex}

\makeatletter  
\def\toclevel@subsubsubsection{4}
\def\l@subsubsubsection{\@dottedtocline{4}{7em}{4em}}

\makeatother
\setcounter{secnumdepth}{4}
\setcounter{tocdepth}{4}

\newcommand{\headder}[1]{\textbf{#1}}


\crefname{equation}{}{}
\crefrangeformat{equation}{(#3#1#4) --~(#5#2#6)}
\crefmultiformat{equation}{(#2#1#3)}{, (#2#1#3)}{, (#2#1#3)}{, (#2#1#3)}
\crefname{lem}{Lemma}{Lemmas}
\crefname{section}{Section}{Sections}
\crefname{subsubsubsection}{Section}{Sections}
\crefname{rem}{Remark}{Remarks}
\crefname{figure}{Figure}{Figures}
\crefname{table}{Table}{Tables}
\Crefname{lem}{Lemma}{Lemmas}
\crefname{thm}{Theorem}{Theorems}
\Crefname{thm}{Theorem}{Theorems}


%
%


%
%

\newtheorem{thm}{Theorem}[section]
\newtheorem{assumption}{Assumption}[section]
\newtheorem{claim}{Claim}[section]

\newtheorem{remark}{Remark}[section]

\newtheorem{lem}{Lemma}[section]

\newtheorem{corollary}[thm]{Corollary}

\theoremstyle{definition}
\newtheorem{fact}{Fact}[section]
 
\theoremstyle{definition}
\newtheorem{defn}{Definition}[section]
\renewenvironment{proof}{\noindent {\bf Proof.}}{\qed}

\title{Stability and Learning in Strategic Queuing Systems}
\author{Jason Gaitonde\footnote{Cornell University, Department of Applied Mathematics. Supported by NSF grant CCF-1408673 and ASOFR grant F5684A1. Email: jsg355@cornell.edu.}\, and \'Eva Tardos\footnote{Cornell University, Department of Computer Science. Supported  in part by NSF grant CCF-1408673, CCF-1563714, and ASOFR grant F5684A1. Email: eva.tardos@cornell.edu.}}
\date{February 2020}

\begin{document}

\maketitle
\begin{abstract}
Bounding the price of anarchy, which quantifies the damage to social welfare due to selfish behavior of the participants, has been an important area of research. In this paper, we study this phenomenon in the context of a game modeling queuing systems: routers compete for servers, where packets that do not get service will be resent at future rounds, resulting in a system where the number of packets at each round depends on the success of the routers in the previous rounds. We model this as an (infinitely) repeated game, where the system holds a state (number of packets held by each queue) that arises from the results of the previous round. We assume that routers satisfy the no-regret condition, e.g. they use learning strategies to identify the server where their packets get the best service. 

Classical work on repeated games makes the strong assumption that the subsequent rounds of the repeated games are independent (beyond the influence on learning from past history). The carryover effect caused by packets remaining in this system makes learning in our context result in a highly dependent random process. We analyze this random process and find that 
if the capacity of the servers is high enough to allow a centralized and knowledgeable scheduler to get all packets served even with double the packet arrival rate, and queues use no-regret learning algorithms, then the expected number of packets in the queues will remain bounded throughout time, assuming older packets have priority. This paper is the first to study the effect of selfish learning in a queuing system, where the learners compete for resources, but rounds are not all independent: the number of packets to be routed at each round depends on the success of the routers in the previous rounds.

\end{abstract}
\section{Introduction}
In this paper, we consider how to guarantee the efficiency of stochastic queuing systems when routers use simple learning strategies to choose the servers they use, and repeatedly resend their packets until the packet gets served. We show conditions that guarantee the stability of such systems despite the competition of queues and the carryover effects between rounds caused by resending packets.

Understanding how to design complex systems that remain efficient even when used by selfish agents is an important goal of algorithmic game theory. The \emph{price of anarchy} \cite{Koutsoupias2009Worst} measures this inefficiency by comparing the welfare of the Nash equilibrium of the game to the socially optimal solution without considering incentives. This notion has lead to a long line of literature bounding \jgedit{this loss in} various games. Our results are analogous in spirit to those of \cite{RoughgardenT2002}, which shows that in the context of routing in networks with delay, the cost of any Nash equilibrium outcome is no more expensive than the centrally designed optimum that carries twice as much flow. 

We model the behavior of queues as learners, assuming that their choices of where to send packets satisfies the no-regret guarantee. This guarantee can be ensured by running any of a large set of learning algorithms \cite{book}. Studying learning behavior in games has a long history, dating back to the early work of Robinson \cite{Robinson}, see also \cite{FudenbergLevine}. 
Work in the last two decades has extended the Nash equilibrium quality analysis (price of anarchy) to learning outcomes \cite{BlumHLR,Roughgarden09,SyrgkanisT13}. If all players employ a no-regret learning strategy, then the play converges to a form of correlated equilibrium of the game \cite{Hart} (players correlating their play by each of them using the history of play to decide their next action), and the price of anarchy analysis extends also to the correlated play.  A serious limitation of the model of repeated games considered in these works is the assumption that the games played at different times are independent, in the sense that the outcome of the game at time $t$ has no direct  effect at time $t+1$ except through the learning of the agents. While this can be a good approximation in some games, there are many applications where this clearly fails. In the context of routing games modeling the morning rush-hour traffic, it is indeed the case that no matter how bad Monday morning traffic was, the traffic jam clears up before Tuesday morning. In contrast, when modeling traffic on a smaller time scale, if one car or packet experiences delay, it remains in the system longer, or may be re-sent later, and hence affects later time periods. In ad-auctions, the remaining budget of the player has a similar carryover effect, as 
winning the auction in one round decreases the player's remaining budget in future rounds.

Here, we will consider a queuing system with queues sending packets to servers as a simplified model of a network of queues. 
In  \cite{krishnasamy2016learning}, the authors study the performance of a learning algorithm in the same queuing system finding the best server(s) with respect to \emph{queue-regret}, which measures the expected difference in queue sizes to that of a genie strategy that knows the optimal server. Their primary goal is to study this more refined notion for the queuing setting for this classical stochastic bandit problem, which exhibits more complicated behavior than standard no-regret bounds that grow at least logarithmically with time. 
Our paper studies a decentralized, multi-queue version of the same system, where each queue uses their own learning algorithm to clear their packets while selfishly competing with each other for service. Our primary focus is on establishing precisely when standard no-regret algorithms can ensure that the competitive queuing system remains stable even in this game-theoretic setting, a concern that does not arise 
in the learning problem with centralized scheduling.

\headder{Our Results.}
Our main result concerns a multi-agent version of the queuing system of \cite{krishnasamy2016learning}, where the queues each use their own no-regret learning algorithm to find and compete for the best servers. We show that if the service rates of the servers is high enough to allow a centralized scheduler to get all packets served even with double the arrival rate and when older packets have priority over younger packets, then the expected length of all queues will remain bounded 
over all time, 
 assuming the learning algorithms used satisfy the no-regret assumption. Studying the outcome of learning in such systems with carryover effect requires us to study these interactions not just as a repeated game, but as a highly dependent random process. 

In this model, a server can serve at most one packet at any time, and packets remaining in the system are queued at the input side. At each time step, a server with service rate $\mu$ will select one of the packets sent to it (if any), serve it with probability $\mu$, and return all unserviced packets to their queues. In other queuing systems, the servers may also have a bounded size queue and would only send back (or drop) packets when they no longer fit on the queue; our simpler model without server queues makes the trade-offs we want to study cleaner. A packet sent to a server is either served or returned and offers instantaneous feedback to the learning algorithms of the queues, in contrast to the bit more informative, but delayed feedback available in real systems. 

An important feature of the model is how conflicts are resolved when multiple queues send to the same server in a 
time period.
We show that if the servers select a packet uniformly at random among the arriving packets, then unless the success rates of the servers are prohibitively larger than the arrival rates of the queues (by a factor that grows with the number of queues), learning does \emph{not} necessarily ensure that all packets will get served in a timely manner: in systems with many queues with low arrival rate, when these queues are selfishly aiming to get good service, the number of unserved packets at queues 
with high arrival rate may grow linearly over time. Our main result is to show that if packets also carry a timestamp, and servers choose to serve the oldest arriving packet, then this linear blowup cannot arise: if the system has enough capacity to serve all packets when they are centrally coordinated, even with just double the arrival rate, then no-regret learning of the queues guarantees that all packets get served and queue lengths stay bounded \jgedit{in expectation}. We also show that this bound of 2 on the required service rate is tight, in that with less than a factor of 2 higher service rate, no-regret learning does not necessarily guarantee the timely service of all packets.

\headder{Our Techniques.}
The carryover effect between rounds caused by packets left in the system, forces us to study these interactions not just as a repeated game, but as a highly dependent random process. Moreover, the randomness arises intrinsically from both the randomized strategic behaviors of the queues and the inherent randomness in the queuing system. To establish the result, we combine game-theoretic properties implied by the no-regret assumption with techniques from random processes to establish the high-probability results.

In analysing the behavior of the queuing system, we have to deal with highly dependent processes. If a queue receives too many packets during a previous period, this has a major effect not only on the outcomes for this queue, but for every other queue it may be competing with.  
To make the study of these random process more manageable, we use the principle of deferred decisions: rather than considering the state of the queue sizes, each with possibly many packets, we keep track only of the timestamp of the oldest packet in each queue and defer seeing when the next packet arrived until after this one is served. In doing so, the timestamp of the next packets to be cleared and the service successes of the servers are all independent of the current time period, and hence we can use standard concentration bounds. 

To prove the bounds on the queue sizes, we use a potential function based on the oldest time stamp in each queue. The main idea of the proof is to argue that when this function has a high enough value, than it must have negative drift. To conclude that the queues remain bounded, we use a powerful theorem of Pemantle and Rosenthal \cite{pemantle1999moment} showing that a sufficiently regular stochastic process with negative drift must have moments uniformly bounded over time.
Once we obtain this property for our random process, we then use standard probabilistic techniques to obtain an evidently weaker, but perhaps more interesting, asymptotic control on the almost sure growth of the queues in these queuing systems. We hope that the kinds of qualitative features we establish and the methods of proof for these results will be of interest in the future study of repeated strategic interactions that similarly relax the independence assumptions of the games played at each round.

\headder{Further Related Work.}
As already explained above, the model we study combined features of learning in games with classical queuing systems; both of these areas have large bodies of literature. The classical focus of work on scheduling in queuing systems is to identify policies that achieve optimal throughput (see for example the textbook of \cite{queuing_theory}). Closest to our model from this literature is the work of \cite{krishnasamy2016learning}, which characterizes the queue-regret of learning algorithms that only seek to identify the best servers, but does not consider competition between selfish learners. They characterize queue-regret for the case of a single queue aiming to find the best server, and extend the result to the case of multiple queues scheduled by a single coordinated scheduling algorithm, assuming there is a perfect matching between queues and optimal servers that can serve them. In contrast, we assume that each queue separately learns to selfishly make sure its own packets are served at a high enough rate, offering a game-theoretic model of scheduling packets in a queuing system, and do not make the matching assumption on queues and servers. Compared to classical price of anarchy bounds in repeated games \cite{BlumHLR,Roughgarden09,SyrgkanisT13}, we no longer make the assumption that games at different rounds are independent. Studying this model requires us to combine ideas from the price of anarchy analysis of games with understanding the behavior of stochastic systems.

Our work is one of the first examples of studying the effect of learning in games with carryover effects between rounds. Studying such systems requires understanding a highly  dependent random process. Among the large body of literature of such processes, closest to our work is the adversarial queuing systems of  \cite{adversarial_queuing}, who also use the Pemantle and Rosenthal \cite{pemantle1999moment} theorem to establish bounded queue sizes in expectation. 
Another important 
repeated game setting with such carryover effect is the repeated ad-auction game with limited budgets. The papers of \cite{nikhil_pacing, pacing, pacingEC} consider such games and offers results on convergence to equilibrium as well as understanding equilibria in the first-price auction settings under a particular behavioral model of the agents. Analyzing such systems for the more commonly used second-price auction system is an important open problem.

\section{Preliminaries}

\headder{Notation.}
In general, random variables will be denoted by capital letters (i.e. $X,Y,Z,\ldots)$, while vectors will generally be bolded (i.e. $\bm{\mu},\bm{\lambda}$, etc). If a random variable $X$ has some distribution $\mathcal{D}$, we write $X\sim \mathcal{D}$. We use the notation $\text{Geom}(p)$ to denote a geometric distribution with parameter $p$, $\text{Bern}(p)$ for a Bernoulli distribution that is $1$ with probability $p$ and $0$ otherwise, and $\text{Bin}(n,p)$ for a binomial distribution with parameters $n$ and $p$. 

We say an event occurs \emph{almost surely} if it has probability $1$. We use standard $O(\cdot), o(\cdot),$ and $\Theta(\cdot)$ notation, where $\tilde{O}(\cdot)$ indicates logarithmic factors are hidden; we will sometimes write $f(n)\asymp g(n)$ if $f(n)=\Theta(g(n))$. We will also consider the following norms: for a positive vector $\bm{\lambda}=(\lambda_1,\ldots,\lambda_n)$, with $\lambda_1\geq\ldots\geq \lambda_n>0$, we define the following two weighted $\ell_p$ norms on $\mathbb{R}^n$:$
    \|\mathbf{x}\|_{\bm{\lambda},1}\triangleq \sum_{i=1}^n \lambda_i \vert x_i\vert$ and $
    \|\mathbf{x}\|_{\bm{\lambda},2}\triangleq \sqrt{\sum_{i=1}^n \lambda_i x_i^2}.$
It is easily seen that for any $\mathbf{x}$, $\|\mathbf{x}\|_{\bm{\lambda},1}\asymp \|\mathbf{x}\|_{\bm{\lambda},2}$ (where the constants depend on $\bm{\lambda}$) via Cauchy-Schwarz, see Lemma \ref{lem:norms}.

\headder{Standard Queuing Model.}
\label{section:standardsystem}
We consider the following discrete-time queuing system illustrated by the figure below, which is a decentralized, competitive version of the model considered by Krishnasamy, et al \cite{krishnasamy2016learning}: there is a system of $n$ queues and $m$ servers. 
During each discrete time step $t=0,1,\ldots$, the following occurs: 
\begin{enumerate}
    \item Each queue $i$ receives a new packet with a fixed, time-independent probability $\lambda_i$. We model this via an independent random variable $B^i_t\sim \text{Bern}(\lambda_i)$. This packet has a timestamp that indicates that it was generated in the current time period. We label queues such that $\lambda_1\geq \ldots\geq \lambda_n>0$, writing $\bm{\lambda}$ for the vector of arrival rates. 
    \item Each queue that currently has an uncompleted packet chooses one server to send their oldest unprocessed packet (in terms of timestamp) to.
    \item Each server $j$ that receives a packet does the following: first, it only considers the packet it receives with the oldest timestamp (breaking ties arbitrarily). It then processes this packet with a fixed, time-independent probability $\mu_j$. We again label servers so that $\mu_1\geq \ldots\geq \mu_{m}\geq 0$, writing $\bm{\mu}$ for the vector of service rates.
    
    \item All unprocessed packets, possibly including the packets that were selected if the corresponding server failed to process it, are then sent back to their respective queues still uncompleted. Queues receive bandit feedback on whether their packet cleared at their chosen server.
\end{enumerate}

\begin{figure}[h]
\begin{center}
\hspace{-5cm}
\includegraphics[width=10.5cm,trim=2 2cm 0 2cm,clip]{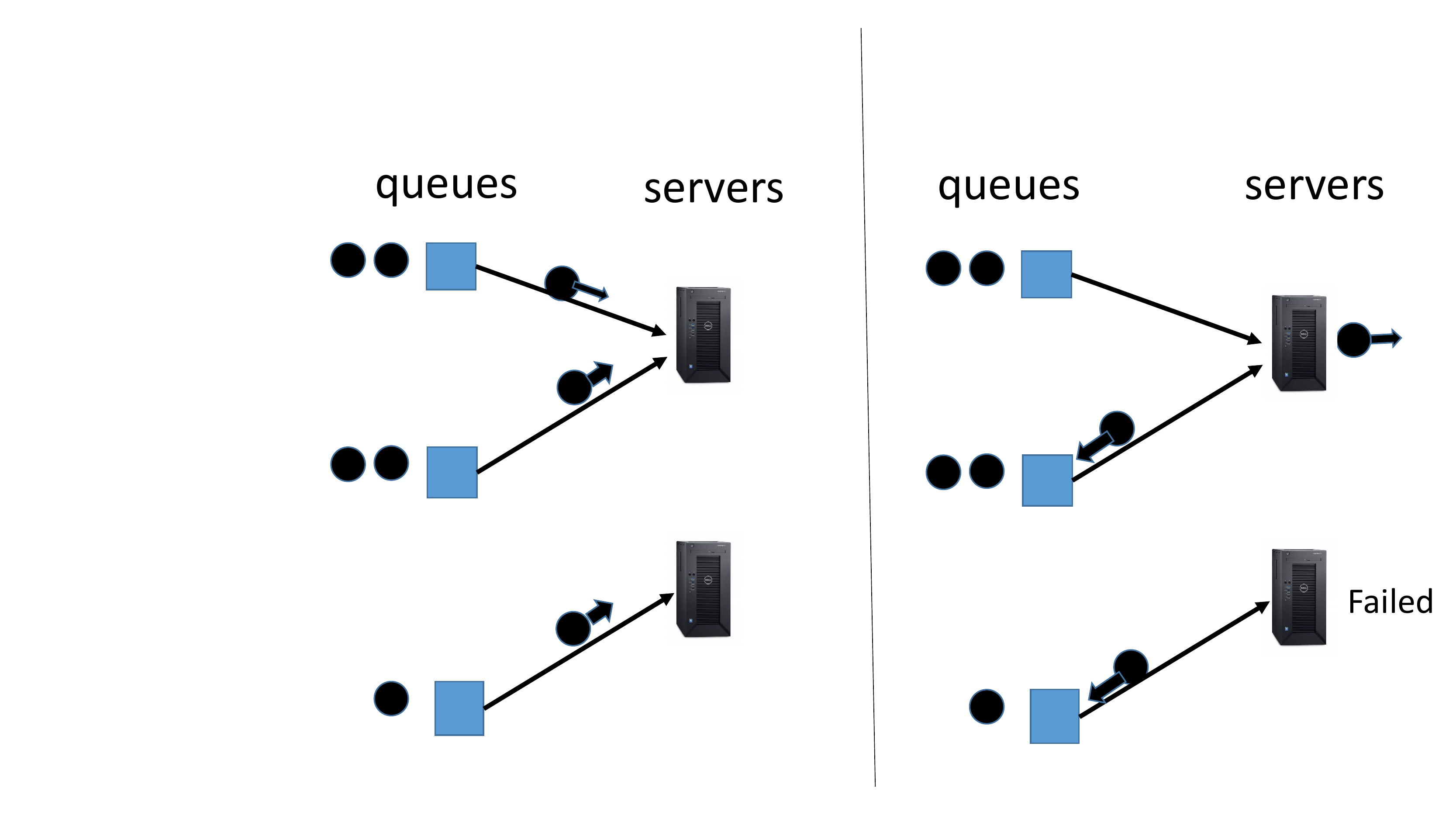}
\end{center}
\caption{Here, three queues compete for two servers. Unserviced packets in each round return to their queue.}
\end{figure}

We write $Q^i_t$ for the number of unprocessed packets of queue $i$ at the beginning of time $t$ (before sampling new packets) and $\mathbf{Q}_t=(Q^1_t,\ldots,Q^n_t)$ for the vector of queue sizes at time $t$. Define $Q_t=\sum_{i=1}^n Q^i_t$ as the total number of unprocessed packets in the system at time $t$. Formally, if $S^i_t$ is the indicator event that queue $i$ clears a packet at time $t$ and $B^i_t$ is again the indicator queue $i$ received a new packet at time $t$, then we have the recurrence as random variables with $Q^i_0=0$ and
\begin{equation}
    Q^i_{t+1}=Q^i_{t}+B^i_t-S^i_t,
\end{equation}
where we note that $S^i_t$ is necessarily $0$ if $Q^i_{t}+B^i_t=0$ (i.e. queue $i$ had no packets and didn't receive a new one in the round, so does not send a packet this time period). This ensures each $Q^i_t$ is integral and nonnegative.
We call the above random process the \textbf{standard model}. We will be interested in the stability of this system in the following sense:

\begin{defn}
The above system is \textbf{strongly stable} under some given dynamics if, for any fixed $r\geq 0$, the random process $Q_t$ satisfies $\mathbb{E}[Q_t^r]\leq C_r$ for some absolute constant $C_r$ that does not depend on $t$.

We say that it is \textbf{almost surely stable} if for any $c>0$, almost surely
\begin{equation*}
    Q_t=o(t^c).
\end{equation*}
That is, the growth of $Q_t$ is almost surely subpolynomial.
\end{defn}

It is not immediately obvious what the relationship is between these stability properties: it turns out that strong stability implies almost sure stability, which we state in Lemma \ref{lem:equivalence} and prove in the Appendix. 

Our main goal is to understand the stability properties of these queuing systems when queues select servers as no-regret learners. To get a baseline measure for when this may be possible, we must first understand when a queuing system is stable under centralized coordination: it turns out that an obvious necessary condition on $\bm{\mu}$ and $\bm{\lambda}$ is also sufficient.

\begin{thm}
\label{thm:feasibility}
Suppose $\bm{\mu}$ and $\bm{\lambda}$ have been preprocessed so that a maximal, equal prefix of $1$'s is deleted from both and both are nonempty and not both identically zero afterwards. Then the above queuing system is strongly stable for some centralized (coordinated) scheduling policy if and only if for all $1\leq k\leq n$,
\begin{equation}
\label{eq:centfeas}
    \sum_{i=1}^k \mu_i > \sum_{i=1}^k \lambda_i.
\end{equation}
When (\ref{eq:centfeas}) holds, we say that the queuing system is \textbf{(centrally) feasible}.
\end{thm}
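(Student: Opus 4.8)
I will treat the two directions separately: necessity via a state-independent drift bound valid for \emph{every} centralized policy, and sufficiency by exhibiting one memoryless policy and analyzing each queue in isolation.

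\emph{Necessity.} Suppose \eqref{eq:centfeas} fails at some $k$, i.e. $\sum_{i=1}^k\mu_i\le\sum_{i=1}^k\lambda_i$, and fix an arbitrary centralized policy. Let $W_t=\sum_{i=1}^k Q^i_t$ be the aggregate length of the $k$ highest-rate queues. In any round these queues collectively send at most $k$ packets, so they are served by at most $\min(k,m)$ distinct servers, and hence the expected number of their packets cleared in a round, conditionally on the history, is at most $\sum_{i=1}^{\min(k,m)}\mu_i=\sum_{i=1}^k\mu_i$ (with $\mu_i=0$ for $i>m$), while the conditional expected number of fresh arrivals is exactly $\sum_{i=1}^k\lambda_i$. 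Thus $W_t$ is a nonnegative submartingale with increments bounded by $k$. If the failure is strict, $\mathbb{E}[W_t]\ge\big(\sum_{i=1}^k\lambda_i-\sum_{i=1}^k\mu_i\big)\,t\to\infty$, so the system (whose total size satisfies $Q_t\ge W_t$) is not strongly stable. In the critical case $\sum_{i=1}^k\mu_i=\sum_{i=1}^k\lambda_i$ the same $W_t$ is a nonnegative integer-valued submartingale that, thanks to the preprocessing which removes every degenerate queue served with certainty whenever nonempty, cannot be asymptotically constant: either the policy under-utilizes servers and $W_t$ then has strictly positive drift, or it routes some nonempty queue to a server with success probability in $(0,1)$ and then there is uniformly positive probability of a strict decrease; by the submartingale convergence theorem this forces $\sup_t\mathbb{E}[W_t]=\infty$. (The routine case analysis here is what the preprocessing is designed to make clean.) Hence \eqref{eq:centfeas} is necessary.

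\emph{Sufficiency.} Assume \eqref{eq:centfeas}. I propose the stateless policy that fixes a distribution $\{p_M\}$ over partial matchings between queues and servers and, in each round independently, samples $M\sim\{p_M\}$ and routes each matched queue's oldest packet to its assigned server. The crux is a combinatorial lemma: \eqref{eq:centfeas} implies $\{p_M\}$ can be chosen so that the induced per-round service rate $\mu_i':=\sum_M p_M\,\mu_{M(i)}$ satisfies $\mu_i'>\lambda_i$ for all $i$. By the Birkhoff--von Neumann theorem for the bipartite doubly substochastic polytope, such a distribution exists iff there is a fractional assignment $x_{ij}\ge 0$ with all row and column sums at most $1$ and $\sum_j x_{ij}\mu_j>\lambda_i$ for every $i$; the existence of such an $x$, with uniform slack $\mu_i'-\lambda_i\ge\delta>0$, follows from the sorted prefix inequalities \eqref{eq:centfeas} by a water-filling / max-flow argument, assigning queues to servers greedily in decreasing order of $\lambda_i$ and of $\mu_j$ and passing each server's unused fraction of time down to the next queue --- the $k$-th inequality being exactly what is needed to meet the demand of the $k$-th queue without exhausting the capacity of the top servers. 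The descending ordering of \emph{both} vectors is essential.

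Given the lemma, under this policy each queue $i$, viewed on its own, is a time-homogeneous birth--death chain on $\mathbb{Z}_{\ge 0}$: from a nonempty state it increases by one with probability $\lambda_i(1-\mu_i')$ and decreases by one with probability $(1-\lambda_i)\mu_i'$ (with reflection at $0$), because the sampled matching and the service coin are drawn afresh each round and are independent of the arrival. Since $\lambda_i<\mu_i'$, this chain is positive recurrent with a geometric-tailed stationary law (ratio $\lambda_i(1-\mu_i')/[(1-\lambda_i)\mu_i']<1$), hence all of its moments are finite; and since $Q^i_0=0$ is the minimal state, a monotone coupling shows $Q^i_t$ is stochastically dominated by its stationary distribution for every $t$, so $\mathbb{E}[(Q^i_t)^r]\le c_{r,i}<\infty$ uniformly in $t$. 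Finally, by convexity of $x\mapsto x^r$ (the case $r<1$ being simpler), $\mathbb{E}[Q_t^r]=\mathbb{E}\big[(\sum_i Q^i_t)^r\big]\le n^{r-1}\sum_i\mathbb{E}[(Q^i_t)^r]\le n^{r-1}\sum_i c_{r,i}=:C_r$, so the system is strongly stable. Note this direction does not invoke the Pemantle--Rosenthal machinery, since the per-queue chains are themselves explicit positive-recurrent Markov chains.

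\emph{Main obstacle.} The heart of the proof is the combinatorial lemma, i.e. turning the prefix-sum inequalities into a sub-stochastic assignment (equivalently, a distribution over matchings) with uniform slack. This is where the sorted structure of $\bm{\lambda}$ and $\bm{\mu}$ genuinely enters --- the statement is false for unsorted sequences --- and it is the natural precursor of the factor-$2$ phenomenon that will arise once this same rate region must be reached by competing no-regret learners rather than a coordinator. The only other subtlety, the critical case of necessity, is handled by the preprocessing as indicated above.
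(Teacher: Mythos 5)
Your proposal has the same overall architecture as the paper's proof and is essentially correct. For necessity, both arguments bound the conditional drift of the aggregate $W_t=\sum_{i=1}^k Q^i_t$ (strict-violation case by telescoping, equality case by observing $W_t$ is an integer-valued nonnegative submartingale whose a.s.\ convergence is impossible once degenerate $0/1$ rates are removed by the preprocessing); this is the paper's argument almost verbatim. For sufficiency, both proofs pass through Birkhoff--von Neumann: construct a doubly (sub)stochastic assignment with $\sum_j x_{ij}\mu_j>\lambda_i$, decompose it into a distribution over matchings, apply a fresh sampled matching every round, and then observe that each $Q^i_t$ follows a reflected biased random walk whose geometric-tailed stationary law has all moments bounded.

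The one genuine difference is how the assignment matrix is produced. The paper applies the classical Hardy--Littlewood--P\'olya/Marshall--Olkin theorem: scale $\bm\mu$ down so that it majorizes $\bm\lambda$, obtain a doubly stochastic $P$ with $P\bm\mu'=\bm\lambda$, and conclude $P\bm\mu>\bm\lambda$ componentwise. You instead sketch a greedy water-filling/max-flow construction. That construction does work for sorted rate vectors, but you present it only as a claim, and the row-sum constraint $\sum_j x_{ij}\le 1$ is the step that requires real care there (in tight instances the greedy solution hits $\sum_j x_{ij}=1$ exactly). The majorization route is the cleaner path because it discharges this entire combinatorial step by citation, whereas your approach would need a full inductive argument verifying both row and column slack simultaneously. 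Everything else --- the stochastic-dominance/monotone-coupling observation, the Minkowski-type bound on $\mathbb{E}[Q_t^r]$, and the explicit decision not to invoke Pemantle--Rosenthal here --- matches the paper's proof in substance.
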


An instructive example to keep in mind is a single-queue, single-server system. Of course, there is no learning or competition in such a process. If $0<\lambda<\mu\leq 1$, it is well-known that $Q^1_t$ follows a biased random walk on the nonnegative integers towards $0$, and moreover is geometrically ergodic.\footnote{Namely, this random process mixes to a stationary distribution on $\mathbb{N}$ with geometrically decreasing tail probabilities.} This in particular implies strong stability. On the other hand, if $0<\lambda=\mu<1$, say $\mu=\lambda=.5$, then it is well-known that the corresponding unbiased random walk $Q^1_t$ satisfies $\mathbb{E}[Q^1_t]=\Theta(\sqrt{t})$. Therefore, there is a sharp threshold for strong stability. 

We give the full proof of the Theorem in the Appendix and only sketch it here: for necessity, when (\ref{eq:centfeas}) strictly fails, it will be easy to see the queueing system is not stable as the expected total number of packets grows linearly. A slight modification is needed using standard submartingale arguments if instead only equality holds somewhere in (\ref{eq:centfeas}). For sufficiency, we leverage the well-known connection between majorization of vectors with doubly stochastic matrices. By decomposing these matrices into a convex combination of permutation matrices, one obtains a randomized, coordinated matching schedule between queues and servers that ensures at every time step that the probability of clearing a packet strictly exceeds the probability of receiving a new packet. Each queue size thus follows a biased random walk towards $0$, which will ensure stability.

\headder{The Need for Packet Priorities.} 
We will be interested in proving statements of the following form:
\begin{quote}
    Given a queuing system that is centrally feasible even when $\bm{\lambda}$ is scaled up by some explicit constant $c\geq 1$ independent of the parameters of the system (in the sense of Theorem \ref{thm:feasibility}), then a random process where queues are decentralized and strategic under certain conditions remains stable.
\end{quote}
Morally, such a result says that though decentralized, strategic queues cannot coordinate and instead compete for service, if they choose servers according to some reasonable learning algorithm, there only needs to be a small multiplicative factor of slack to keep the queuing system stable. In other words, decentralizing a feasible queuing system and introducing a constant factor of slack will result in a stable system even when queues selfishly compete to clear  their own packets.

To see the necessity of timestamps, consider instead a simpler model where there are no timestamps and priorities, and instead each server uniformly randomly picks which packet to process among those that are sent to it in each step. It is easy to see that 
if a queuing system is feasible even if $\bm{\lambda}$ scaled up by $n$, 
then it will remain a stable queuing system with reasonably strategic queues. Indeed, by this feasibility assumption, 
$\mu_1>n\cdot \lambda_1$, 
so that $\mu_1> \sum_{i=1}^n \lambda_i$. Therefore, if every queue just always sends to the largest server whenever they have a packet, they will succeed in clearing a packet with probability at least $1/n>\lambda_i$, and it is not too difficult to prove that this results in a strongly stable process by comparing to a random walk biased towards the origin.

It is natural to ask if a better factor is attainable in this alternate model, perhaps even a constant. It turns out that in general, a polynomial in $n$ is required:
\begin{thm}
\label{thm:impossibility}
In this alternate model, for large enough $n$, there exists a centrally feasible queuing system with $n$ queues and servers with the following property: the system remains feasible even if $\bm{\lambda}$ is scaled up by $\Omega(n^{1/3})$ and it is possible for all queues to be in a Nash equilibrium\footnote{By this we mean that conditioned on the (randomized) strategies of all other queues in a given time step, each queue sends to a server with highest probability of success.} at each time step (and in particular, satisfy no-regret properties as in Assumption \ref{assumption:learning}), yet the system is not strongly stable.
\end{thm}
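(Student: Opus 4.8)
\medskip\noindent\textbf{Proof strategy.} The plan is to exhibit an instance with one fast server, $n-1$ slow servers, one ``large'' queue and $n-1$ ``small'' queues, together with a choice of stage-game Nash equilibria at every round, so that the (selfish) small queues permanently occupy the fast server up to its ``Nash capacity,'' leaving the large queue with strictly too little service. Concretely I would take $m=n$ servers with $\mu_1=1$ and $\mu_2=\dots=\mu_n=\nu:=1/C$ where $C:=n^{1/3}$, and $n$ queues with $\lambda_2=\dots=\lambda_n=\epsilon:=(\log n)/(n-1)$ and $\lambda_1\in(\nu,\ \mu_1/c)$ for $c:=\tfrac12 n^{1/3}$, e.g.\ $\lambda_1=\tfrac32\nu$. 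Note $\mu_1/\nu=C$ and $(n-1)\epsilon=\log n$. Since $\nu=\Theta(n^{-1/3})$ dominates $c\epsilon=\Theta(n^{-2/3}\log n)$, the cut gaps $\sum_{i\le k}\mu_i-c\sum_{i\le k}\lambda_i=(\mu_1-c\lambda_1)+(k-1)(\nu-c\epsilon)$ are increasing in $k$ and already positive at $k=1$ (as $\lambda_1<\mu_1/c$); there are no leading $1$'s, so by Theorem~\ref{thm:feasibility} the instance is centrally feasible and remains so after scaling $\bm\lambda$ up by $c=\Omega(n^{1/3})$.

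For the dynamics I would, at each round, have the active queues play the Nash equilibrium in which the large queue together with as many active small queues as ``fit'' send to server $1$ until it holds $\min(\#\text{active},C)$ packets, while every remaining active small queue is assigned one-per-server to servers $2,\dots,n$ (possible since at most $n$ queues are active). This is a Nash equilibrium: a queue at server $1$, which then holds $x\le C$ packets, clears with probability $\mu_1/x\ge\mu_1/C=\nu$, whereas moving to any server $j\ge2$ gives at most $\nu$; a queue at a slow server clears with probability exactly $\nu$, whereas moving to server $1$ (at capacity) gives $\mu_1/(C{+}1)<\nu$ and moving to another slow server gives at most $\nu$. Since each queue plays an exact stage-game best response each round, its expected payoff that round is at least that of any fixed server against the same (randomized) play of the others, so the dynamics trivially has zero regret and satisfies Assumption~\ref{assumption:learning}.

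It then remains to show the chain is not strongly stable, and I would do this by proving $\mathbb{E}[Q^1_t]\to\infty$, so that already the first moment is unbounded. The key quantity is $N_t$, the number of active small queues: each active small queue clears with probability exactly $\nu$ every round, \emph{except} when it sits at server $1$ while server $1$ is below capacity, which forces $N_t<C$ --- so a lower bound on $N_t$ is self-reinforcing. Writing the change in $N_t$ as (activations) minus (deactivations) and using $\mathbb{E}[\text{activations}\mid N_t]=(n-1-N_t)\epsilon$ together with $\mathbb{E}[\text{deactivations}\mid N_t]\le 1+\nu N_t$ (at most one clear at server $1$, plus a $\text{Bin}(N_t,\nu)$ bound at the slow servers), one gets a drift that is positive --- indeed $\Omega(\log n)$ --- throughout $N_t\le n^{1/3}\log n$ (the zero of the drift is near $\bar N=\Theta(n^{1/3}\log n)\gg C$), with per-round increments $O(\log n)$ having sub-exponential tails. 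A standard one-dimensional drift/concentration estimate (e.g.\ an exponential supermartingale $e^{-\theta N_t}$ with $\theta=\Theta(1/\log n)$) then gives, uniformly in $t$ past a $\mathrm{poly}(n)$ burn-in, $\mathbb{P}(N_t<C)\le e^{-\Omega(n^{1/3})}$. On $\{N_t\ge C\}$ server $1$ is full, so every packet there --- in particular the large queue's, when active --- clears with probability exactly $\mu_1/C=\nu$; hence $\mathbb{E}[S^1_t]\le\nu+\mu_1\mathbb{P}(N_t<C)\le\nu+e^{-\Omega(n^{1/3})}$, and $\mathbb{E}[Q^1_{t+1}]-\mathbb{E}[Q^1_t]=\lambda_1-\mathbb{E}[S^1_t]\ge(\lambda_1-\nu)/2=\Omega(n^{-1/3})>0$ for all large $t$. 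Summing, $\mathbb{E}[Q_t]\ge\mathbb{E}[Q^1_t]\to\infty$, which already contradicts strong stability.

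The step I expect to be the main obstacle is exactly this uniform-in-$t$ control of $N_t$: the small-queue subsystem is \emph{not} a product chain, since the queues interact both through the shared bottleneck and through the assignment rule, so one must argue it still concentrates tightly around its $\Theta(n^{1/3}\log n)$ mean \emph{at all times}, with the rare event $\{N_t<C\}$ having probability far below the drift gap $\lambda_1-\nu=\Theta(n^{-1/3})$. The cleanest route is the one above: exploit that an active small queue \emph{always} clears with probability at least $\nu$ to obtain a drift inequality in the single variable $N_t$ alone, then invoke a generic drift lemma; balancing the three scales $C$, the mean $\Theta(n^{1/3}\log n)$, and the increment size $\Theta(\log n)$ so that they are polynomially separated is what yields the claimed slack factor $\Omega(n^{1/3})$.
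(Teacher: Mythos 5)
Your construction and equilibrium dynamics are of the same family as the paper's: one fast server, $n-1$ slow servers, one high-arrival queue competing with $n-1$ low-arrival queues for a Nash-saturated fast server. The real divergence is in the choice of small-queue arrival rates, and it matters. You set $\lambda_i=(\log n)/(n-1)$, so only $\Theta(\log n)$ fresh small-queue packets arrive per round; consequently you must argue that the number $N_t$ of \emph{active} small queues builds up and stays at $\Omega(n^{1/3})$ via a drift/concentration estimate, which you correctly identify as the main obstacle and only sketch. The paper instead sets $\lambda_i=1/n^{2/3}$ (with $\mu_1=1/2$ and $\mu_j=c/n^{1/3}$ for a suitable constant $c<1$), so the expected number of \emph{new} small-queue arrivals each round is $\approx n^{1/3}$. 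Then a per-round multiplicative Chernoff bound on those fresh arrivals alone --- which is independent of all past history and requires no Markov-chain or drift machinery --- gives that with probability $1-e^{-\Omega(n^{1/3})}$ at least $n^{1/3}/2-1$ small queues have a packet to send that round, so the scheduler can saturate server $1$, and the Nash check and the telescoping bound on $\mathbb{E}[Q^1_t]$ go through immediately.

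The gap in your version is precisely the step you flagged: the claimed uniform-in-$t$ bound $\Pr(N_t<n^{1/3})\le e^{-\Omega(n^{1/3})}$ via an exponential supermartingale $e^{-\theta N_t}$ is plausible on scaling grounds, but it is not carried out, and there are genuine technical issues to nail down --- $N_t$ is not a Markov chain on its own (its transitions depend on whether $Q^1_t>0$ and hence on the joint state), the increments have $\Theta(\log n)$ spread and the drift is only $\Theta(\log n)$ so the relevant deviation is a factor-$\log n$ event rather than a standard tail, and the ``burn-in'' must be handled so that the telescoping of $\mathbb{E}[Q^1_t]$ still diverges. None of this is insurmountable, but the paper's parameter choice makes the entire backlog argument unnecessary. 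If you simply raise $\lambda_i$ for $i\ge 2$ to $\Theta(n^{-2/3})$ (and shrink the slow-server rates so the Nash inequalities still hold, as in the paper), your proof collapses to the paper's one-line Chernoff argument.
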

While we make little effort to optimize the exponent, this shows that in this model, no sub-polynomial factor is possible in general. The basic reason why this can occur is that low arrival rate queues can saturate the high success rate servers, making it impossible for high arrival rate queues to clear fast enough to offset their higher arrival. In the timestamp model, we will be able to establish constant factor results. The key idea 
is that the priority system, while more difficult to analyze, results in older queues gaining an advantage on young queues causing the young queues to prefer lower quality servers,
so that this situation is impossible. That is, this model implicitly forces fast growing queues to get better service, so long as queues are sufficiently adaptive to take advantage of it.

\section{No-Regret in Queuing Systems.}
Our main result shows that, if the queuing system has enough slack and all queues satisfy an appropriate high-probability no-regret guarantee, then the queuing system is strongly stable. To this end, we make the following feasibility assumption that asserts that a queuing systems with servers scaled down by $1/2$ would remain feasible:

\begin{assumption}[Feasibility]
\label{assumption:feasibility}
There exists $\eta>0$ such that for all $k\in [n]$,
\begin{equation*}
    \frac{1}{2}(1-\eta)\sum_{i=1}^k \mu_i\geq \sum_{i=1}^k \lambda_i.
\end{equation*}
We will usually use $\eta$ to denote the maximum such value that this inequality holds.
\end{assumption}
This assumption stipulates that if the service rates were halved, then the queuing system would still be centrally feasible. The parameter $\eta$ controls the quality of learning required for our results. To establish stability results, we will use the following assumption on no-regret algorithms:
\begin{defn}
Fix some window of length $w$, and for convenience reindex time so that the beginning of this window is at $t=1$. Let $S^{i,j}_t$ be the indicator variable that queue $i$ \emph{would have} succeeded in clearing a packet at server $j$ at time $t$ (had she sent there), and let $\sigma_i(t)$ be the identity of the server that queue $i$ chooses at time $t$. Note that if queue $i$ has no packets at time $t$, then $S^{i,j}_t=0$. Then the \textbf{regret} of queue $i$ on this window, denoted $\text{Reg}_i(w)$, is defined as 
\begin{equation}
    \text{Reg}_i(w)\triangleq \max_{j\in [m]} \sum_{t=1}^w S^{i,j}_t-\sum_{t=1}^{w} S^{i,\sigma_i(t)}_t.
\end{equation}
\end{defn}

That is, $\text{Reg}_i(w)$ of queue $i$ on some fixed window of length $w$ is defined to be the (random) difference between the number of packets queue $i$ cleared on these $w$ periods compared to the backward-looking number of packets she would have cleared had she simply always sent to the best single server, \emph{where the comparison is in hindsight to the best single server on the realized sample path, not to a counterfactural sample path where the queue always chose that server}. Note that all these random variables are with respect to the same sample path; the $S^{i,j}_t$ will depend on all previous randomizations and choices by the queues, as these implicitly yield the priorities of the queues. We make the following assumption on the regret of queuing strategies:

\begin{assumption}[Queues satisfy high-probability no-regret]
\label{assumption:learning}
All queues select servers using a strategy or algorithm satisfying the following \textbf{no-regret} guarantee: given \emph{fixed} $\delta\in (0,1)$ and a fixed window with length $w$, the regret $\text{Reg}_i(w)$ of queue $i$ on this given window of $w$ consecutive time steps satisfies
    $\text{Reg}_i(w)\leq \varphi_{\delta}(w)$
with probability at least $1-\delta$ only over their own randomizations during this window, where $\varphi_{\delta}(w)=o(w)$ is some explicit function. Here, $o(\cdot)$ hides constant factors depending on $\delta$ and $m$, but not $w$. 

Moreover, we require that the choices of the queue depend only on their past bandit feedback and their past history of ages, but not on their history of queue sizes.
\end{assumption}

For instance, this assumption holds with EXP3.P.1 with the form of the regret scaling like $\sqrt{w\ln(m w/\delta)}=o(w)$ \cite{auer2002nonstochastic}. Note that this high-probability guarantee is possible in our setting even in the priority model where the random variables of success at each server from the perspective of each queue at each time step depend on all previous actions (via timesteps and priorities), as well as the actions of the other queues in the current time period; see for instance the discussion in Section 9 of Auer, et al \cite{auer2002nonstochastic}. This property is standard and necessary in applying learning algorithms to multi-player games. Using the freezing technique of \cite{LykourisST_COLT18} for EXP3.P.1, one can ensure that such a guarantee holds simultaneously for each window of this length, and not only a fixed window, so the players would not have to be aware which window of size $w$ is relevant for our analysis. This is true as freezing guarantees that the probabilities associated with all arms remain high enough throughout the algorithm, which allows us to adapt the classical no-regret analysis starting at any time step for the window of the next $w$ time steps.

\headder{Dual Model: via Deferred Decision.}
To prove our main learning result, we will use the principle of  deferred-decisions to give an alternate description of the standard system given in Section \ref{section:standardsystem}. Suppose that in the standard model, each queue chooses which server to send to at time $t$ only depending on past feedback and their current oldest timestamp, but \emph{not on $Q^i_t$}. In this case, we  can  equivalently characterize the evolution of this system  keeping only the oldest timestamp of a packet at each queue. To do this, instead of randomly generating new packets at each time step according to a Bernoulli process, each queue only maintains the timestamp of their current oldest unprocessed packet. Once this packet is successfully cleared, the new current oldest unprocessed packet has timestamp generated by sampling a geometric random variable with parameter $\lambda_i$ and adding this to the timestamp of the just-completed packet. If this number exceeds to current time step $t$, this corresponds to having processed all packets that arrived before the current time step, and receiving the next packet in the future. 

We will call this random process the \textbf{dual process}. Because the gap between successes in repeated independent $\text{Bern}(\lambda_i)$ trials is given by a $\text{Geom}(\lambda_i)$ random variable, the standard and dual processes can be completely coupled, as described below. Concretely, when the queues use strategies with the above property, the dual process can be described using the following notation:
\begin{enumerate}
    \item Time progresses in discrete steps $t=0,1,\ldots$.
    \item At each time $t$, $T^{i'}_t$ is the timestamp of oldest unprocessed packet of queue $i$ at time $t$.
    \item $T^i_t=\max\{0,t-T^{i'}_t\}$ is the age of the current oldest packet of queue $i$ in relation to the current time step $t$. That is, $T^i_t$ measures how old the current oldest unprocessed packet for queue $i$ is. We say $T^i_t$ is the \textbf{age} of queue $i$ at time $t$.\footnote{Note that while $T^i_t\geq 0$ by definition, it is possible that $T_t^{i'}>t$. The interpretation of this is that the queue has cleared all of her packets at time $t$ and will receive her next one at time $t=T_t^{i'}$, or equivalently, in $T_t^{i'}-t$ steps in the future from the perspective at time $t$.}
    \item Queue $i$ can send a packet in this time step if $t-T^{i'}_t\geq 0$. If queue $i$ successfully clears her packet, set $T^{i'}_{t+1}= T^{i'}_t+X^i$, where $X^i\sim \text{Geom}(\lambda_i)$ is independent of all past events, and otherwise does not change.
\end{enumerate}

In general, we will write  $\mathbf{T}_t=(T^1_t,\ldots,T^n_t)\in \mathbb{N}^n$ for the vector of current ages of oldest packets. To see the equivalence, consider any standard queuing system with Bernoulli random variables $\{B^i_t\}_{i\in [n],t\geq 0}$ for packet generation. Then, to get a coupled dual system for the same system, use a sequence $\{G^i_j\}_{i\in [n],j\geq 0}$ with the interpretation that $G^i_j\sim \text{Geom}(\lambda_i)$ is the size of the $j$th gap between successes in the $B^i_t$. When queue $i$ clears her $j$th packet, her new oldest timestamp increases by $X^i=G^i_j$ as described above. As such gaps between timestamps in the standard model have $\text{Geom}(\lambda_i)$ distributions, the dual system gives the ages of each queue in the standard system at all times and gives an explicit coupling. 

The key feature is that, under the assumption that queues choose servers at time $t$ only based on at most the $T^i_t$, not on $Q^i_t$, all choices by queues are the same conditioned on just the current timestamp and past feedback as it is conditioned on all the past information in the standard model (which includes arrivals received after the current oldest packet). That is, if $\mathcal{F}_t$ denotes the information available to the queues in the standard model at time $t$, and $\mathcal{G}_t$ for the dual model, then all choices by the queues at time $t+1$ are the same conditioned on either history. The point of doing so is that $G^i_j$ will be independent of $\mathcal{G}_t$ until the queue clears her $j$th packet (namely, the timestamp of queue $i$'s $j+1$th packet is not known until the time queue $i$ clears her $j$th packet).

In the dual system, we define stability in the same way as before:
\begin{defn}
The dual system is said to be \textbf{strongly stable} if, for any fixed $r\geq 0$, $\mathbb{E}[(\sum_{i=1}^n T^i_t)^r]\leq C_r$ where $C_r$ is a fixed constant depending only on $r$, not on $t$.

The dual system is \textbf{almost surely stable} if, for any $c>0$, almost surely
\begin{equation*}
    \sum_{i=1}^n T^i_t = o(t^c).
\end{equation*}
\end{defn}

Because heuristically $Q_t^i\approx \lambda_i T_t^i$, it is intuitive that our notions of strong stability are equivalent whenever both systems correspond to the same random process. Indeed, this is the case. Moreover, strong stability implies almost sure stability. The basic idea is to use Markov's inequality and the Borel-Cantelli lemma along an appropriately chosen subsequence of times. We defer this equivalence and implication to the Appendix:
\begin{lem}
\label{lem:equivalence}
If the standard and dual models characterize the same queuing dynamics, then strong stability in the standard system is equivalent to strong stability in the dual system. 

Moreover, if this holds, then strong stability in either system implies almost sure stability. 
\end{lem}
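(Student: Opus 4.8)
The plan is to exploit the coupling between the two models. Since both processes live on the same sample path, at every time $t$ the queue size $Q^i_t$ is exactly the number of arrivals of queue $i$ with timestamp in the window $[\,t-T^i_t,\ t-1\,]$: all older packets have already been cleared, and since queue $i$ always transmits its oldest packet it clears its packets in timestamp order, so no younger packet of queue $i$ is cleared before this one. The arrival at timestamp $t-T^i_t$ is present by definition of $T^i_t$, and — crucially — conditioned on the oldest unprocessed timestamp being $t-T^i_t$, the remaining arrivals of queue $i$ in that window are still i.i.d.\ $\text{Bern}(\lambda_i)$ and independent of the dynamics, because by Assumption \ref{assumption:learning} the queues' choices never depend on $Q^i_t$ (only on ages and bandit feedback), so those later arrivals cannot have influenced the conditioning event. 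Hence, conditionally on $T^i_t = a\ge 1$, we have $Q^i_t \sim 1 + \text{Bin}(a-1,\lambda_i)$ exactly. This identity is the engine behind both directions; what remains is to turn it into moment bounds.

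For ``dual strongly stable $\Rightarrow$ standard strongly stable'' (the easy direction), the identity gives $\mathbb{E}[(Q^i_t)^r \mid T^i_t] \le \mathbb{E}[(1+\text{Bin}(T^i_t,\lambda_i))^r \mid T^i_t] \le C_r\big(1 + (T^i_t)^r\big)$ using the standard bound $\mathbb{E}[\text{Bin}(N,p)^r] = O_r((Np)^r + 1)$ on binomial raw moments, so $\mathbb{E}[(Q^i_t)^r] \le C_r(1 + \mathbb{E}[(T^i_t)^r])$. Summing over $i$ and using $\sum_i a_i^r \le (\sum_i a_i)^r$ for nonnegative reals, $\mathbb{E}[Q_t^r] \le n^{r}C_r\big(1 + \mathbb{E}[(\sum_i T^i_t)^r]\big)$, which is bounded uniformly in $t$ once the dual system is strongly stable.

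The converse, ``standard strongly stable $\Rightarrow$ dual strongly stable,'' is the main obstacle, and I would prove it by bounding the tail of $T^i_t$ by the tail of $Q^i_t\le Q_t$ up to an exponentially small error. Decomposing over the value $s$ of the oldest unprocessed timestamp (so $\{T^i_t\ge a\}=\{0\le T^{i'}_t\le t-a\}$) and applying the conditional law above together with a Chernoff bound, for each $s\le t-a$ one gets $\mathbb{P}\big(Q^i_t < \tfrac{\lambda_i}{4}a \,\big|\, T^{i'}_t = s\big) \le e^{-c\lambda_i a}$ for a constant $c$; summing over $s$ yields $\mathbb{P}(T^i_t\ge a) \le \mathbb{P}\big(Q_t \ge \tfrac{\lambda_i}{4}a\big) + e^{-c\lambda_i a}$. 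Integrating $r a^{r-1}$ against this tail and substituting $b=\tfrac{\lambda_i}{4}a$ gives $\mathbb{E}[(T^i_t)^r] \le C_{r,\bm{\lambda}}\big(1 + \mathbb{E}[Q_t^r]\big)$, bounded uniformly in $t$ when the standard system is strongly stable; summing over $i$ as before controls $\mathbb{E}[(\sum_i T^i_t)^r]$. The delicate point, again, is the independence claim justifying the conditional law of $Q^i_t$ given the identity of the oldest packet — this is exactly where Assumption \ref{assumption:learning} is needed.

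Finally, for ``strong stability $\Rightarrow$ almost sure stability'': fix $c>0$ and an integer $r>1/c$. By Markov's inequality and strong stability, $\mathbb{P}(Q_t \ge t^{c}) \le \mathbb{E}[Q_t^r]/t^{cr} \le C_r/t^{cr}$, which is summable in $t$, so by Borel--Cantelli almost surely $Q_t < t^{c}$ for all large $t$; applying this with $c$ replaced by $c/2$ yields $Q_t = o(t^c)$ almost surely, and intersecting over $c=1/k$, $k\in\mathbb{N}$, gives it for all $c>0$. The identical argument (or, equivalently, the already-established moment equivalence) handles $\sum_i T^i_t$ in the dual system.
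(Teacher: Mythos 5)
Your proof is correct and covers all the claims, but it diverges from the paper's route in two of its three parts, in both cases trading one technique for another.

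For ``dual $\Rightarrow$ standard,'' you and the paper take the same path: condition on $T^i_t$, identify the conditional law of $Q^i_t$ as a (shifted) binomial, bound its moments, and pass between $Q_t^r$ and $(\sum_i T^i_t)^r$ by power-mean inequalities. You are actually slightly more careful than the paper: you give the exact conditional law $1+\text{Bin}(T^i_t-1,\lambda_i)$ (the paper writes $\text{Bin}(T^i_t,\lambda_i)$, a harmless imprecision for moments) and you spell out why Assumption \ref{assumption:learning} makes the later arrivals independent of the conditioning event, which is the crux and deserves the emphasis you give it.

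For the converse you overshoot: you treat it as the hard direction and run a Chernoff tail bound on the conditional law plus a layer-cake integration. That works, but the paper's Lemma \ref{lem:binbound} gives a two-sided estimate $\mathbb{E}[X^k]\asymp n^k$ for $X\sim\text{Bin}(n,p)$, and the lower half $\mathbb{E}[(Q^i_t)^p\mid T^i_t]\gtrsim (T^i_t)^p$ immediately yields $\mathbb{E}[(T^i_t)^p]\lesssim \mathbb{E}[(Q^i_t)^p]\leq \mathbb{E}[Q_t^p]$, handling this direction symmetrically with the first. Your detour costs a page for no extra generality.

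For ``strong $\Rightarrow$ almost sure,'' your direct Markov-plus-Borel--Cantelli argument (choose integer $r>1/c$ so $\sum_t t^{-rc}<\infty$, then replace $c$ by $c/2$ and intersect over $c=1/k$) is in fact a simplification of the paper's Lemma \ref{lem:stabimplication}, which first proves the bound along the subsequence $k^{1+\epsilon}$ and then interpolates using the a.s.\ bounded increment $X_t\leq X_{t-1}+n$. Since moments of every order are assumed available, the interpolation and the Lipschitz hypothesis are unnecessary, and your route dispenses with both.
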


As these are completely coupled processes, with same stability properties, it is natural to wonder what we have gained from focusing on the ages of queues rather than their overall sizes. We discuss this further in Remark \ref{remark:dual}.

\subsection*{Stability of No-Regret Queuing Systems}

Our main result is the following theorem which asserts that if all queues are no-regret with high-probability over sufficiently large windows, then the queuing system is strongly stable.

\begin{thm}
\label{thm:main}
Suppose that Assumption \ref{assumption:feasibility} holds for the dual queuing system with parameter $\eta$. Set the following parameters:
$\delta = \frac{\eta}{8},
    \epsilon=\frac{\delta\mu_1}{4n},$ and $
    \epsilon_i = \frac{\epsilon}{\lambda_i}$
for $1\leq i\leq n$.
Let $w$ be large enough so that the following holds\footnote{Note that this is possible as $\varphi_{\gamma}(w)=o(w)$ for any fixed $\gamma$, as well as the exponential decay of the bounds in (\ref{eq:geombound}) and (\ref{eq:bernbound}) in $w$.}:
\begin{equation}
\label{eq:wregassumption}
    n\varphi_{\frac{\eta}{128n}}(w)+n\leq \frac{w\delta\mu_1}{4},
\end{equation}
and over $w$ steps of our process the sum of the geometric variables of subsequent packet arrivals, and the sum of the Bernoulli server successes concentrate around their expectation with an error probability of at most $\eta/128$ with the above values of $\delta,\epsilon_i,\bm{\lambda}$, and $\bm{\mu}$. (See the required inequalities at (\ref{eq:geombound}) and (\ref{eq:bernbound}).)

Then, if each queue satisfies Assumption \ref{assumption:learning} on each consecutive time interval of length $w$ with probability at least $1-\frac{\eta}{128n}$, then the random process $\mathbf{T}_t$ under these dynamics is strongly stable.
\end{thm}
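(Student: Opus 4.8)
The plan is to apply the Pemantle--Rosenthal moment theorem (cited in the excerpt) to the dual process, using as a Lyapunov function a suitable weighted norm of the age vector $\mathbf{T}_t$, sampled at intervals of length $w$. Concretely, define $\Phi_t = \|\mathbf{T}_{tw}\|_{\bm{\lambda},1}$ (or its $\ell_2$ analogue; by Lemma~\ref{lem:norms} these are equivalent up to constants depending only on $\bm{\lambda}$, which is harmless), and regard $(\Phi_t)_{t\ge 0}$ as a discrete-time stochastic process with respect to the filtration $(\mathcal{G}_{tw})_{t\ge 0}$. Pemantle--Rosenthal requires two ingredients: (i) a uniform bound on the conditional moments of the one-step increments $|\Phi_{t+1}-\Phi_t|$ (a ``bounded jumps in expectation'' condition, which here follows because in $w$ steps each age changes by at most $O(w)$ plus the size of at most one fresh geometric gap, and geometric variables have all moments), and (ii) a negative-drift condition: there exist constants $D>0$ and $\theta>0$ such that whenever $\Phi_t \ge D$, one has $\mathbb{E}[\Phi_{t+1}-\Phi_t \mid \mathcal{G}_{tw}] \le -\theta$. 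Granting both, the theorem yields $\sup_t \mathbb{E}[\Phi_t^r] \le C_r$ for every $r$, and then filling in the at-most-$w$ intermediate steps (again using that increments over a single window have uniformly bounded moments) upgrades this to the claimed strong stability of $\mathbf{T}_t$ at all times $t$, not just multiples of $w$.

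The heart of the argument is establishing the negative drift (ii), and this is where the game-theoretic hypotheses enter. The idea is: condition on $\mathcal{G}_{tw}$ with $\Phi_{tw}$ large, and run the window of $w$ steps. On the ``good'' event --- that all $n$ queues are no-regret on this window (each failing with probability at most $\eta/(128n)$, so a union bound leaves failure probability at most $\eta/128$) and that the relevant geometric sums and Bernoulli server-success sums concentrate (each failing with probability at most $\eta/128$) --- I want to show the weighted sum of ages drops by $\Omega(w)$. The mechanism is the one flagged in the introduction: because older packets have priority, a queue with a large age always wins its conflicts against younger queues, so the set of ``old'' queues effectively faces a sub-instance of the servers that, by Assumption~\ref{assumption:feasibility}, has a centralized schedule clearing them at rate exceeding $2\sum\lambda_i$ over the relevant prefix. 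The no-regret guarantee then forces each old queue's realized clearing count over the window to be within $\varphi(w)$ of what the best fixed server would have given it; combining this across queues with the prefix-sum (majorization-type) structure of Assumption~\ref{assumption:feasibility} --- exactly as in the centralized feasibility proof sketched after Theorem~\ref{thm:feasibility} --- shows the old queues collectively clear enough packets that $\sum_i \lambda_i T^i$ decreases. The parameter choices $\delta=\eta/8$, $\epsilon=\delta\mu_1/(4n)$, $\epsilon_i=\epsilon/\lambda_i$, and the window condition~\eqref{eq:wregassumption} are calibrated precisely so that the deterministic drift on the good event beats $w\cdot(\text{slack})$ while the $\varphi(w)=o(w)$ regret terms and the additive ``$+n$'' boundary corrections are lower-order. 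On the ``bad'' event (probability $O(\eta)$), the age vector can only grow by $O(w)$ in $\ell_{\bm{\lambda},1}$ in expectation, so choosing the good-event probability close enough to $1$ keeps the overall conditional drift negative by a constant.

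The main obstacle I anticipate is translating the high-probability no-regret guarantee (a statement about each queue's counterfactual success counts $S^{i,j}_t$ along the realized path) into a lower bound on the \emph{actual} number of packets the old queues clear, in a way that interacts correctly with the priority structure and with the concentration of server successes. The subtlety is that the $S^{i,j}_t$ are themselves path-dependent --- they encode who had priority, which depends on all queues' prior choices --- so one cannot naively compare to a clean i.i.d.\ benchmark; one has to argue that conditioned on an old queue $i$ being ``persistently old'' throughout the window, it has uninterrupted priority at whichever server it targets, so its per-step success probability at server $j$ is exactly $\mu_j$, making $\sum_t S^{i,j}_t$ a genuine $\mathrm{Bin}(w,\mu_j)$-type quantity to which the concentration hypothesis applies. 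Managing the bookkeeping when a queue's age crosses the ``old'' threshold partway through the window, and ensuring the deferred-decisions coupling (so that the freshly-sampled geometric gaps $G^i_j$ are independent of $\mathcal{G}_{tw}$) is invoked correctly, is the delicate part; everything else is concentration and the convex-combination-of-permutations argument already used for Theorem~\ref{thm:feasibility}.
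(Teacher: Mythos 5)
Your high-level scaffolding matches the paper: take snapshots of a $\bm\lambda$-weighted age norm every $w$ steps, verify the two Pemantle--Rosenthal hypotheses, treat the ``bad'' event (a regret or concentration failure, union-bounded over queues and ensembles) as a low-probability additive perturbation, and control moments of the increments via the freshly sampled geometric gaps in the dual model. That part is all sound and is essentially the paper's setup.

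The gap is in the heart of the drift argument. You argue that ``conditioned on an old queue $i$ being persistently old throughout the window, it has uninterrupted priority at whichever server it targets, so its per-step success probability at server $j$ is exactly $\mu_j$, making $\sum_t S^{i,j}_t$ a genuine $\mathrm{Bin}(w,\mu_j)$-type quantity.'' This is false when several old queues are simultaneously active: $S^{i,j}_t$ (would queue $i$ have succeeded at $j$ at time $t$ on the realized path) is $0$ whenever a \emph{still older} packet was actually sent to $j$ at time $t$, and the set of older competitors is path-dependent and cannot be wished away. If all $n$ queues are equally old, they cannot each have ``uninterrupted priority'' at server $1$. Consequently your no-regret benchmark $\max_j \sum_t S^{i,j}_t$ is \emph{not} comparable to a clean binomial, and the proposed reduction to the centralized-feasibility/Birkhoff--von Neumann argument does not go through (in fact the paper does not reuse that decomposition at all in this proof; it only reuses the elementary dominance inequality $\sum_i z_i x_i \geq \sum_i z_i y_i$ when $\mathbf x$ weakly dominates $\mathbf y$).

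What is actually needed --- and what the paper supplies, as its main new idea --- is a \emph{layered} potential $\Phi = \sum_{\tau\ge 1} \Phi_\tau$ with $\Phi_\tau = \sum_{i:T^i\ge\tau}\lambda_i(T^i-\tau)$, together with the per-layer inequality: for every age threshold $\tau$ and every server $j$, either a packet that is $\tau$-old is sent to $j$ at step $t$ (so it contributes to $N_\tau$, the count of cleared $\tau$-old packets), or not, in which case a queue that stays $\tau$-old all window \emph{would have} succeeded whenever $S_{j,t}=1$. Writing $X^\tau_{j,t}$ for the indicator of the first case, one gets the two bounds $N_\tau \ge \sum_{j,t} S_{j,t}X^\tau_{j,t}$ and $|J_\tau|^{-1}\min\{m,|J_\tau|\}\,N_\tau \ge \sum_{j\le\min\{m,|J_\tau|\}}\bigl(\sum_t S_{j,t}(1-X^\tau_{j,t}) - \varphi(w)\bigr)$, and the weighted combination kills the unknown $X^\tau_{j,t}$, yielding $N_\tau \ge \tfrac12\sum_{j\le\min\{m,|J_\tau|\}}\sum_t S_{j,t} - n\varphi(w)$. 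Summing this over $\tau$ is precisely what converts ``number of old packets cleared at each threshold'' into the drift of the quadratic potential, after which the dominance lemma and Assumption~\ref{assumption:feasibility} close the gap against the $O(w\sum_i\lambda_i\tau_i)$ aging term. Your proposal identifies where the difficulty lies but does not supply this mechanism, and the fallback you gesture at (treating $\sum_t S^{i,j}_t$ as i.i.d.\ Bernoulli) is exactly the step that fails.
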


The technical tool we use to establish the stability of our system in Theorem \ref{thm:main} is the following result of Pemantle and Rosenthal:
\begin{thm}
[Theorem 1 in \cite{pemantle1999moment}]
\label{thm:pemantle}
Let $X_1,X_2,\ldots$ be a sequence of nonnegative random variables with the property that
\begin{enumerate}
    \item There exists constants $\alpha,\beta>0$ such that if $x_n>\beta$, then 
    \begin{equation*}
        \mathbb{E}[X_{n+1}-X_n\vert \mathcal{F}_n]<-\alpha,
    \end{equation*}
    where the $\sigma$-algebra $\mathcal{F}_n$ includes the history $\sigma(X_0,\ldots,X_n)$ 
    until period $n$ and $X_n=x_n$. 
    
    \item There exists $p>2$ and $\theta>0$ a constant such that for any history, 
    \begin{equation*}
        \mathbb{E}[\vert X_{n+1}-X_n\vert^p\vert \mathcal{F}_n]\leq \theta.
    \end{equation*}
\end{enumerate}
Then, for any $0<r<p-1$, there exists an absolute constant $M=M(\alpha,\beta,\theta,p,r)$ not depending on $n$ such that $\mathbb{E}[X_n^r]\leq M$ for all $n$.
\end{thm}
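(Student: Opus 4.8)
It suffices to fix $r$ with $0<r<p-1$ and produce a constant $M_r$ (depending on $\alpha,\beta,\theta,p,r$, and — unless $X_0$ is assumed bounded — also on $\mathbb{E}[X_0^r]$) with $\mathbb{E}[X_n^r]\le M_r$ for all $n$, which is the stated conclusion. Write $\Delta_n=X_{n+1}-X_n$, so $\mathbb{E}[\Delta_n\mid\mathcal{F}_n]<-\alpha$ on $\{X_n>\beta\}$ and $\mathbb{E}[|\Delta_n|^p\mid\mathcal{F}_n]\le\theta$ always; Lyapunov's inequality then gives $\mathbb{E}[|\Delta_n|^q\mid\mathcal{F}_n]\le\theta^{q/p}$ for all $q\le p$. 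The crux of the plan is a \emph{local} estimate: there are $\beta'=\beta'(\alpha,\beta,\theta,p,r)\ge\beta$ and $c_r>0$ with
\[
\mathbb{E}[X_{n+1}^r\mid\mathcal{F}_n]\;\le\;X_n^r-c_rX_n^{r-1}\qquad\text{on }\{X_n>\beta'\}.
\]
To prove this one conditions on $X_n=x>\beta'$ and splits on $B=\{|\Delta_n|>x/2\}$ (a ``large jump'') and its complement $A$. On $A$, $X_{n+1}\in[x/2,3x/2]$, so a second-order Taylor expansion (the second-order term being nonpositive when $r\le1$ and $O(x^{r-2}\Delta_n^2)$ when $r>1$, since $\xi\asymp x$ there) combined with $\mathbb{E}[\Delta_n\mathbf 1_A]\le-\alpha+\mathbb{E}[|\Delta_n|\mathbf 1_B]$ and $\mathbb{E}[\Delta_n^2]\le\theta^{2/p}$ shows the $A$-contribution is at most $-\alpha r x^{r-1}+O(x^{r-2})+O(x^{r-p})$. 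On $B$ one bounds $X_{n+1}^r-x^r\le(3|\Delta_n|)^r$ and uses $\mathbf 1_B\le(2|\Delta_n|/x)^{p-r}$ (legitimate since $r<p$) to get a $B$-contribution of $O(x^{r-p})$. As $p>1$ and $r<p$, the terms $x^{r-2}$ and $x^{r-p}$ are $o(x^{r-1})$, so enlarging $\beta'$ makes the negative term dominate; the identical computation works for any $r'\in(r,p-1)$ in place of $r$. Consequently, for any stopping time $\tau$ at which $X$ has fallen to $\le\beta'$, the stopped processes $X_{n\wedge\tau}^{\,r}$ and $X_{n\wedge\tau}^{\,r'}$ are nonnegative supermartingales.

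Next I would \emph{globalize} via a last-exit decomposition. Fix $m$ and let $L=\sup\{k\le m: X_k\le\beta'\}$ (if $X_0>\beta'$, the initial stretch up to the first visit to $\{\le\beta'\}$ is handled by the same excursion estimate anchored at $0$, at a cost depending on $X_0$). On $\{L=m\}$, $X_m^r\le(\beta')^r$; on $\{L=k\}$ with $k<m$ the chain makes an excursion strictly above $\beta'$ on $\{k+1,\dots,m\}$ starting from $X_{k+1}\le\beta'+|\Delta_k|$. Conditioning on $\mathcal{F}_{k+1}$ and writing $\tilde\tau_k=\inf\{\ell\ge k+1: X_\ell\le\beta'\}$, two ingredients control $\mathbb{E}[X_m^r\mathbf 1_{L=k}]$: (i) optional stopping for the Step-1 supermartingale of order $r'$ bounds the height of the excursion, $\mathbb{E}[X_{m\wedge\tilde\tau_k}^{\,r'}\mid\mathcal{F}_{k+1}]\le X_{k+1}^{r'}$; and (ii) a martingale tail estimate bounds the probability the excursion \emph{survives} to time $m$ — along the excursion $X_\ell\le X_{k+1}-\alpha(\ell-k-1)+(\text{martingale})$, so survival forces the centered partial sums to exceed $\asymp\alpha(m-k)$, an event whose probability decays in $m-k$ and is controlled through $\mathbb{E}[|\Delta_i|^p\mid\mathcal{F}_i]\le\theta$. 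Interpolating the order-$r$ moment between (i) and (ii) via Hölder, and then taking expectations while using $\mathbb{E}[|\Delta_k|^q\mid\mathcal{F}_k]\le\theta^{q/p}$ to absorb the dependence on $X_{k+1}$, one gets $\mathbb{E}[X_m^r\mathbf 1_{L=k}]\le a_{m-k}$ for a sequence $(a_\ell)$ that is summable with sum independent of $m$; summing over $k$ yields $\mathbb{E}[X_m^r]\le(\beta')^r+\sum_{\ell\ge1}a_\ell=:M_r$.

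The main obstacle is not Step~1 (a bounded computation, and the only place the bare inequalities $r<p$, $p\ge2$ enter), but making the excursion-survival estimate in Step~2 strong enough that $(a_\ell)$ is summable for \emph{every} $r<p-1$ rather than only in a strictly smaller range. A crude maximal inequality (Doob together with Lyapunov on the $p$-th moment) only gives survival probabilities of order $X_{k+1}^{\,r}/\ell$ or $(X_{k+1}/\ell)^{p/2}$, which after interpolation fails to be summable up to the sharp exponent; recovering the full range requires either a single-big-jump (Fuk--Nagaev-type) bound — exploiting that a length-$\ell$ excursion of a negatively-drifting chain with $p$-th-moment increments costs like $\ell^{1-p}$ — or an inductive bootstrap that first establishes the moment bound for small $r$ and feeds it back into Step~2. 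This balance between excursion length, excursion height, and the available moment budget $p$ is exactly why the conclusion degrades at $r=p-1$, and the result is essentially tight, as a reflected random walk with $p$-th-moment steps shows. The remaining points — Fatou/optional-stopping justifications and the elementary identity $\mathbb{E}[X_m^r\mathbf 1_{L<m}]=\sum_{k<m}\mathbb{E}[X_m^r\mathbf 1_{L=k}]$ — are routine.
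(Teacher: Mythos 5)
This statement is not proved in the paper at all: it is quoted verbatim as Theorem 1 of Pemantle and Rosenthal \cite{pemantle1999moment} and used as a black box, so there is no in-paper argument to compare your route against; your proposal has to be judged as a proof of the Pemantle--Rosenthal theorem itself.

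Judged that way, it is an incomplete sketch with a genuine gap at exactly the decisive step, and you say so yourself. Step~1 (the local estimate $\mathbb{E}[X_{n+1}^{r}\mid\mathcal{F}_n]\le X_n^{r}-c_rX_n^{r-1}$ above a level $\beta'$, for any $r<p$) is a routine and correct computation, and the last-exit decomposition in Step~2 is a sensible frame. But the theorem's content is the sharp range $r<p-1$, and the estimates you actually supply do not reach it. With the survival bound you can really prove from the tools you invoke (Doob plus Marcinkiewicz--Zygmund/Burkholder on the compensating martingale, giving $\Pr(\tilde\tau_k>m\mid\mathcal{F}_{k+1})\lesssim \ell^{-p/2}$ with $\ell=m-k$), the H\"older interpolation against the order-$r'$ excursion supermartingale ($r'<p$) gives $a_\ell\lesssim \ell^{-(p/2)(1-r/r')}$, which is summable only for $r<p-2$. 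Even granting the stronger Fuk--Nagaev-type bound $\Pr(\tilde\tau_k>m)\lesssim \ell^{1-p}$ that you gesture at, the same interpolation requires $(p-1)(1-r/r')>1$, i.e.\ $r<r'(p-2)/(p-1)<p-1$ even as $r'\uparrow p$ --- so the ``single big jump'' fix, plugged into your H\"older step, still falls short of the full range; what is needed is a joint estimate on the excursion height \emph{and} duration (or the inductive bootstrap you mention), and neither is carried out. In short, the architecture is plausible, but the one estimate that distinguishes $r<p-1$ from weaker, easily obtainable ranges is precisely the part that is deferred rather than proved, so the proposal does not establish the theorem as stated. (A smaller point: the uniform constant must implicitly absorb the initial condition, as you note; in the paper's application this is harmless since the process starts at $0$.)
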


To apply this theorem, we must define an appropriate potential function of queue ages that satisfies the negative drift and bounded moments condition. We define for $\tau\in \mathbb{N}$ the following potential functions that will feature prominently in the proof:
\begin{gather}
\Phi_{\tau}(\mathbf{T}_t)\triangleq \sum_{i\in [n]:T^i_t\geq \tau} \lambda_i(T^i_t-\tau),\\
    \Phi(\mathbf{T}_t)\triangleq \sum_{\tau=1}^{\infty}\Phi_{\tau}(\mathbf{T}_t)=\sum_{\tau=1}^{\infty}\sum_{i\in [n]:T^i_t\geq \tau} \lambda_i(T^i_t-\tau)=\frac{1}{2}\sum_{i=1}^n \lambda_iT^i_t(T^i_t-1).
\end{gather}

This potential function will be useful for the analysis because it isolates the contribution of clearing old packets at each age simultaneously. We now turn to the proof of Theorem \ref{thm:main}.\\

\begin{proof}
To apply Theorem \ref{thm:pemantle}, we define the stochastic process $Z_0,Z_1,\ldots$ by
\begin{equation*}
    Z_{\ell} = \sqrt{\Phi(\mathbf{T}_{w\cdot \ell})}.
\end{equation*}
That is, $Z_{\ell}$ is the ``snapshot'' of the potential function $\sqrt{\Phi}$ when evaluated on $\mathbf{T}_{w\cdot \ell}$ that occurs every $w$ steps. The filtration is given by $\mathcal{H}_{\ell}=\mathcal{G}_{\ell\cdot w}$, where $\mathcal{G}_t$ is the corresponding information of the dual system at time $t$ available to the queues.

\headder{Summary of the Main Ideas:} 
Before we go though the detailed proof, we offer an outline of the main ideas. To establish the negative drift, we will focus on the $w$-long interval between two $Z_{\ell}$ and $Z_{\ell+1}$. In this $w$-long window, we use the no-regret condition, as well a concentration bounds on behavior of queues and servers. The main idea of the proof is to consider all \jgedit{queues} that have remained old throughout the period. A server either clears many such old packets, or many times during this period no old packet is sent to it. In the second case, we can use the no-regret condition for any queue that still has very old packets, as they would have priority at the server, so these bounds in tandem will imply that many old packets must have cleared. To aid the analysis, we also lower bound the total decrease in ages from clearing packets on this window \emph{before} accounting for the $w$ extra steps of aging, only accounting for this at the end; this allows us to consider the clearing process and aging from time passage separately. Finally, when concentration or the no-regret condition fails, we can trivially upper bound what this
contributes to the expected drift and this
will be subsumed by the low probability that this occurs in the overall expectation.

To establish the bounded moment condition, it is important to consider the dual process, as Theorem \ref{thm:pemantle} also requires that the change 
cannot be too large for any history. See Remark \ref{remark:dual} for more details.

\headder{Organizing Randomness.}
Let us first set up how we model the actual queuing process on each consecutive window of $w$ steps between $Z_{\ell}$ and $Z_{\ell+1}$ for the probabilistic analysis. In the spirit of ``organizing randomness," at step $\ell$ of this process (step $\ell\cdot w$ of the actual queuing process), sample up front an independent geometric ensemble $\{G_{i,j}\}_{i\in [n],j\in [w]}$ with
\begin{equation*}
    G_{i,j}\sim \text{Geom}(\lambda_i), j=1,\ldots,w
\end{equation*}
as well as an independent Bernoulli ensemble $\{S_{i,j}\}_{i\in [m],j\in [w]}$ with 
\begin{equation*}
    S_{i,j}\sim \text{Bern}(\mu_i), i=1,\ldots,w.
\end{equation*}
The interpretation is that the $S_{i,j}$ are random indicators if the $i$th server is able to clear a packet, \emph{regardless of whether a packet is sent there}, at the $j$th step of this block of $w$ steps. The $G_{i,j}$ have the interpretation that, when queue $i$ clears her $j$th packet on this window, her age decreases by $G_{i,j}$ (without accounting for the aging from passage of time). Crucially, as queue $i$ clears packets on this window of $w$ steps, her age decreases \emph{by a sum of a prefix of $G_{i,1},\ldots, G_{i,w}$} (before accounting for aging as time passes). Observe that this independence arises precisely because of the independence of the geometric ensemble of timestamp differences from the filtration $\mathcal{G}_{\ell\cdot w}$ of the dual system that only conditions on past feedback and the realized past sequence of oldest timestamps.

Now, observe that by our choices of parameters $\delta$ and $w$, we have that with probability at least $1-\eta/64$ that all of the following ``good'' events hold on this window:
\begin{equation}
\label{eq:goodgeom}
    \bigg\vert \sum_{j=1}^k G_{i,j}-\frac{k}{\lambda_i}\bigg\vert <\epsilon_i w\quad \forall i\in [n],k\in [w]
\end{equation}
\begin{equation}
\label{eq:goodbern}
    \sum_{i=1}^k\sum_{j=1}^w S_{i,j}>w(1-\delta)\sum_{i=1}^k \mu_i \quad \forall k\in [m]
    \end{equation}
    \begin{equation}
    \label{eq:goodreg}
    n+\sum_{i=1}^n\text{Reg}_i(w)\leq n+n\varphi_{\eta/128}n(w)\leq \frac{w\delta\mu_1}{4}
\end{equation}
This simply follows from our assumption that $w$ was chosen large enough so that the first two lines hold with probability at least $1-\eta/128$ via Corollary \ref{cor:geom3} and Lemma \ref{lem:bern} in the Appendix, the fact that the no-regret bound held with probability at least $1-\eta/128n$ for each queue, and taking a union bound. Notice that (\ref{eq:goodgeom}) asserts that \emph{every prefix of each of the geometric ensembles is additively not too far from the expectation, relative to $w$}.\\

\headder{Threshold Value for $Z_{\ell}$.} We will show that under the threshold assumption that 
\begin{equation}
\label{eqn:threshold}
    Z_{\ell}> \frac{w}{\sqrt{2\lambda_n}}\max\bigg(\frac{8}{\eta}\bigg(\sum_{i=1}^n \lambda_i\bigg),16n^2\bigg),
\end{equation}
then the drift condition holds. We will later use the following simple claim:
\begin{claim}
\label{claim:threshworks}
Under this assumption, both of the following statements hold:
\begin{enumerate}
    \item There exists some $i\in [n]$ such that $\lambda_i T^i_{\ell\cdot w}>16nw$.
    \item $\sum_{i=1}^n \lambda_i T^i_{\ell\cdot w}\geq \frac{8}{\eta}w\sum_{i=1}^n \lambda_i$.
\end{enumerate}
\end{claim}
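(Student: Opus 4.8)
The plan is to derive both statements from a single lower bound on $\sum_{i=1}^n \lambda_i (T^i_{\ell\cdot w})^2$ that is forced by the threshold hypothesis $(\ref{eqn:threshold})$, and then to read off the two conclusions with elementary inequalities. The starting point is the identity $Z_\ell^2 = \Phi(\mathbf{T}_{\ell\cdot w}) = \frac{1}{2}\sum_{i=1}^n \lambda_i T^i_{\ell\cdot w}(T^i_{\ell\cdot w}-1)$, from which, using $T^i(T^i-1)\le (T^i)^2$, one gets $\sum_{i=1}^n \lambda_i (T^i_{\ell\cdot w})^2 \ge 2Z_\ell^2$. Squaring $(\ref{eqn:threshold})$ then gives $\sum_{i=1}^n \lambda_i (T^i_{\ell\cdot w})^2 \ge 2Z_\ell^2 > \frac{w^2}{\lambda_n}\max\!\big((\tfrac{8}{\eta}\sum_i\lambda_i)^2,\,256n^4\big)$, which is the only fact about the state $\mathbf{T}_{\ell\cdot w}$ I will need.

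For part (1), I would argue by contradiction. If $\lambda_i T^i_{\ell\cdot w}\le 16nw$ held for every $i$, then since $\lambda_i\ge\lambda_n$ we would have $\lambda_i (T^i_{\ell\cdot w})^2 = (\lambda_i T^i_{\ell\cdot w})^2/\lambda_i \le (16nw)^2/\lambda_n$, and summing over the $n$ coordinates yields $\sum_i \lambda_i (T^i_{\ell\cdot w})^2 \le 256 n^3 w^2/\lambda_n$. This contradicts the lower bound $256 n^4 w^2/\lambda_n$ obtained above as soon as $n\ge 1$, so some coordinate must satisfy $\lambda_i T^i_{\ell\cdot w} > 16nw$.

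For part (2), I would use the weighted-norm comparison (Lemma \ref{lem:norms}): for a nonnegative vector $\mathbf{x}$, expanding $\big(\sum_i\lambda_i x_i\big)^2$ and discarding the off-diagonal (nonnegative) terms gives $\|\mathbf{x}\|_{\bm{\lambda},1}^2 \ge \sum_i \lambda_i^2 x_i^2 \ge \lambda_n \sum_i \lambda_i x_i^2 = \lambda_n\|\mathbf{x}\|_{\bm{\lambda},2}^2$. Applying this to $\mathbf{x}=\mathbf{T}_{\ell\cdot w}$ and combining with $\|\mathbf{T}_{\ell\cdot w}\|_{\bm{\lambda},2}^2 = \sum_i\lambda_i(T^i_{\ell\cdot w})^2 \ge 2Z_\ell^2$ yields $\sum_i \lambda_i T^i_{\ell\cdot w} = \|\mathbf{T}_{\ell\cdot w}\|_{\bm{\lambda},1} \ge \sqrt{2\lambda_n}\,Z_\ell$; plugging in the bound $Z_\ell > \frac{w}{\sqrt{2\lambda_n}}\cdot\frac{8}{\eta}\sum_i\lambda_i$ from $(\ref{eqn:threshold})$ then gives the claim.

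There is no genuine obstacle here: every step is a one-line estimate. The only things to watch are bookkeeping — the factor $\frac{w}{\sqrt{2\lambda_n}}$ in the threshold is precisely calibrated so that the $\frac{8}{\eta}\sum_i\lambda_i$ term inside the maximum produces part (2) and the $16n^2$ term produces part (1), and the $n^3$-versus-$n^4$ comparison in part (1) only requires $n\ge 1$ rather than $n$ large.
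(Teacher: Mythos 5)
Your proof is correct and takes essentially the same approach as the paper: unwind $Z_\ell$ through the definition of $\Phi$ and the norm comparison of Lemma \ref{lem:norms}, then read off the two conclusions from the $\max(\cdot,\cdot)$ in the threshold. The only cosmetic difference is in part (1): the paper first derives the $\ell_1$-type bound $\sum_i \lambda_i T^i_{\ell\cdot w} \ge 16n^2 w$ (as you also do in part (2)) and then averages over the $n$ coordinates, whereas you run a slightly longer contradiction argument directly on $\sum_i \lambda_i (T^i_{\ell\cdot w})^2$ --- both are one-line variants of the same idea.
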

\begin{proof}
(\ref{eqn:threshold}) immediately implies by definition of $Z_{\ell}$ and $\Phi$ that
\begin{equation*}
    \sqrt{\sum_{i=1}^n \lambda_i (T^i_{\ell\cdot w}})^2\geq \frac{w}{\sqrt{\lambda_n}}\max\bigg(\frac{8}{\eta}\bigg(\sum_{i=1}^n \lambda_i\bigg),16n^2\bigg).
\end{equation*}
From Lemma \ref{lem:norms}, this implies that
\begin{equation*}
    \sum_{i=1}^n \lambda_i T^i_{\ell\cdot w}\geq w\max\bigg(\frac{8}{\eta}\bigg(\sum_{i=1}^n \lambda_i\bigg),16n^2\bigg),
\end{equation*}
from which both parts follow, the first from averaging.
\end{proof}

\headder{Using no-regret to bound the number of old packets cleared.} Continuing with the proof, we first analyze what happens on the ``good'' event of (\ref{eq:goodgeom},\ref{eq:goodbern},\ref{eq:goodreg}). Let $\tau_i$ be the age of the oldest unprocessed packet of queue $i$ at the end of this window of $w$ steps, \emph{measured with respect to the beginning of the window without accounting for the $w$ steps of aging}. If queue $i$ cleared all her packets that were received before the beginning of this window, then we say $\tau_i=0$. Let $J_{\tau}$ be the set of queues that at the end of the $w$ steps still have packets that are at least $\tau$-old with respect to the beginning of the window. Let $X^{\tau}_{i,j}$ be an indicator variable that some packet that was at least $\tau$-old with respect to the beginning of the considered interval was sent to server $i$ at the $j$th step in this window. As such queues in $J_{\tau}$ evidently have packets that are at least $\tau$-old throughout this interval, priority and the regret bound (\ref{eq:goodreg}) implies that the number of packets cleared by any such queue is at least, for any server $i\in [m]$
\begin{equation}
\label{eqn:learning}
    \sum_{
    j=1}^w S_{i,j}(1-X^{\tau}_{i,j})-\varphi_{\eta/128n}(w).
\end{equation}
This is simply because a queue that is always at least $\tau$-old throughout the interval would succeed on any server $i$ that is successful on a time step (as indicated by $S_{i,j}$) where no $\tau$-old packets were sent there.

Let $N_{\tau}$ be the number of packets that were at least $\tau$-old with respect to the beginning of the interval that were cleared in the interval and $N^i_{\tau}$ the number of such packets cleared by queue $i$. Then we clearly have
\begin{equation}
\label{eqn:learning2}
    N_{\tau}=\sum_{i\in [m]}\sum_{j\in [w]} S_{i,j}X^{\tau}_{i,j}\geq \sum_{i=1}^{\min\{m,\vert J_{\tau}\vert\}} \sum_{j\in [w]} S_{i,j}X^{\tau}_{i,j}.
\end{equation}
As every packet processed by queues in $J_{\tau}$ contribute to $N_{\tau}$, by instantiating (\ref{eqn:learning}) for each queue in $J_{\tau}$ with each of the top $\min\{m,\vert J_{\tau}\vert\}$ servers and summing, we also obtain
\begin{equation}
    \min\{m,\vert J_{\tau}\vert\}\cdot N_{\tau}\geq \min\{m,\vert J_{\tau}\vert\}\sum_{i\in J_{\tau}} N^i_{\tau}\geq \vert J_{\tau}\vert \sum_{i=1}^{\min\{m,\vert J_{\tau}\vert\}}\bigg(\sum_{j=1}^w S_{i,j}(1-X_{i,j}^{\tau})-\varphi_{\eta/128n}(w)\bigg).
\end{equation}

Multiplying (\ref{eqn:learning2}) by $\vert J_{\tau}\vert$ and summing with the previous equation, we obtain
\begin{align}
    N_{\tau}&\geq\bigg(\frac{\vert J_{\tau}\vert}{\vert J_{\tau}\vert+\min\{m,\vert J_{\tau}\vert\}}\bigg)\sum_{i=1}^{\min\{m,\vert J_{\tau}\vert\}}\bigg(\sum_{j=1}^w S_{i,j}-\varphi_{\eta/128n}(w)\bigg)\\
    &\geq \frac{1}{2}\sum_{i=1}^{\min\{m,\vert J_{\tau}\vert\}}\sum_{j\in [w]} S_{i,j}-n\varphi_{\eta/128n}(w)\\
    \label{eq:packetlb}
    &\geq \frac{1}{2}w(1-\delta)\sum_{i=1}^{\min\{m,\vert J_{\tau}\vert\}}\mu_i-n\varphi_{\eta/128n}(w),
\end{align}
where the last inequality uses (\ref{eq:goodbern}).\\

\headder{Bounding the expected drift in $\Phi$ assuming the ``good event".}
Observe that from the construction of $\Phi_{\tau}$, when queue $i$ manages to process a packet that is at least $\tau$-old, $\Phi_{\tau}$ decreases either by $\lambda_i G_{i,j}$ for some $j$ if the new age remains above $\tau$, or the term vanishes in which case $\Phi_{\tau}$ may decrease by less. Crucially, this latter possibility can only happen at most once. Again, write $N^i_{\tau}$ for the number of packets that queue $i$ clears during this interval that are at least $\tau$-old. Then as $\sum_{i=1}^n N^i_{\tau}=N_{\tau},$
the decrease in $\Phi$ from $\Phi_{\tau}$, denoted $\Delta_{\tau}$, is
\begin{align*}
    \Delta_{\tau}&\geq \sum_{i=1}^n \lambda_i\bigg(\sum_{j=1}^{N_{\tau}^i-1} G_{i,j}\bigg)\\
    &\geq \sum_{i=1}^n \lambda_i\bigg(\frac{N_{\tau}^i-1}{\lambda_i}-\epsilon_i w\bigg) && \text{(by (\ref{eq:goodgeom}))}\\
    &=\sum_{i=1}^n (N_{\tau}^i-1-\lambda_i\epsilon_i w)\\
    &= N_{\tau}-n-n\epsilon w &&\text{(by definition of $N_{\tau}$ and $\epsilon_i$)}\\
    &\geq \frac{1}{2}w(1-\delta)\sum_{i=1}^{\min\{m,\vert J_{\tau}\vert\}}\mu_i-n\varphi_{\eta/128n}(w)-n-n\epsilon w &&\text{(by (\ref{eq:packetlb})})\\
    &\geq\frac{1}{2}w(1-\delta)\sum_{i=1}^{\min\{m,\vert J_{\tau}\vert\}}\mu_i-\frac{\delta w\mu_1}{2} &&\text{(by (\ref{eq:wregassumption}) and definition of $\epsilon$)}\\
    &\geq \frac{1}{2}w(1-2\delta)\sum_{i=1}^{\min\{m,\vert J_{\tau}\vert\}}\mu_i.
\end{align*}
Summing over all $\tau$, the decrease in $\Phi$ before considering aging is at least
\begin{align*}
    \Delta\Phi&\triangleq\sum_{\tau=1}^{\infty} \Delta_{\tau}\\
    &\geq \frac{1}{2}w(1-2\delta)\sum_{\tau=1}^{\infty}\sum_{i=1}^{\min\{m,\vert J_{\tau}\vert\}} \mu_i\\
    &=\frac{1}{2}w(1-2\delta)\sum_{i=1}^{\min\{m,n\}} \tau^{(i)}\mu_i
\end{align*}
where $\tau^{(i)}$ is the $i$'th largest of the $\tau_j$.\\

\headder{Effect of Aging.}
We now account for the increase due to aging by $w$ over the course of this interval. The increase in $\Phi$ from this is upper bounded by
\begin{equation*}
    \frac{1}{2}\sum_{i=1}^n \lambda_i(\tau_i+w)(\tau_i+w-1)-\frac{1}{2}\sum_{i=1}^n \lambda_i\tau_i(\tau_i-1)=w\sum_{i=1}^n \lambda_i \tau_i+\frac{1}{2}\sum_{i=1}^n \lambda_iw(w-1)\leq w\sum_{i=1}^n \lambda_i \tau_i+\frac{1}{2}\sum_{i=1}^n \lambda_iw^2.
\end{equation*}
Note that this is only exact for those $\tau_i$ that are nonzero, while is an upper bound for those that are zero. Combining these potential changes, we see that the potential decrease is at least
\begin{equation*}
    \frac{1}{2}w(1-2\delta)\sum_{i=1}^{\min\{m,n\}} \tau^{(i)}\mu_i-\bigg(w\sum_{i=1}^n \lambda_i \tau_i+\frac{1}{2}\sum_{i=1}^n \lambda_iw^2\bigg)\geq \frac{\eta}{2}w\sum_{i=1}^n \tau_i\lambda_i-\frac{1}{2}\sum_{i=1}^n \lambda_iw^2,
\end{equation*}
as $2\delta=\eta/4$ and using Assumption \ref{assumption:feasibility} with the fact that the product of two nonnegative sequences is maximal when both are in the same sorted order (see Lemma \ref{lem:dominancelem}).\\

\headder{Relating $\tau_i$ and $T^i_{\ell\cdot w}$.}
We now need the following claim that relates the $\tau_i$ and $T^i_{\ell\cdot w}$:
\begin{claim}
\label{claim:claim2}
If there exists an $i\in [n]$ such that $\lambda_i T^i_{\ell\cdot w}>16nw$ under the good event assumptions, then 
\begin{equation*}
    \sum_{i=1}^n \lambda_i\tau_i\geq \frac{1}{2}\sum_{i=1}^n \lambda_i T^i_{\ell\cdot w}.
\end{equation*}
\end{claim}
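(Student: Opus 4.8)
The plan is to compare $\lambda_i\tau_i$ with $\lambda_i T^i_{\ell\cdot w}$ \emph{queue by queue}, show each can only drop by an additive $O(w)$, and then absorb the total additive loss $O(nw)$ into a small fraction of $\sum_i\lambda_i T^i_{\ell\cdot w}$ using the hypothesis that some queue is very old. First I would pin down the exact relation between $\tau_i$ and $T^i_{\ell\cdot w}$: if queue $i$ clears $N^i\le w$ packets during the window, then by the geometric coupling her oldest timestamp advances by exactly the \emph{prefix} sum $D_i\triangleq\sum_{j=1}^{N^i}G_{i,j}$, so that $\tau_i=\max\{0,\,T^i_{\ell\cdot w}-D_i\}$, the $\max$ with $0$ encoding the case that every packet that arrived before the window has been cleared. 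Invoking the good event (\ref{eq:goodgeom}) with $k=N^i\le w$ together with $\lambda_i\epsilon_i=\epsilon$ gives $\lambda_i D_i< N^i+\epsilon w\le w(1+\epsilon)<2w$, since $\epsilon=\delta\mu_1/(4n)<1$.

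From this the per-queue estimate $\lambda_i\tau_i\ge\lambda_i T^i_{\ell\cdot w}-2w$ follows in both cases: when $\tau_i>0$ we have $\lambda_i\tau_i=\lambda_i T^i_{\ell\cdot w}-\lambda_i D_i>\lambda_i T^i_{\ell\cdot w}-2w$, and when $\tau_i=0$ the bound $\lambda_i T^i_{\ell\cdot w}\le\lambda_i D_i<2w$ makes it trivial. Summing over $i\in[n]$ yields $\sum_{i=1}^n\lambda_i\tau_i\ge\sum_{i=1}^n\lambda_i T^i_{\ell\cdot w}-2nw$. Now the hypothesis supplies an index with $\lambda_i T^i_{\ell\cdot w}>16nw$, and since all terms are nonnegative this forces $\sum_{i=1}^n\lambda_i T^i_{\ell\cdot w}>16nw$, i.e. $2nw<\tfrac18\sum_{i=1}^n\lambda_i T^i_{\ell\cdot w}$. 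Plugging this in gives $\sum_{i=1}^n\lambda_i\tau_i>\tfrac78\sum_{i=1}^n\lambda_i T^i_{\ell\cdot w}\ge\tfrac12\sum_{i=1}^n\lambda_i T^i_{\ell\cdot w}$, which is the claim (indeed with room to spare).

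I do not expect a genuine obstacle here; the only points requiring care are (i) using that the age decrease is a \emph{prefix} sum of the $G_{i,j}$ rather than an arbitrary subset sum, which is exactly what makes the concentration bound (\ref{eq:goodgeom}) applicable, and (ii) treating the boundary case $\tau_i=0$ (queue has cleared all of its old packets) uniformly with the generic case. The ``very old queue'' hypothesis enters only at the very end, solely to dominate the additive error $2nw$ by a small fraction of $\sum_i\lambda_i T^i_{\ell\cdot w}$; without it one merely gets the additive form $\sum_i\lambda_i\tau_i\ge\sum_i\lambda_i T^i_{\ell\cdot w}-2nw$.
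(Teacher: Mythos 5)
Your proof is correct, and it rests on the same per-queue estimate the paper uses: apply the prefix-sum concentration in (\ref{eq:goodgeom}) to show $\lambda_i\tau_i\geq\lambda_i T^i_{\ell\cdot w}-2w$ (with the $\tau_i=0$ boundary case handled because then $\lambda_i T^i_{\ell\cdot w}\leq\lambda_i D_i<2w$), then absorb the additive error using the hypothesis that some $\lambda_iT^i_{\ell\cdot w}>16nw$. The only real difference is bookkeeping: the paper partitions queues by whether $\lambda_iT^i_{\ell\cdot w}\geq 8w$, establishes $\lambda_i\tau_i\geq\frac34\lambda_iT^i_{\ell\cdot w}$ only for the large ones, drops the small ones to zero, and then uses the single very-old queue to cover the $<4nw$ mass lost from the small ones; you instead sum the uniform bound over all $n$ queues at once to get $\sum_i\lambda_i\tau_i\geq\sum_i\lambda_iT^i_{\ell\cdot w}-2nw$ and finish with the observation that $2nw<\frac18\sum_i\lambda_iT^i_{\ell\cdot w}$. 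Your route is a bit slicker, avoids the case split, and incidentally yields the stronger constant $\frac78$ in place of $\frac12$; both rely on the same two ingredients, namely (\ref{eq:goodgeom}) together with the max-age hypothesis. One small thing worth making explicit is that $\epsilon<1$, which holds since $\epsilon=\delta\mu_1/(4n)\leq\eta/(32n)<1$.
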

\begin{proof}
By (\ref{eq:goodgeom}), we must have
\begin{equation*}
    \lambda_i\tau_i\geq \lambda_i T^i_{\ell\cdot w}-w-\lambda_i\epsilon_i w,
\end{equation*}
even if queue $i$ clears a packet every step in the window.
Observe that if $\lambda_i T^i_{\ell\cdot w}\geq 8w$, then 
\begin{equation*}
    w+\lambda_i\epsilon_i w\leq 2w\leq \frac{1}{4}\lambda_i T^i_{\ell\cdot w},
\end{equation*}
and so 
\begin{equation*}
    \lambda_i\tau_i\geq \frac{3}{4}\lambda_i T^i_{\ell\cdot w}.
\end{equation*}
We also have
\begin{equation*}
    \frac{1}{2}\sum_{i:\lambda_iT^i_{\ell\cdot w}< 8w} \lambda_iT^i_{\ell\cdot w}<4nw.
\end{equation*}
In particular, if there exists some $i$ such that $\lambda_iT^i_{\ell\cdot w}>16nw$, then 
\begin{equation*}
    \lambda_i\tau_i\geq \frac{3}{4}\lambda_iT^i_{\ell\cdot w}> \frac{1}{2}\lambda_iT^i_{\ell\cdot w}+4nw\geq \frac{1}{2}\lambda_iT^i_{\ell\cdot w}+\frac{1}{2}\sum_{i:\lambda_it_i< 8w} \lambda_iT^i_{\ell\cdot w}.
\end{equation*}
It follows that if this holds, then
\begin{equation*}
    \sum_{i=1}^n \lambda_i\tau_i=\sum_{i:\lambda_iT^i_{\ell\cdot w}\geq 8w} \lambda_i\tau_i+\sum_{i:\lambda_iT^i_{\ell\cdot w}< 8w} \lambda_i\tau_i\geq \frac{1}{2}\sum_{i=1}^n \lambda_iT^i_{\ell\cdot w},
\end{equation*}
as claimed.
\end{proof}

By Claim \ref{claim:threshworks}, the precondition of Claim \ref{claim:claim2} holds for our threshold value, so the decrease in $\Phi$ on this good event is at least
\begin{equation*}
    \frac{\eta}{4}w\sum_{i=1}^n \lambda_iT^i_{\ell\cdot w}-\frac{1}{2}\sum_{i=1}^n \lambda_iw^2\geq \frac{\eta}{8}w\sum_{i=1}^n \lambda_iT^i_{\ell\cdot w},
\end{equation*}
where the inequality is also Claim \ref{claim:threshworks}. Translating this into the  decrease in $\sqrt{\Phi}$, Fact \ref{fact:root2} implies that the contribution towards the expected decrease on this event, which occurs with probability at least $1-\eta/64\geq 1/2$, is at least
\begin{equation}
\label{eqn:decrease}
    \bigg(\frac{1}{2}\bigg)\frac{\frac{\eta}{8}w\sum_{i=1}^n \lambda_iT^i_{\ell\cdot w}}{2\cdot \sqrt{\frac{1}{2}\sum_{i=1}^n \lambda_iT^i_{\ell\cdot w}(T^i_{\ell\cdot w}-1)}}= \frac{\eta w\sum_{i=1}^n \lambda_i T^i_{\ell\cdot w}}{32\cdot\sqrt{\frac{1}{2}\sum_{i=1}^n \lambda_iT^i_{\ell\cdot w}(T^i_{\ell\cdot w}-1)}}
\end{equation}\\

\headder{Considering ``bad" events.} 
We now analyze the bad event where any of these assumptions fails: the worst case is that all queues clear no packets, and so each $T^i_{\ell\cdot w}$ increases by $w$ on the next $w$ steps. The increase in $\Phi$ is thus at most
\begin{equation*}
    \frac{1}{2}\sum_{i=1}^n \lambda_i(T^i_{\ell\cdot w}+w)(T^i_{\ell\cdot w}+w-1)-\frac{1}{2}\sum_{i=1}^n \lambda_iT^i_{\ell\cdot w}(T^i_{\ell\cdot w}-1)\leq w\sum_{i=1}^n \lambda_iT^i_{\ell\cdot w}+\frac{1}{2}\sum_{i=1}^n \lambda_iw^2\leq 2w\sum_{i=1}^n \lambda_i T^i_{\ell\cdot w},
\end{equation*}
where the last inequality is again Claim \ref{claim:threshworks}.
Translating to squareroots again, on this bad event which occurs with probability at most $\eta/64$, the contribution of increase to the expected change in $\sqrt{\Phi}$ is at most
\begin{equation}
\label{eqn:increase}
    \bigg(\frac{\eta}{64}\bigg)\frac{2w\sum_{i=1}^n \lambda_i T^i_{\ell\cdot w}}{2\cdot \sqrt{\frac{1}{2}\sum_{i=1}^n \lambda_iT^i_{\ell\cdot w}(T^i_{\ell\cdot w}-1)}}=\frac{\eta w\sum_{i=1}^n \lambda_i T^i_{\ell\cdot w}}{64\cdot \sqrt{\frac{1}{2}\sum_{i=1}^n \lambda_iT^i_{\ell\cdot w}(T^i_{\ell\cdot w}-1)}},
\end{equation}
by Fact \ref{fact:root1}. Summing (\ref{eqn:decrease}) and (\ref{eqn:increase}), it follows $\sqrt{\Phi}$ decreases in expectation by at least
\begin{equation*}
    \frac{\eta w\sum_{i=1}^n \lambda_i T^i_{\ell\cdot w}}{64\cdot \sqrt{\frac{1}{2}\sum_{i=1}^n \lambda_i T^i_{\ell\cdot w}(T^i_{\ell\cdot w}-1)}}\geq \frac{\eta w\sqrt{\lambda_n}}{64},
\end{equation*}
where the last inequality is Lemma \ref{lem:norms}. This proves that the drift condition holds for this stochastic process with the threshold given above.\\

\textbf{Bounded $p$th Moments:} The last thing to check to apply Theorem \ref{thm:pemantle} is show that the increments $Z_{\ell+1}-Z_{\ell}$ have conditionally bounded $p$th moments for each even integer $p\geq 2$ to obtain boundedness of our sequence in $L^r$ for all $r\geq 0$. But this is relatively straightforward: by the Triangle Inequality, it is easy to see that as random variables, the change in $T^i_{\ell\cdot w}$ is at most
\begin{equation*}
   G_i := \sum_{j=1}^w G_{i,j}.
\end{equation*}
Then the change in $\Phi$ is again at most
\begin{equation*}
    \frac{1}{2}\sum_{i=1}^n \lambda_i(T^i_{\ell\cdot w}+G_i)(T^i_{\ell\cdot w}+G_i-1)-\frac{1}{2}\sum_{i=1}^n \lambda_i(T^i_{\ell\cdot w})(T^i_{\ell\cdot w}-1)\leq \sum_{i=1}^n \lambda_i G_iT^i_{\ell\cdot w}+\frac{1}{2}\sum_{i=1}^n \lambda_iG_i^2,
\end{equation*}
as random variables. We treat two different cases separately:
\begin{enumerate}
    \item Suppose there does not exist $i\in [n]$ such that $\lambda_iT^i_{\ell\cdot w}>1$. Then the change in $\Phi$ is at most
    \begin{equation*}
        \sum_{i=1}^n G_i+\frac{1}{2}\sum_{i=1}^n \lambda_i G_i^2.
    \end{equation*}
    From Fact \ref{fact:root1}, this means the change in $\sqrt{\Phi}$ is upper bounded as random variables by 
    \begin{equation*}
        \sqrt{\sum_{i=1}^n G_i+\frac{1}{2}\sum_{i=1}^n \lambda_i G_i^2}.
    \end{equation*}
    Raising this to the $p=2k$ power, expanding, and taking expectations, this term is at most $C_{p,n,w}/\lambda_n^{2p}$ for some constant $C_{p,n,w}$ depending only on $n,w,$ and $p$ by Lemma \ref{lem:geomom}.
    
    \item Suppose there does exist $i\in n$ such that $\lambda_i T^i_{\ell\cdot w}>1$. We claim this implies that for all $j\in [n]$
    \begin{equation*}
        \frac{\lambda_j T^j_{\ell\cdot w}}{2\sqrt{\frac{1}{2}\sum_{i=1}^n\lambda_i T^i_{\ell\cdot w}(T^i_{\ell\cdot w}-1)}}\leq \sqrt{\lambda_j}
    \end{equation*}
    First, note that for any $i\in [n]$, $T^i_{\ell\cdot w}\geq 2$ implies
    \begin{equation}
    \label{eqn:t}
        \frac{1}{2}\lambda_i T^i_{\ell\cdot w}(T^i_{\ell\cdot w}-1)\geq \frac{1}{4} \lambda_i (T^i_{\ell\cdot w})^2,
    \end{equation}
    as can be confirmed from basic algebra. As $\lambda_i\leq 1/2$ by feasibility (as $\mu_1\leq 1$), our assumption implies $T^i_{\ell\cdot w}>2$, and so
    \begin{equation*}
        2\sqrt{\frac{1}{2}\sum_{i=1}^n \lambda_i T^i_{\ell\cdot w}(T^i_{\ell\cdot w}-1)}>1.
    \end{equation*}
    To prove the claim, we split into more cases: if $T^j_{\ell\cdot w}\leq 1/\sqrt{\lambda_j}$, the claim holds using the last inequality in the denominator. Otherwise, we must have $T^j_{\ell\cdot w}\geq 2$, in which case by (\ref{eqn:t}), 
    \begin{equation*}
        \frac{\lambda_j T^j_{\ell\cdot w}}{2\cdot \sqrt{\frac{1}{2}\sum_{i=1}^n\lambda_i T^i_{\ell\cdot w}(T^i_{\ell\cdot w}-1)}}\leq \frac{\lambda_j T^j_{\ell\cdot w}}{\sqrt{\lambda_j (T^j_{\ell\cdot w})^2}}= \sqrt{\lambda_j}.
    \end{equation*}
    Thus, in this case, we have
    \begin{equation*}
        \frac{\sum_{i=1}^n \lambda_i G_iT^i_{\ell\cdot w}+\frac{1}{2}\sum_{i=1}^n \lambda_iG_i^2}{2\cdot \sqrt{\frac{1}{2}\sum_{i=1}^n\lambda_i T^i_{\ell\cdot w}(T^i_{\ell\cdot w}-1)}}\leq \sum_{i=1}^n \sqrt{\lambda_i}G_i+\frac{1}{2}\sum_{i=1}^n \lambda_i G_i^2.
    \end{equation*}
    By Fact \ref{fact:root1}, this is an upper bound as random variables of the change in $\sqrt{\Phi}$, so taking $p=2k$ powers, expanding, and taking expectations, we get an upper bound of $C_{p,n,w}/\lambda_n^{2p}$ by Lemma \ref{lem:geomom} for some constant $C_{p,n,w}$ depending only on $n,w,p$.
\end{enumerate}

Therefore, Theorem \ref{thm:pemantle} applies to the random process $Z_{\ell}$, and we conclude that for each $r\geq 0$, there exists some absolute constant $C_r$ such that for all $\ell=0,1,\ldots,$
\begin{equation*}
    \mathbb{E}[Z_{\ell}^r]\leq C_r.
\end{equation*}
In particular, this means that for each $\ell\geq 0$,
\begin{equation*}
    \mathbb{E}\bigg[\bigg(\sqrt{\sum_{i=1}^n \lambda_iT^i_{\ell\cdot w}(T^i_{\ell\cdot w}-1)}\bigg)^r\bigg]\leq C.
\end{equation*}
To extend this to all $t\geq 0$ not necessarily of this form, it is clear that deterministically,
\begin{equation*}
\sqrt{\Phi(\mathbf{T}_t)}\asymp \|\mathbf{T}_t\|_{\lambda,2}\asymp \|\mathbf{T}_t\|_{\lambda,1}
\end{equation*}
up to additive and multiplicative constants as in Lemma \ref{lem:norms}, from which we can conclude
\begin{equation*}
    \mathbb{E}\bigg[\bigg(\sum_{i=1}^n \lambda_i T^i_t\bigg)^r\bigg]\leq C'_r
\end{equation*}
for some other constant $C'_r$ for each $t=\ell w$. Note that for
$\ell w\leq t<(\ell+1) w$, each term can increase by at most $w$ compared to the value at $\ell w$, and therefore we can conclude that for \emph{all} $t\geq 0$,
\begin{equation*}
    \mathbb{E}\bigg[\bigg(\sum_{i=1}^n T^i_t\bigg)^r\bigg]\leq C''_r
\end{equation*}
for some constant $C''_r$ independent of $t$. This concludes the proof of strong stability.

\end{proof}

\begin{remark}
\label{remark:dual}
This analysis crucially relies on using the dual system as opposed to the standard system. The reason is that the preconditions in Theorem \ref{thm:pemantle} must hold conditioned on any history, however implausible. In the standard system, this would require us to condition on too much. For instance, it is technically possible for there to be a queue with a very old packet, and yet have received no other packets until the current time step. While unlikely to actually ever happen, this is a perfectly valid potential history. In this case, clearing this packet would lead to arbitrarily large $p$th moment change, as her age would drastically decrease, and therefore the moment condition of Theorem \ref{thm:pemantle} would be violated. While intuitively this should only help the stability of the random process, the conditions in Theorem \ref{thm:pemantle} are surprisingly subtle, see the discussion in \cite{pemantle1999moment}. 

In general, even if that obstruction can be managed suitably, the extra conditional information in the standard system highly complicates the analysis, as then one must reason about the priorities of the packets that have already been received before the present time step, which could in principle be quite arbitrary. We avoid these complications in the dual system as it allows us to only condition on current ages and sample new packets as they come throughout the next $w$ steps, from which we can use concentration to argue that this process is well-behaved enough for our analysis to succeed. 
\end{remark}

We now provide a simple construction showing that a partial converse holds: $\frac{1}{2}$ is the best constant that can appear in Assumption \ref{assumption:feasibility} for a similar no-regret condition to be sufficient for stability as in Theorem \ref{thm:main}.
\begin{thm}
\label{thm:2tight}
Partition time $t=0,1,\ldots$ into consecutive windows, where the $k$th window has length $w_k=k^2$. Then there exists a family of queuing systems with $n$ queues and servers for each $n\geq 1$ satisfying Assumption \ref{assumption:feasibility} with $\frac{1}{2}+o_n(1)$ in place of $\frac{1}{2}$ with the following properties: almost surely, each queue has zero regret on all but at most finitely many of the windows, but the system is not strongly stable.
\end{thm}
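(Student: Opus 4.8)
The plan is to build, for each large $n$, an $n$-queue $n$-server system whose prescribed dynamics are trivial --- \emph{every} queue always sends its oldest packet to server $1$ --- so that the other $n-1$ servers are ``red herrings'': collectively fast enough to make the system centrally feasible (indeed $(\tfrac12+o_n(1))$-feasible), but individually too slow to tempt any queue away from its $\tfrac1n$ share of server $1$. Concretely, take all arrival rates equal to $\lambda_n\le 1/n$, set $\mu_1 = n\lambda_n$, and $\mu_2=\dots=\mu_n=\nu_n := (1-\epsilon_n)\lambda_n$ with $\epsilon_n\downarrow 0$ (any rate works, e.g. $\epsilon_n=1/n$). I would first dispose of the feasibility arithmetic, which is routine: $\sum_{i\le k}\mu_i = n\lambda_n + (k-1)(1-\epsilon_n)\lambda_n > k\lambda_n = \sum_{i\le k}\lambda_i$ for every $k\le n$ (central feasibility), and $\sum_i\lambda_i / \sum_i\mu_i = n/((2n-1)-(n-1)\epsilon_n) = \tfrac12 + o_n(1)$, the ratio being largest (hence binding) at $k=n$, so Assumption \ref{assumption:feasibility} holds with $\tfrac12$ replaced by $\tfrac12+o_n(1)$. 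Note also that the constant strategy trivially satisfies the dual-model-compatibility clause of Assumption \ref{assumption:learning}, so we may reason in the dual system.

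For the failure of strong stability, track $\Sigma_t = \sum_i T^i_t$ in the dual system. Under the strategy only server $1$ ever receives packets, so if $r\le n$ queues are currently active, $\Sigma_{t+1}-\Sigma_t = r - \mathrm{Geom}(\lambda_n)\cdot\mathrm{Bern}(n\lambda_n)$, which has mean $r-n\le 0$ (zero exactly when all $n$ queues are active) and variance $\Theta(n^2)$. Thus $\Sigma_t$ is a mean-zero random walk in the bulk with only a bounded push toward the origin near $0$, so $\mathbb E[\Sigma_t]\to\infty$ (in fact $=\Theta(\sqrt t)$) exactly as for the unbiased walk discussed after Theorem \ref{thm:feasibility}; the dual --- hence, by Lemma \ref{lem:equivalence}, the standard --- system is not strongly stable. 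Intuitively: a central coordinator would use all $n$ servers and clear at total rate $\sum_i\mu_i > \sum_i\lambda_i$, whereas the prescribed dynamics waste servers $2,\dots,n$, leaving effective service rate exactly $\mu_1 = \sum_i\lambda_i$ (critically loaded).

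The substantive part is the zero-regret claim; write $W_k$ for the $k$th window. Since servers $2,\dots,n$ are idle under the strategy, for any queue $i$ and $j\ge 2$ the quantity $\sum_{t\in W_k}S^{i,j}_t$ is a sum of $\approx w_k$ independent $\mathrm{Bern}(\nu_n)$'s, concentrating at $\nu_n w_k = (1-\epsilon_n)\lambda_n w_k$ within $\tilde O(\sqrt{w_k})$. On the other hand $\sum_{t\in W_k}S^{i,1}_t$ equals \emph{exactly} the number of packets queue $i$ clears during $W_k$. The key input is a state-space-collapse estimate: because server $1$ always serves the current maximum, the age-gap vector $(T^i_t-\min_j T^j_t)_i$ stays confined to a ball of radius $O(\mathrm{poly}(n))$ and mixes on a $t$-independent timescale $O(\mathrm{poly}(n))$. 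Since $W_k$ sits at time $\Theta(k^3)$, where the ages are $\Theta(\sqrt t)$ and a single age diffuses by $\Theta(\sqrt{n}\,k)$ across $W_k$ --- far exceeding the spread once $k\gg\mathrm{poly}(n)$ --- the identity of the unique oldest queue churns many times within $W_k$, so by exchangeability of the $n$ queues each holds it for a $\tfrac1n\pm\tilde O(\mathrm{poly}(n)/k)$ fraction of $W_k$, off an event of exponentially small probability. Hence queue $i$ clears $\lambda_n w_k(1-o(1))$ packets on $W_k$, which for all $k$ past a threshold depending on $n$ strictly exceeds $\nu_n w_k$, and therefore also $\max_{j\ge2}\sum_{t\in W_k}S^{i,j}_t$; so $\mathrm{Reg}_i(w_k)=0$ on $W_k$ off an event of probability $e^{-\Omega(k^2/\mathrm{poly}(n))}$. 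Summing over $i$ and $k$ and invoking Borel--Cantelli gives that almost surely every queue has zero regret on all but finitely many windows.

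The main obstacle is precisely the state-space-collapse step: one needs a \emph{quantitative} confinement-and-mixing bound for the age-gap chain under ``serve the current oldest'', strong enough that the fraction of $W_k$ each queue spends as the unique oldest equals $\tfrac1n$ up to an error that (eventually in $k$) loses to the $\epsilon_n$-sized margin between $\mu_1/n$ and $\nu_n$. This should follow from a Lyapunov/drift analysis of the reflected multidimensional chain, but it is the one genuinely technical ingredient; the feasibility identities, the reflected-walk lower bound on $\mathbb E[\Sigma_t]$, and the union bound over windows are all routine. (Small $n$, where $\tfrac12+o_n(1)$ is allowed to be bounded away from $\tfrac12$, can be handled separately --- e.g. by the degenerate one-server critically loaded system --- so ``for all $n\ge1$'' is no obstruction.)
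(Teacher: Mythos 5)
Your high-level plan is the right one --- make every queue send to server $1$, keep the side servers just barely too slow to tempt a defection, and use concentration plus Borel--Cantelli to upgrade this to an almost-sure zero-regret statement --- and your feasibility arithmetic and instability intuition are sound. But there is a genuine gap exactly where you flag it, and the paper's construction is chosen precisely to avoid it.

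The issue is that you build server $1$ \emph{critically} loaded, $\mu_1 = n\lambda_n = \sum_i\lambda_i$, so the backlog only grows like $\sqrt{t}$ (a reflected, mean-zero-in-the-bulk process). With only $\Theta(\sqrt t)$ of backlog, you cannot avoid reasoning about which queues are empty and how the identity of the oldest queue churns, which forces the ``state-space collapse / mixing of the age-gap chain'' estimate that you correctly identify as the crux and do not prove. That is not a cosmetic lemma; proving quantitative confinement and mixing for the multidimensional reflected chain under ``serve the oldest'' is a serious undertaking and is not something the surrounding machinery in the paper hands you. The paper instead makes server $1$ \emph{strictly overloaded}: with $\lambda_i = (n{+}1)/n^2$ and $\mu_1 = 1$, $\mu_{\ge 2} = (n{-}1)/n^2$, the total arrival rate $\sum_i\lambda_i = (n{+}1)/n$ exceeds $1$. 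Then (i) instability is immediate by telescoping expectations, and (ii) the backlog grows \emph{linearly}, which lets one entirely bypass the mixing argument: a timestamp-counting argument (by time $(1-\Omega(1))W_k$ at least $W_k$ packets have already arrived, whereas the server can have cleared at most $W_k$ packets by time $W_k$) shows almost surely that every queue is continuously backed up through the whole window $w_k$, and since all queues have the same arrival rate, each contributes a $\tfrac1n \pm o(1)$ fraction of the packets with the lowest remaining timestamps. Hence each queue clears $(\tfrac1n - o(1))w_k$ packets on $w_k$, which strictly beats the $\big(\tfrac{n-1}{n^2}+o(1)\big)w_k$ any single side server would have given, off an event of summable probability. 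This recovers the zero-regret claim via ordinary Chernoff plus Borel--Cantelli, with no ergodicity input.

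Two smaller points. First, the increasing window lengths $w_k = k^2$ are load-bearing in the Borel--Cantelli step; your error bound $e^{-\Omega(k^2/\mathrm{poly}(n))}$ should be stated for the concentration events you actually need (arrival counts up to $W_k$, server-success counts over $w_k$), and you should note that fixed-length windows would make the side-server concentration fail infinitely often a.s.\ by the second Borel--Cantelli lemma. Second, even granting your mixing step, you would also need to justify that the ``fraction of $W_k$ each queue spends as the unique oldest'' converts into a lower bound on packets cleared when queues are only intermittently nonempty; in the overloaded construction this conversion is vacuous because every queue is always nonempty on the relevant windows. I would recommend simply inflating $\bm\lambda$ so that $\sum_i\lambda_i > \mu_1$ (keeping the $\tfrac12 + o_n(1)$ budget) and discarding the state-space-collapse route.
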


The formal details are slightly technical, but the high-level idea is quite natural: for each $n\geq 1$, consider the following system on $n$ queues and $n$ servers where we set $\bm{\lambda}=(\frac{n+1}{n^2},\ldots,\frac{n+1}{n^2})$ and $\bm{\mu}=(1,\frac{n-1}{n^2},\ldots,\frac{n-1}{n^2})$. Consider the strategy where every queue always sends to the rate $1$ server. It is easy to see purely from expectations that the queue lengths are unbounded in expectation, as the sum of arrival rates strictly exceeds $1$. On the other hand, it is intuitive that this strategy will ``usually'' be zero-regret; if all the queues are similarly aged at the start of some window, then they should expect to clear roughly $1/n$ fraction of the time on this window using this strategy, which strictly exceeds what they would get at any other server. We use standard concentration arguments and the Borel-Cantelli lemma to argue that this situation will happen all but finitely many times almost surely, thereby obtaining the claim. We now give the formal argument:\\

\begin{proof}
Define $W_k=\sum_{i=1}^{k-1} w_i$. Note $W_k=\Theta(k^3)=\Theta(w_k^{3/2})$. Note that we are slightly abusing our convention here, as $W_k$ is not a random variable, rather just convenient notation. $W_k$ is the actual time step at the end of $k-1$ of the consecutive windows of length $w_i$ for $i=1,\ldots,k-1$. Note also that $W_{k+1}-W_k=w_k$.

For each $n\geq 1$, consider the following system on $n$ queues and $n$ servers: set $\bm{\lambda}=(\frac{n+1}{n^2},\ldots,\frac{n+1}{n^2})$ and $\bm{\mu}=(1,\frac{n-1}{n^2},\ldots,\frac{n-1}{n^2})$. Note that this system satisfies Assumption \ref{assumption:feasibility} with factor $\frac{1}{2}-o_n(1)$. We will consider the simple strategy where every queue always sends to the rate $1$ server. Note that under these dynamics, in expectation the total number of packets grows by $\frac{1}{n}$ with every step, and therefore this system is not strongly stable. What we must show is that almost surely, this fixed strategy is zero regret for every queue for all but finitely many of the windows.

As this strategy is oblivious, we may study the standard system. First we show almost sure concentration of the arrivals of new packets. Let $\{B^i_t\}_{i\in [n],t\geq 1}$ be the independent random variables for arrivals as usual. Now, for each queue $i\in [n]$ and $\ell\geq 0$, we have
\begin{equation}
    \Pr\bigg(\bigg\vert \sum_{t=1}^{\ell} B^i_t-\lambda_i\ell\bigg\vert\geq \sqrt{\ell \ln(\ell)}\bigg)\leq \frac{2}{\ell^2},
\end{equation}
where we use the additive form of the Chernoff bound. As the same holds for all queues, the probability this event happens for any of the $n$ queues is at most $2n/\ell^2$. As this is summable in $\ell$, we may sum over all $\ell\geq 1$ to deduce from the Borel-Cantelli lemma that almost surely, for all sufficiently large $\ell$, all $i\in [n]$ satisfy
\begin{equation}
\label{eq:queuebc}
   \sum_{t=1}^{\ell} B^i_t=\lambda_i\ell\pm O(\sqrt{\ell \ln(\ell)}).
\end{equation}
 Note that this also implies that almost surely, for all large enough $\ell$, $\sum_{t=1}^\ell\sum_{i=1}^n B^i_t\geq (1+\frac{1}{2n})\cdot \ell$ by the choice of $\lambda_i$. Observe also that under this fixed strategy where everyone always sends to the rate $1$ server, at most $\ell$ packets can be cleared by time $\ell$. 

Next, we show that almost surely, there is a large backup proportional to the current time period. Let $t_k$ be the last timestamp the rate $1$ server clears up to time $W_k$. As all queues send there under this fixed strategy, at this point, all queues only have packets that were received after $t_k$ by priority. On the one hand, it is not difficult to see that deterministically $t_k\geq W_k/n$ (equality happens in the worst case where every queue received a packet in every step up to $W_k$). On the other hand, in light of our results above, almost surely, for all but finitely many of the $k$,
\begin{equation}
\label{eq:timebc}
t_k<\frac{1}{1+\frac{1}{2n}} W_k=(1-\Omega(1))W_k.
\end{equation}
This is because at least $W_k$ packets have been received up to time $\frac{1}{1+\frac{1}{2n}} W_k$, and because the server can only have cleared at most $W_k$ packets up to time $W_k$, the oldest timestamp the server could have cleared by time $W_k$ can be at most this quantity.

Next, we show almost sure concentration of the nontrivial server success rates. Let $S^j_t$ be the indicator that server $j$ \emph{would} succeed at clearing a packet at time $t$ (regardless of if one is sent there; indeed, under this strategy no queue ever sends to $j\neq 1$). A similar application of the Chernoff bound and union bound with the Borel-Cantelli lemma implies that almost surely, for all but finitely many of the $k$, and for each server $j\in [n]$, we have
\begin{equation}
\label{eq:serverbc}
    \sum_{t=W_k+1}^{W_{k+1}} S^j_t=w_{k}\mu_i \pm O(\sqrt{w_k\ln w_k}).
\end{equation}
Note that the increasing nature of the $w_k$ is needed here for this to be valid (and in fact, this statement will be false with probability one if interval sizes are kept fixed by independence and the second Borel-Cantelli lemma).

Thus, almost surely, for all large enough $\ell$ and $k$, all of these events we have described happen simultaneously. As we know $t_k\geq W_k/n$, almost surely for large enough $k$, $t_k$ eventually exceeds the random time $\ell$ at which (\ref{eq:queuebc}) holds. Consider any subsequent window of length $w_k$. Our goal is to combine the above facts and show that on these windows, all queues get zero regret.

First, we show that each queue clears $(\frac{1}{n}-o(1))w_k$ packets on each such window. Let $c=\frac{1}{n\lambda_i}<1$ (note this is independent of $i$). We know by virtue of (\ref{eq:timebc}) that $t_k+w_k<(1-\Omega(1))W_k+w_k<W_{k}$; moreover, by virtue of (\ref{eq:queuebc}), and the fact $t_k\geq \ell$, we have that 
\begin{align*}
    \sum_{t=t_k+1}^{t_k+c\cdot w_k} B^i_t&=\sum_{t=1}^{t_k+c\cdot w_k} B^i_t-\sum_{t=1}^{t_k} B^i_t \\
    &= \frac{1}{n} \cdot w_k \pm O(\sqrt{(t_k+c\cdot w_k)\ln{(t_k+c\cdot w_k)}})\\
    &=\frac{1}{n} \cdot w_k \pm O(\sqrt{W_k\ln{W_k}})\\
    &=\frac{1}{n} \cdot w_k \pm \tilde{O}(w_k^{3/4}),
\end{align*}
where the last line uses the relationship between $W_k$ and $w_k$. As $t_k+w_k<W_k$, all of these packets were evidently received before the start of the given window, and therefore, every queue is backed up throughout the period, and by virtue of the previous equation, each queue has $\frac{1}{n}-o(1)$ fraction of the next $w_k$ packets that will be cleared by this top server on this window. Therefore, each queue clears at least $(\frac{1}{n}-o(1))\cdot w_k$ packets on such windows under this fixed strategy.

Finally, had any queue deviated on such a window to a single fixed low rate server, in light of (\ref{eq:serverbc}), she would have cleared
\begin{equation}
    \bigg(\frac{n-1}{n^2}+o(1)\bigg)\cdot w_k
\end{equation}
packets, which is linearly smaller than the amount she actually cleared. Therefore, almost surely, on all but finitely many of the windows, every queue actually has zero regret.

\end{proof}

\section{Discussion and Open Problems}
In this work, we have shown that high-probability versions of the no-regret property can lead to stability in appropriate queuing systems if there is only a constant factor of slack; however, the model specifications really are crucial, as evidenced by the alternate model.

There are many open questions: in the context of just this work, is the high-probability requirement in Assumption \ref{assumption:learning} necessary? Many no-regret algorithms only satisfy a no-regret bound in expectation, which is \emph{a priori} a weaker requirement; we do not see an immediate way to derive our results using just a regret bound that holds in expectation, as it is difficult to argue about various correlations that may arise. Furthermore, while the proof of Theorem \ref{thm:main} only really relies on the feature of having good enough regret on long enough intervals with high enough probability, to attain these precise parameters for all queues via a natural learning algorithm requires some mild synchronization between queues. For instance, algorithms like EXP3.P require the desired probability bound  and the length of the interval $w$ that they aim to not have regret on as an input to the learning algorithm. 
Is there a way to establish a similar result with more oblivious settings of natural learning algorithms, perhaps using a different analysis of this process? 

One natural direction is to extend the results here to more general queuing networks. For instance, a queue may need to choose a full path in a network, instead of directly send to a server. In these settings, richer feedback structures and action spaces are possible, as queues may receive feedback with a certain delay, or may have different available paths to route packets. Alternately, a packet may need to go through multiple queues before reaching the server or its destination, where each queue is running its own learning algorithm for forward packets.
Moreover, in Krishnasamy, et al \cite{krishnasamy2016learning}, the authors motivate a more relevant measure of \emph{queue-regret} that perhaps better describes the performance of learning in such systems. Is it possible to combine the analysis of this paper in the strategic setting with their more refined learning results for the learning problem in queues?

Beyond this setting, many natural strategic interactions hold this sort of ``carryover'' effect, where the results of previous interactions have a strong effect on the fundamental nature of the current interaction. We hope that some of the techniques and results here in a decently simple queuing model may serve as a preliminary step towards the study of such highly dependent interactions in more complicated settings. These sorts of infinitely repeated games also hold the potential for establishing qualitatively different forms of price of anarchy results. Here, the natural metric was a binary form of stability, which can only be formulated as a long-run phenomenon. Understanding the interplay between games and learning with these types of qualitative different objectives seems like a fruitful avenue for future work in this area.

\section{Acknowledgements}
We would like to thank Christos Papadimitriou for many great discussions, his useful insights and encouragement in pursuing this model in the early stages of this work.

\bibliography{LearningQueues}
\bibliographystyle{ieeetr}

\newpage 
\section{Appendix}
Here we collect various technical results and proofs that are used in the main part of the paper.

\subsection{Basic Inequalities}
\begin{fact}
\label{fact:root1}
Suppose $a,b,c\geq 0$ and further
\begin{equation*}
    a-b\leq c.
\end{equation*}
Then 
\begin{equation*}
    \sqrt{a}-\sqrt{b}\leq \min\bigg\{\frac{c}{2\sqrt{b}},\sqrt{c}\bigg\}.
\end{equation*}
\end{fact}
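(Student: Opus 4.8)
The plan is to verify the two bounds appearing in the minimum separately, since $\sqrt{a}-\sqrt{b}\le \min\{u,v\}$ holds precisely when $\sqrt{a}-\sqrt{b}\le u$ and $\sqrt{a}-\sqrt{b}\le v$.

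For the bound $\sqrt{a}-\sqrt{b}\le \sqrt{c}$, I would use subadditivity of the square root. From $a-b\le c$ we get $a\le b+c$, and since $(\sqrt{b}+\sqrt{c})^2=b+c+2\sqrt{bc}\ge b+c$, taking square roots gives $\sqrt{a}\le\sqrt{b+c}\le\sqrt{b}+\sqrt{c}$. Rearranging yields the claim.

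For the bound $\sqrt{a}-\sqrt{b}\le \tfrac{c}{2\sqrt{b}}$, I would first dispose of the degenerate cases: if $b=0$ the right-hand side is $+\infty$ (conventionally) and there is nothing to prove, and if $a\le b$ the left-hand side is nonpositive so the bound is immediate. Hence assume $a>b>0$, and rewrite the difference of square roots in conjugate form as
\begin{equation*}
\sqrt{a}-\sqrt{b}=\frac{a-b}{\sqrt{a}+\sqrt{b}}.
\end{equation*}
Now bound the numerator using $a-b\le c$ and the denominator using $\sqrt{a}+\sqrt{b}\ge 2\sqrt{b}>0$ (valid because $a>b$ forces $\sqrt{a}\ge\sqrt{b}$); since both estimates only increase the fraction, we get $\sqrt{a}-\sqrt{b}\le \tfrac{c}{2\sqrt{b}}$, completing the argument.

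I do not anticipate any genuine obstacle here; this is an elementary estimate. The only mild points requiring care are handling the degenerate cases $b=0$ and $a\le b$ cleanly, and checking that the monotonicity directions are correct when passing to the conjugate form—both substituting $a-b\le c$ in the numerator and using $\sqrt{a}+\sqrt{b}\ge 2\sqrt{b}$ in the denominator push the quotient upward, which is exactly the direction needed.
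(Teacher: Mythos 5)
Your proposal is correct, and for the bound $\sqrt{a}-\sqrt{b}\le\sqrt{c}$ it is essentially identical to the paper's argument (subadditivity of $\sqrt{\cdot}$, i.e.\ $\sqrt{a}-\sqrt{b}\le\sqrt{a-b}$). For the bound $\sqrt{a}-\sqrt{b}\le \frac{c}{2\sqrt{b}}$ you take a slightly different algebraic route: you factor via the conjugate, $\sqrt{a}-\sqrt{b}=\frac{a-b}{\sqrt{a}+\sqrt{b}}$, and then bound numerator and denominator separately, whereas the paper linearizes using concavity of the square root, writing $\sqrt{a}\le\sqrt{b}\sqrt{1+c/b}\le\sqrt{b}(1+c/2b)$. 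Both are one-line elementary manipulations; your version has the small advantage of making the degenerate cases ($b=0$, $a\le b$) explicit and the monotonicity directions transparent, while the paper's version is marginally more compact. Neither approach buys anything substantive over the other.
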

\begin{proof}
The first inequality arises from rearranging and concavity of the squareroot function:
\begin{equation*}
    \sqrt{a}\leq \sqrt{b}\sqrt{1+c/b}\leq \sqrt{b}(1+c/2b).
\end{equation*}
The second follows from assuming without loss of generality that $a\geq b$ and observing the claim is implied by
\begin{equation*}
    \sqrt{a}-\sqrt{b}\leq \sqrt{a-b},
\end{equation*}
which holds by squaring and simple algebra.
\end{proof}

\begin{fact}
\label{fact:root2}
Suppose $a,b,c\geq 0$. Then $a-b\geq c$ implies 
\begin{equation*}
\sqrt{a}-\sqrt{b}\geq \frac{c}{2\sqrt{a}}.
\end{equation*}
\end{fact}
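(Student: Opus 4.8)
The plan is to prove this via the standard conjugate (rationalization) identity for differences of square roots, together with monotonicity of $\sqrt{\cdot}$. Concretely, I would first note that the hypothesis $a-b\geq c\geq 0$ forces $a\geq b\geq 0$, so $\sqrt{a}$ and $\sqrt{b}$ are well defined and $\sqrt{b}\leq \sqrt{a}$. We may assume $a>0$ (the case $a=0$ forces $b=c=0$ and is degenerate; in all uses of this Fact in the paper the quantity playing the role of $a$ is strictly positive), so that division by $\sqrt{a}$ is legitimate.

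The key step is the algebraic identity
\begin{equation*}
    \sqrt{a}-\sqrt{b}=\frac{(\sqrt{a}-\sqrt{b})(\sqrt{a}+\sqrt{b})}{\sqrt{a}+\sqrt{b}}=\frac{a-b}{\sqrt{a}+\sqrt{b}},
\end{equation*}
valid since $\sqrt{a}+\sqrt{b}>0$. Because $\sqrt{b}\leq\sqrt{a}$ we have $\sqrt{a}+\sqrt{b}\leq 2\sqrt{a}$, and since $a-b\geq 0$ this gives
\begin{equation*}
    \sqrt{a}-\sqrt{b}=\frac{a-b}{\sqrt{a}+\sqrt{b}}\geq\frac{a-b}{2\sqrt{a}}\geq\frac{c}{2\sqrt{a}},
\end{equation*}
where the final inequality is just the hypothesis $a-b\geq c$ divided by the positive quantity $2\sqrt{a}$. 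This completes the argument.

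There is essentially no obstacle here: the only thing to be slightly careful about is the sign bookkeeping — one must invoke $a-b\geq 0$ to conclude that replacing $\sqrt{a}+\sqrt{b}$ by the larger quantity $2\sqrt{a}$ decreases the fraction, and one must note $\sqrt{b}\leq\sqrt{a}$ (itself a consequence of $a\geq b$) to justify $\sqrt{a}+\sqrt{b}\leq 2\sqrt{a}$. The degenerate case $a=0$ is the only point where the stated inequality is not literally meaningful, and it is handled by the observation above.
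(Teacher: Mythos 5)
Your proof is correct and uses exactly the same argument as the paper: factor $a-b=(\sqrt{a}-\sqrt{b})(\sqrt{a}+\sqrt{b})$, divide by $\sqrt{a}+\sqrt{b}$, and bound $\sqrt{a}+\sqrt{b}\leq 2\sqrt{a}$ via $b\leq a$. Your additional care with the degenerate case $a=0$ is fine but not needed for anything essential.
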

\begin{proof}
\begin{equation}
    a-b=(\sqrt{a}-\sqrt{b})(\sqrt{a}+\sqrt{b})\geq c\implies \sqrt{a}-\sqrt{b}\geq \frac{c}{\sqrt{a}+\sqrt{b}}\geq \frac{c}{2\sqrt{a}}.
\end{equation}
\end{proof}

Recall that we defined the following two weighted $\ell_p$ norms on $\mathbb{R}^n$: $
    \|\mathbf{x}\|_{\bm{\lambda},1}\triangleq \sum_{i=1}^n \lambda_i \vert x_i\vert$ and $
    \|\mathbf{x}\|_{\bm{\lambda},2}\triangleq \sqrt{\sum_{i=1}^n \lambda_i x_i^2}.$ We will need the following simple relationship between the norms defined above that quantifies their equivalence:
\begin{lem}
\label{lem:norms}
For all $x\in \mathbb{R}^n$,
\begin{equation*}
    \sqrt{\lambda_n}\|x\|_{\bm{\lambda},2}\leq \|x\|_{\bm{\lambda},1}\leq \sqrt{\sum_{i=1}^n \lambda_i}\|x\|_{\bm{\lambda},2}.
\end{equation*}
\end{lem}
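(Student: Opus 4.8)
The plan is to prove both inequalities by comparing the weighted $\ell_1$ and $\ell_2$ norms coordinatewise and then applying Cauchy--Schwarz. For the left-hand inequality, I would first observe that $\lambda_n \le \lambda_i$ for every $i$ since the $\lambda_i$ are sorted in decreasing order. Hence
\begin{equation*}
    \lambda_n \sum_{i=1}^n \lambda_i x_i^2 \le \sum_{i=1}^n \lambda_i^2 x_i^2 = \sum_{i=1}^n (\lambda_i |x_i|)^2 \le \left(\sum_{i=1}^n \lambda_i |x_i|\right)^2,
\end{equation*}
where the last step is just the fact that the sum of squares of nonnegative numbers is at most the square of their sum. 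Taking square roots gives $\sqrt{\lambda_n}\,\|x\|_{\bm{\lambda},2} \le \|x\|_{\bm{\lambda},1}$.

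For the right-hand inequality, I would write $\lambda_i |x_i| = \sqrt{\lambda_i}\cdot \sqrt{\lambda_i}\, |x_i|$ and apply Cauchy--Schwarz to the two vectors $(\sqrt{\lambda_i})_{i}$ and $(\sqrt{\lambda_i}\,|x_i|)_i$:
\begin{equation*}
    \sum_{i=1}^n \lambda_i |x_i| = \sum_{i=1}^n \sqrt{\lambda_i}\cdot \left(\sqrt{\lambda_i}\,|x_i|\right) \le \sqrt{\sum_{i=1}^n \lambda_i}\cdot \sqrt{\sum_{i=1}^n \lambda_i x_i^2} = \sqrt{\sum_{i=1}^n \lambda_i}\;\|x\|_{\bm{\lambda},2}.
\end{equation*}
This is exactly the claimed upper bound.

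There is essentially no hard part here: both bounds are one-line consequences of elementary inequalities (monotonicity of the sorted weights for the lower bound, Cauchy--Schwarz for the upper bound), and no step requires anything beyond nonnegativity of the $\lambda_i$ and $x_i^2$. The only thing to be slightly careful about is invoking the sortedness convention $\lambda_1 \ge \cdots \ge \lambda_n > 0$ correctly so that $\lambda_n$ is genuinely the smallest weight; this is already part of the standing notation in the paper. I would present the two displays above as the entire proof.
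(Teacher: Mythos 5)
Your proof is correct and follows essentially the same route as the paper: the lower bound uses $\lambda_n \le \lambda_i$ together with the fact that a sum of nonnegative squares is bounded by the square of the sum (the paper writes this as expanding $\|x\|_{\bm\lambda,1}^2$ into a double sum and discarding off-diagonal terms, which is the same step), and the upper bound is the identical Cauchy--Schwarz split $\lambda_i|x_i| = \sqrt{\lambda_i}\cdot\sqrt{\lambda_i}|x_i|$. No gaps; this is the argument the paper gives.
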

\begin{proof}
For the first inequality, 
\begin{align*}
    \|x\|_{\bm{\lambda},1}^2&=\sum_{i,j=1}^n \lambda_i\lambda_j\vert x_i\vert \vert x_j\vert\\
    &\geq \sum_{i=1}^n \lambda_i^2x_i^2\\
    &\geq \lambda_n \sum_{i=1}^n \lambda_ix_i^2\\
    &=\lambda_n\|x\|_{\bm{\lambda},2}^2.
\end{align*}

The second is a routine application of Cauchy-Schwarz:
\begin{equation*}
    \sum_{i=1}^n \lambda_i \vert x_i\vert=\sum_{i=1}^n \sqrt{\lambda_i}(\sqrt{\lambda_i}\vert x_i\vert)\leq \sqrt{\sum_{i=1}^n\lambda_i}\sqrt{\sum_{i=1}^n \lambda_i x_i^2}=\sqrt{\sum_{i=1}^n\lambda_i}\|x\|_{\bm{\lambda},2}.
\end{equation*}
\end{proof}

\subsection{Probability Tools}
We will use the following concentration results throughout the paper.

\begin{lem}[First Borel-Cantelli Lemma, Theorem 2.3.1 of \cite{durrett2019probability}]
Let $A_1,A_2,\ldots$ be a sequence of events with $\sum_{i=1}^{\infty}\Pr(A_i)<\infty$. Then with probability one at most finitely many of the $A_i$ occur.
\end{lem}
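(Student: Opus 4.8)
The plan is to reduce the statement to a computation with the $\limsup$ event. Note that the event ``infinitely many of the $A_i$ occur'' is exactly
\begin{equation*}
    A \triangleq \limsup_{n\to\infty} A_n = \bigcap_{n=1}^{\infty}\bigcup_{k\geq n} A_k,
\end{equation*}
which is measurable since it is a countable intersection of countable unions of events. Its complement is precisely the event that at most finitely many $A_i$ occur, so it suffices to prove $\Pr(A)=0$.

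For this, I would fix an arbitrary $n$ and use monotonicity together with countable subadditivity: since $A\subseteq \bigcup_{k\geq n}A_k$, we have
\begin{equation*}
    \Pr(A)\leq \Pr\bigg(\bigcup_{k\geq n}A_k\bigg)\leq \sum_{k\geq n}\Pr(A_k).
\end{equation*}
The hypothesis $\sum_{i=1}^{\infty}\Pr(A_i)<\infty$ means the right-hand side is the tail of a convergent series, so it tends to $0$ as $n\to\infty$. Since $\Pr(A)$ does not depend on $n$, letting $n\to\infty$ yields $\Pr(A)=0$, which is the claim.

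I do not expect any genuine obstacle here, as this is a classical fact; the only points requiring (routine) care are correctly identifying the complement of the $\limsup$ event as ``finitely many occur'' and applying subadditivity to the tail union rather than the full union so that the convergent-series hypothesis can be invoked.
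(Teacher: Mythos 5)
Your proof is correct and is precisely the standard argument (the one given in Durrett, which the paper simply cites rather than reproves): identify the event ``infinitely many $A_i$ occur'' with $\limsup_n A_n$, bound its probability by the tail sum $\sum_{k\geq n}\Pr(A_k)$ via monotonicity and countable subadditivity, and let $n\to\infty$. There is no gap; the paper offers no proof of its own to compare against since this is a cited textbook fact.
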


\begin{lem}[Azuma-Hoeffding]
\label{lem:azuma}
Let $\{\mathcal{F}_k\}_{k\leq n}$ be any filtration and let $A_k,B_k,\Delta_k$  satisfy the following conditions:
\begin{enumerate}
    \item $\Delta_k$ is $\mathcal{F}_k$-measurable and $\mathbb{E}[\Delta_k\vert \mathcal{F}_{k-1}]=0$. That is, the $\Delta_k$ form a martingale difference sequence.
    \item $A_k,B_k$ are $\mathcal{F}_{k-1}$-measurable and satisfy $A_k\leq \Delta_k\leq B_k$ almost surely.
\end{enumerate}
Then 
\begin{equation}
    \Pr\bigg(\sum_{k=1}^n \Delta_k \geq t\bigg)\leq \exp\bigg(\frac{-2t^2}{\sum_{k=1}^n \|B_k-A_k\|_{\infty}}\bigg).
\end{equation}
\end{lem}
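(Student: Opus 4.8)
The plan is to prove this via the standard exponential-moment (Chernoff) method, peeling off the increments one at a time using the tower property and bounding each conditional moment generating function by a conditional version of Hoeffding's lemma. First I would fix $\lambda>0$ and apply Markov's inequality to $e^{\lambda\sum_{k=1}^n\Delta_k}$, so that
\[
\Pr\Big(\sum_{k=1}^n\Delta_k\geq t\Big)\leq e^{-\lambda t}\,\mathbb{E}\Big[e^{\lambda\sum_{k=1}^n\Delta_k}\Big].
\]
Then, since $\sum_{k=1}^{n-1}\Delta_k$ is $\mathcal{F}_{n-1}$-measurable, conditioning on $\mathcal{F}_{n-1}$ gives
\[
\mathbb{E}\Big[e^{\lambda\sum_{k=1}^n\Delta_k}\Big]=\mathbb{E}\Big[e^{\lambda\sum_{k=1}^{n-1}\Delta_k}\,\mathbb{E}\big[e^{\lambda\Delta_n}\mid\mathcal{F}_{n-1}\big]\Big],
\]
and iterating this identity down to $k=1$ reduces everything to a pointwise bound on each conditional MGF $\mathbb{E}[e^{\lambda\Delta_k}\mid\mathcal{F}_{k-1}]$.

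The key lemma is the conditional Hoeffding bound: if $\mathbb{E}[\Delta_k\mid\mathcal{F}_{k-1}]=0$ and $A_k\leq\Delta_k\leq B_k$ with $A_k,B_k$ being $\mathcal{F}_{k-1}$-measurable, then almost surely $\mathbb{E}[e^{\lambda\Delta_k}\mid\mathcal{F}_{k-1}]\leq\exp\!\big(\lambda^2(B_k-A_k)^2/8\big)\leq\exp\!\big(\lambda^2\|B_k-A_k\|_\infty^2/8\big)$. To see this I would work on the conditional probability space, where $A_k$ and $B_k$ act as constants; convexity of $x\mapsto e^{\lambda x}$ on $[A_k,B_k]$ yields $e^{\lambda\Delta_k}\leq\frac{B_k-\Delta_k}{B_k-A_k}e^{\lambda A_k}+\frac{\Delta_k-A_k}{B_k-A_k}e^{\lambda B_k}$, and taking conditional expectations annihilates the linear-in-$\Delta_k$ terms because $\mathbb{E}[\Delta_k\mid\mathcal{F}_{k-1}]=0$. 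What remains is $\phi(\lambda(B_k-A_k))$ for an explicit function $\phi$; writing $L=\log\phi$ and showing $L''\leq 1/4$ via a Taylor expansion with Lagrange remainder produces the exponent $\lambda^2(B_k-A_k)^2/8$.

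Substituting the lemma back through the telescoped product gives $\mathbb{E}[e^{\lambda\sum_k\Delta_k}]\leq\exp\!\big(\tfrac{\lambda^2}{8}\sum_{k=1}^n\|B_k-A_k\|_\infty^2\big)$, hence $\Pr(\sum_k\Delta_k\geq t)\leq\exp\!\big(-\lambda t+\tfrac{\lambda^2}{8}\sum_k\|B_k-A_k\|_\infty^2\big)$, and optimizing over $\lambda>0$ with $\lambda=4t/\sum_k\|B_k-A_k\|_\infty^2$ yields the stated tail bound (with the usual quadratic normalization $\sum_k\|B_k-A_k\|_\infty^2$ in the denominator). The main obstacle is the conditional Hoeffding step: one must be careful that the convexity inequality and the mean-zero cancellation hold pointwise in the conditioning variable, that measurability of $A_k,B_k$ with respect to $\mathcal{F}_{k-1}$ is exactly what legitimizes treating them as constants inside $\mathbb{E}[\,\cdot\mid\mathcal{F}_{k-1}]$, and that the deterministic essential-supremum widths $\|B_k-A_k\|_\infty$ can uniformly replace the random widths; the remaining steps are a routine Chernoff optimization.
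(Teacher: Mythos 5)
Your proof is correct: it is the standard Chernoff--plus--conditional-Hoeffding argument, and the paper itself states this lemma without proof, simply citing Azuma--Hoeffding as a classical tool, so there is nothing in the paper to diverge from. One point worth flagging: your derivation (correctly) produces the exponent $-2t^2/\sum_{k=1}^n\|B_k-A_k\|_\infty^2$, whereas the displayed bound in the paper's statement omits the square on $\|B_k-A_k\|_\infty$; that appears to be a typo in the paper, and your quadratic normalization is the right one.
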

\begin{lem}[Etemadi, Theorem 22.5 in \cite{billingsley2008probability}]
\label{lem:etemadi}
Suppose $X_1,\ldots,X_n$ are independent random variables. Then for any $x\geq 0$,
\begin{equation*}
    \Pr\bigg(\max_{1\leq k\leq n} \vert \jgedit{Z}_k\vert\geq 3x\bigg)\leq 3\max_{1\leq i\leq n}\Pr\big(\vert \jgedit{Z}_k\vert\geq x\big),
\end{equation*}
where \jgedit{$Z_k$} is the $k$'th partial sum of the $X_i$, i.e. \jgedit{$Z_k$}$=\sum_{i=1}^k X_i$.
\end{lem}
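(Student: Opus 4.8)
The plan is the classical first‑passage decomposition. Abbreviate $M := \max_{1\le i\le n}\Pr(|Z_i|\ge x)$ (this is what the right‑hand side means, the index of the inner partial sum ranging over $1,\dots,n$), and assume $x>0$ since $x=0$ is trivial. First I would introduce the disjoint events
\[
F_k := \{\,|Z_1|<3x,\ \dots,\ |Z_{k-1}|<3x,\ |Z_k|\ge 3x\,\},\qquad k=1,\dots,n,
\]
i.e. $F_k$ is the event that $k$ is the first index at which the walk's modulus reaches $3x$. By construction the $F_k$ are pairwise disjoint, $\bigcup_{k=1}^n F_k=\{\max_{1\le k\le n}|Z_k|\ge 3x\}$, and hence $\sum_{k=1}^n\Pr(F_k)=\Pr(\max_k|Z_k|\ge 3x)$.

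The key step is to split on the value of the \emph{final} partial sum $Z_n$. The part of the event on which $|Z_n|\ge x$ contributes at most $M$. On the remaining part, for each $k$ I would argue that on $F_k\cap\{|Z_n|<x\}$ the triangle inequality gives $|Z_n-Z_k|\ge |Z_k|-|Z_n|>3x-x=2x$, so $F_k\cap\{|Z_n|<x\}\subseteq F_k\cap\{|Z_n-Z_k|>2x\}$. Now I exploit independence: $F_k$ is a function of $X_1,\dots,X_k$ only, while $Z_n-Z_k=X_{k+1}+\cdots+X_n$ depends only on $X_{k+1},\dots,X_n$, so these two events are independent, giving $\Pr\big(F_k\cap\{|Z_n|<x\}\big)\le \Pr(F_k)\,\Pr\big(|Z_n-Z_k|>2x\big)$. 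To return to genuine partial sums I then bound $\Pr(|Z_n-Z_k|>2x)\le \Pr(|Z_n|\ge x)+\Pr(|Z_k|\ge x)\le 2M$, using $|Z_n-Z_k|\le|Z_n|+|Z_k|$. Summing over $k$ and using $\sum_k\Pr(F_k)\le 1$ yields
\[
\Pr\Big(\max_{1\le k\le n}|Z_k|\ge 3x\Big)\ \le\ \Pr(|Z_n|\ge x)\ +\ 2M\sum_{k=1}^n\Pr(F_k)\ \le\ M+2M\ =\ 3M,
\]
which is exactly the claimed bound.

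The proof is short and there is no real obstacle, but the one point that requires care is the bookkeeping that forces passage through the increment $Z_n-Z_k$ rather than through $Z_k$ itself: independence from $F_k$ is available only for the increment, and the triangle inequality is then needed both to produce the increment (costing the jump from $x$ to $3x$) and to convert back to partial sums (costing the extra factor $2$, so that the constant $2+1=3$ appears). I would make sure the strict/non‑strict inequalities are handled consistently so that every set inclusion used above is valid.
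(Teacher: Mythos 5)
Your proof is correct and is exactly the classical first-passage argument for Etemadi's inequality that the cited reference (Billingsley, Theorem 22.5) gives; the paper itself states the lemma without proof, simply citing that source. The decomposition into the events $F_k$, the use of independence of $F_k$ from $Z_n-Z_k$, and the two triangle-inequality steps producing the constants $3x$ and $2M$ are all handled correctly, so there is nothing to fix.
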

\begin{lem}[Theorem 1 in \cite{witt2013fitness}]
\label{lem:geom1}
Let $X_1,\ldots,X_n$ be i.i.d. $\text{Geom}(\lambda)$ random variables, so that $\mathbb{E}[X_i]=\frac{1}{\lambda}$. Let $s=\frac{n}{\lambda^2}$. Then for all $\delta>0$,

\begin{equation*}
    \Pr\bigg( \jgedit{Z_n}-\frac{n}{\lambda}<-\delta\bigg)\leq \exp\bigg(\frac{-\delta^2}{2s}\bigg),
\end{equation*}
and
\begin{equation*}
    \Pr\bigg( \jgedit{Z_n}-\frac{n}{\lambda}>\delta\bigg)\leq \exp\bigg(\frac{-\delta}{4}\min\{\delta/s,\lambda\}\bigg)
\end{equation*}
where $\jgedit{Z_n}=\sum_{i=1}^n X_i$.
\end{lem}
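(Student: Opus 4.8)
The plan is to use Chernoff's exponential-moment method, exploiting the explicit moment generating function of a geometric variable. Writing $X\sim\text{Geom}(\lambda)$ supported on $\{1,2,\dots\}$ with $\mathbb{E}[X]=1/\lambda$ and $\mathrm{Var}(X)=(1-\lambda)/\lambda^2\le 1/\lambda^2$, one has $\mathbb{E}[e^{tX}]=\lambda e^t/(1-(1-\lambda)e^t)$ for $t<-\ln(1-\lambda)$ and $\mathbb{E}[e^{-tX}]=\lambda e^{-t}/(1-(1-\lambda)e^{-t})$ for all $t\ge 0$, and by independence the transforms of $Z_n$ are the $n$-th powers of these. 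It then suffices to control the \emph{centered} cumulant generating functions $\psi(t)=\ln\mathbb{E}[e^{-t(X-1/\lambda)}]$ and $\phi(t)=\ln\mathbb{E}[e^{t(X-1/\lambda)}]$, both of which vanish to first order at the origin ($\psi(0)=\psi'(0)=\phi(0)=\phi'(0)=0$), by integrating a bound on the second derivative and then optimizing over $t$.

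For the lower tail I would differentiate twice to get $\psi''(t)=w/(1-w)^2$ with $w=(1-\lambda)e^{-t}$; since $x\mapsto x/(1-x)^2$ is increasing on $(0,1)$ and $w$ decreases in $t$, we get $\psi''(t)\le\psi''(0)=(1-\lambda)/\lambda^2\le 1/\lambda^2$ for all $t\ge 0$. Integrating twice yields $\psi(t)\le t^2/(2\lambda^2)$, hence $\ln\mathbb{E}[e^{-t(Z_n-n/\lambda)}]=n\psi(t)\le st^2/2$, and Markov's inequality gives $\Pr(Z_n-n/\lambda\le-\delta)\le e^{-t\delta+st^2/2}$ for every $t\ge0$; taking $t=\delta/s$ produces the bound $e^{-\delta^2/(2s)}$. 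This direction is purely sub-Gaussian, which is why no correction term appears, and the continuity of the method avoids any integer bookkeeping.

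The upper tail is the delicate part, since $\mathbb{E}[e^{tX}]$ has a pole at $t=-\ln(1-\lambda)$, a point which nevertheless lies strictly to the right of $\lambda$. Here I would establish the second-derivative estimate $\phi''(t)\le \mathrm{Var}(X)\,(1-t/\lambda)^{-3}$ on $[0,\lambda)$ — equivalently $\lambda^2 e^t(1-t/\lambda)^3\le(1-(1-\lambda)e^t)^2$, an inequality that holds with equality at $t=0$ — and integrate it twice to get the Bernstein-type bound $\phi(t)\le \tfrac12\mathrm{Var}(X)\,t^2/(1-t/\lambda)$ for $0\le t<\lambda$. Summing over the $n$ variables gives $\ln\mathbb{E}[e^{t(Z_n-n/\lambda)}]\le \tfrac12 s t^2/(1-t/\lambda)$, so $\Pr(Z_n-n/\lambda\ge\delta)\le\exp\big(-t\delta+\tfrac12 st^2/(1-t/\lambda)\big)$ for $0\le t<\lambda$. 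The standard Bernstein optimization — take $t=\delta/(s+\delta/\lambda)$, which does lie in $[0,\lambda)$ — yields $\Pr(Z_n-n/\lambda\ge\delta)\le\exp\!\big(-\tfrac{\delta^2}{2(s+\delta/\lambda)}\big)$, and since $s+\delta/\lambda\le 2\max\{s,\delta/\lambda\}$ this is at most $\exp\!\big(-\tfrac{\delta}{4}\min\{\delta/s,\lambda\}\big)$, as claimed.

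The main obstacle is precisely the upper-tail inequality $\phi''(t)\le\mathrm{Var}(X)(1-t/\lambda)^{-3}$ with constants sharp enough to reach exactly $\tfrac{\delta}{4}\min\{\delta/s,\lambda\}$ in the exponent: naive lower bounds on $1-(1-\lambda)e^t$ lose constant factors and only give a weaker rate, so one genuinely has to use that the singularity of the geometric transform sits at $-\ln(1-\lambda)\ge\lambda$ rather than at $\lambda$. An alternative route that avoids re-deriving the Bernstein optimization is the negative-binomial/binomial duality $\Pr(Z_n\le m)=\Pr(\mathrm{Bin}(m,\lambda)\ge n)$ and $\Pr(Z_n\ge m)=\Pr(\mathrm{Bin}(m-1,\lambda)\le n-1)$: each tail event for $Z_n$ then becomes a one-sided deviation of a binomial from its mean by roughly $\lambda\delta$, to which textbook multiplicative Chernoff bounds apply directly and, after bookkeeping with the integer $m$, reproduce both estimates.
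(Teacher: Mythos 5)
The paper itself offers no proof of this lemma---it is imported wholesale as Theorem 1 of the cited work of Witt---so there is no internal argument to compare yours against; the question is only whether your derivation stands on its own, and it essentially does. Your lower-tail half is complete and correct: $\psi''(t)=w/(1-w)^2$ with $w=(1-\lambda)e^{-t}$ is maximized at $t=0$, giving $\psi(t)\le t^2/(2\lambda^2)$, hence $n\psi(t)\le st^2/2$, and $t=\delta/s$ yields $\exp(-\delta^2/(2s))$. On the upper tail, everything downstream of your asserted estimate $\phi''(t)\le \mathrm{Var}(X)\,(1-t/\lambda)^{-3}$ is routine and correctly executed (the double integration to the Bernstein form, the admissible choice $t=\delta/(s+\delta/\lambda)<\lambda$, and $s+\delta/\lambda\le 2\max\{s,\delta/\lambda\}$ giving $\tfrac{\delta}{4}\min\{\delta/s,\lambda\}$), so the one genuine gap is that second-derivative inequality, which you flag but do not verify. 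It is true, and can be closed in a few lines: it is equivalent to $\lambda^2 e^t(1-t/\lambda)^3\le\bigl(1-(1-\lambda)e^t\bigr)^2$ on $[0,\lambda)$, and the function $h(t)=2\ln\bigl(1-(1-\lambda)e^t\bigr)-2\ln\lambda-t-3\ln(1-t/\lambda)$ satisfies $h(0)=0$, while the condition $h'(t)\ge 0$ rearranges (using $1-(1-\lambda)e^t>0$ on this range) to $(1-\lambda)e^{\lambda}\le e^{u}\,\tfrac{3-u}{3+u}$ with $u=\lambda-t\in(0,\lambda]$; the right-hand side has derivative $e^u(3-u^2)/(3+u)^2>0$ on $[0,1]$, so it is at least its value $1$ at $u=0$, which dominates $(1-\lambda)e^{\lambda}$ since $\ln(1-\lambda)\le-\lambda$. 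With that step filled in, your argument recovers exactly the stated constants; the negative-binomial/binomial duality you sketch at the end is a workable alternative, but the exponential-moment route you carried out is the same style of argument as in the source the paper cites, so it neither simplifies nor generalizes beyond it.
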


\begin{corollary}
\label{cor:geom2}
Under the assumptions and notation of Lemma \ref{lem:geom1}, for any $\epsilon\in [0,1]$,
\begin{equation*}
    \Pr\bigg(\max_{1\leq j\leq n} \bigg\vert \jgedit{Z_j}-\frac{j}{\lambda}\bigg\vert > \frac{\epsilon n}{\lambda}\bigg)\leq 6 \exp\bigg(\frac{-\epsilon^2n}{36}\bigg).
\end{equation*}
\end{corollary}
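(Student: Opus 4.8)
The plan is to obtain the maximal inequality from the two single-index tail bounds of Lemma~\ref{lem:geom1} by invoking Etemadi's inequality (Lemma~\ref{lem:etemadi}). First I would center the summands: let $Y_i \triangleq X_i - \tfrac{1}{\lambda}$, which are independent with mean zero, so that the $k$th partial sum $\sum_{i=1}^k Y_i$ equals $Z_k - \tfrac{k}{\lambda}$. Applying Lemma~\ref{lem:etemadi} to the $Y_i$ with $x = \tfrac{\epsilon n}{3\lambda} \ge 0$ gives
\begin{equation*}
    \Pr\bigg(\max_{1\leq k\leq n}\bigg\vert Z_k-\frac{k}{\lambda}\bigg\vert \geq \frac{\epsilon n}{\lambda}\bigg)\leq 3\max_{1\leq i\leq n}\Pr\bigg(\bigg\vert Z_i-\frac{i}{\lambda}\bigg\vert \geq \frac{\epsilon n}{3\lambda}\bigg),
\end{equation*}
and since $\{\,\cdot > \tfrac{\epsilon n}{\lambda}\,\}\subseteq\{\,\cdot\geq \tfrac{\epsilon n}{\lambda}\,\}$ it suffices to bound the right-hand tail, uniformly over $i\leq n$, by $2\exp(-\epsilon^2 n/36)$.

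For a fixed $i\le n$, I would apply both one-sided bounds of Lemma~\ref{lem:geom1} with $\delta=\tfrac{\epsilon n}{3\lambda}$ and $s=\tfrac{i}{\lambda^2}$. Both bounds weaken monotonically as $s$ grows, so $i=n$ (i.e. $s=n/\lambda^2$) is the worst case. The lower-tail bound then reads $\exp\!\big(-\delta^2/(2s)\big)=\exp(-\epsilon^2 n/18)$. For the upper tail, $\delta/s=\tfrac{\epsilon n\lambda}{3i}\geq \tfrac{\epsilon\lambda}{3}$, and since $\epsilon\leq 1$ this forces $\min\{\delta/s,\lambda\}\geq \tfrac{\epsilon\lambda}{3}$, hence $\tfrac{\delta}{4}\min\{\delta/s,\lambda\}\geq \epsilon^2 n/36$ and the upper-tail bound is $\exp(-\epsilon^2 n/36)$. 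Adding the two yields $\Pr(|Z_i-i/\lambda|\geq\delta)\leq \exp(-\epsilon^2 n/18)+\exp(-\epsilon^2 n/36)\leq 2\exp(-\epsilon^2 n/36)$ for every $i\leq n$, and plugging this into the displayed inequality gives the stated bound with constant $6$.

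I do not anticipate a genuine obstacle: this is a routine composition of two stated lemmas. The only points demanding attention are that Etemadi must be applied to the signed, centered variables $Y_i$ (legitimate, as the lemma requires only independence, not nonnegativity), the bookkeeping that makes $i=n$ the extremal index for \emph{both} tail bounds in Lemma~\ref{lem:geom1}, and the elementary estimate $\min\{\delta/s,\lambda\}\geq \epsilon\lambda/3$ for the Poisson-type upper tail, which uses $\epsilon\le 1$. Tracking the constants---absorbing the $\exp(-\epsilon^2 n/18)$ term into $2\exp(-\epsilon^2 n/36)$ and carrying the factor $3$ from Etemadi---produces exactly $6\exp(-\epsilon^2 n/36)$.
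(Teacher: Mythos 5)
Your proof is correct and follows essentially the same route as the paper: center the partial sums, invoke Etemadi's inequality to reduce the maximal deviation to a single-index bound at threshold $\epsilon n/(3\lambda)$, and then control each side of that tail with the two one-sided bounds of Lemma~\ref{lem:geom1}. The paper merely orders the steps slightly differently---it first derives the two-sided tail bound $2\exp(-\epsilon^2 n/4)$ for deviation $\epsilon n/\lambda$ and then substitutes $\epsilon/3$, while you instantiate Lemma~\ref{lem:geom1} directly at $\delta=\epsilon n/(3\lambda)$---and the constants land in the same place either way.
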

\begin{proof}
First apply Lemma \ref{lem:geom1} for each partial sum $\jgedit{Z_j}$ and $\delta=\epsilon n/\lambda$. By considering the cases $j\leq \epsilon n$ and $j>\epsilon n$ respectively, it follows for all $j\leq n$,
\begin{equation*}
    \min\{\delta/s,\lambda\}\geq \epsilon \lambda.
\end{equation*}
Lemma \ref{lem:geom1} now implies for all $j\leq n$,
\begin{gather*}
    \Pr\bigg( \jgedit{Z_j}-\frac{j}{\lambda}<-\frac{\epsilon n}{\lambda}\bigg)\leq \exp\bigg(\frac{-\epsilon^2n^2}{2j}\bigg)\leq \exp\bigg(\frac{-\epsilon^2n}{4}\bigg)
\end{gather*}
and similarly
\begin{gather*}
    \Pr\bigg( \jgedit{Z_j}-\frac{j}{\lambda}>\frac{\epsilon n}{\lambda}\bigg)\leq \exp\bigg(\frac{-\epsilon n}{4\lambda}\lambda \epsilon\bigg)=\exp\bigg(\frac{-\epsilon^2 n}{4}\bigg),
\end{gather*}
and combining these bounds gives
\begin{equation*}
    \Pr\bigg( \bigg\vert \jgedit{Z_j}-\frac{j}{\lambda}\bigg\vert >\frac{\epsilon n}{\lambda}\bigg)\leq 2\exp\bigg(\frac{-\epsilon^2 n}{4}\bigg)
\end{equation*}
Now apply Lemma \ref{lem:etemadi} using the centered random variables $Y_i=X_i-1/\lambda$. This yields
\begin{align*}
    \Pr\bigg(\max_{1\leq j\leq n} \bigg\vert \jgedit{Z_j}-\frac{j}{\lambda}\bigg\vert > \frac{\epsilon n}{\lambda}\bigg)&\leq 3\max_{1\leq j\leq n}\Pr\bigg( \bigg\vert \jgedit{Z_j}-\frac{j}{\lambda}\bigg\vert > \frac{\epsilon n}{3\lambda}\bigg)\\
    &\leq 6  \exp\bigg(\frac{-\epsilon^2n}{36}\bigg).
\end{align*}
\end{proof}

\begin{corollary}
\label{cor:geom3}
Let $\{G_{i,j}\}_{i\in [n],j\in [w]}$ be a family of independent geometric random variables such that for all $i,j$,
\begin{equation*}
    G_{i,j}\sim \text{Geom}(\lambda_i).
\end{equation*}
Let $\jgedit{Z_k^i}=\sum_{j=1}^k G_{i,j}$. Then for any $\epsilon\in [0,1]$, 
\begin{equation}
\label{eq:geombound}
    \Pr\bigg(\exists i\in [n],j\in [w]: \bigg\vert \jgedit{Z_k^i}-\frac{k}{\lambda_i}\bigg\vert \geq \frac{\epsilon w}{\lambda_i}\bigg)\leq 6n\exp\bigg(\frac{-\epsilon^2 w}{36}\bigg).
\end{equation}
\end{corollary}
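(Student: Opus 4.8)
The plan is to reduce Corollary \ref{cor:geom3} to a single application of Corollary \ref{cor:geom2} for each fixed queue $i$, followed by a union bound over the $n$ queues. First I would observe that for a fixed $i\in[n]$, the family $\{G_{i,j}\}_{j\in[w]}$ consists of $w$ i.i.d.\ $\text{Geom}(\lambda_i)$ random variables, so Corollary \ref{cor:geom2} applies verbatim with $n$ replaced by $w$, $\lambda$ replaced by $\lambda_i$, and the same $\epsilon\in[0,1]$. This immediately gives
\begin{equation*}
    \Pr\bigg(\max_{1\leq k\leq w} \bigg\vert Z_k^i-\frac{k}{\lambda_i}\bigg\vert > \frac{\epsilon w}{\lambda_i}\bigg)\leq 6\exp\bigg(\frac{-\epsilon^2 w}{36}\bigg),
\end{equation*}
noting that the event $\{\exists k\in[w]: \vert Z_k^i - k/\lambda_i\vert \geq \epsilon w/\lambda_i\}$ is exactly the event that the maximum over $k$ exceeds (or meets) the threshold — here one should be slightly careful that Corollary \ref{cor:geom2} is stated with a strict inequality $>$ while the corollary to be proved uses $\geq$, but since $\{\max \geq c\}\subseteq \{\max > c'\}$ for any $c'<c$, or more simply since the bound in Corollary \ref{cor:geom2} with strict inequality already dominates the probability of the $\geq$ event after an infinitesimal slack argument (or one just re-derives Corollary \ref{cor:geom2} with $\geq$, which the same proof gives), this is harmless.

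Next I would take a union bound over $i\in[n]$: the event in \eqref{eq:geombound} is the union over $i$ of the per-queue events just bounded, so its probability is at most $n$ times the per-queue bound, yielding $6n\exp(-\epsilon^2 w/36)$, which is exactly the claimed inequality. That completes the proof.

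The only genuine subtlety — and it is minor — is making sure the quantifier structure matches: the displayed event quantifies over \emph{both} $i$ and $j$ (written $j\in[w]$ though the partial sum index is called $k$), and the max-based formulation in Corollary \ref{cor:geom2} already absorbs the inner quantifier over the partial-sum index, so no Etemadi-type maximal inequality needs to be reinvoked here; it was already used inside the proof of Corollary \ref{cor:geom2}. Thus the whole argument is essentially two lines: specialize Corollary \ref{cor:geom2} per queue, then union-bound. I do not anticipate any real obstacle; the main thing to get right is bookkeeping of which parameter plays the role of the sample size ($w$, not $n$) in the invocation of the earlier corollary.
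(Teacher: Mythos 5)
Your proposal is correct and is exactly the paper's argument: apply Corollary \ref{cor:geom2} with sample size $w$ to each fixed queue $i$, then union-bound over $i\in[n]$. The remarks about the strict-versus-weak inequality and the quantifier bookkeeping are fine but not points the paper bothers to spell out.
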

\begin{proof}
This follows immediately from Corollary \ref{cor:geom2} and a union bound.
\end{proof}

\begin{lem}
\label{lem:bern}
Let $\{\jgedit{S}_{i,j}\}_{i\in \jgedit{[m]}, j\in [w]}$ be an independent Bernoulli ensemble such that for all $i,j$
\begin{equation*}
    \jgedit{S}_{i,j}\sim \text{Bern}(\mu_i),
\end{equation*}
with $\mu_1\geq \mu_2\geq\ldots\geq \mu_n$.
Then for all $\delta\in [0,1]$,
\begin{equation}
\label{eq:bernbound}
    \Pr\bigg(\exists k\in \jgedit{[m]}:\sum_{i=1}^k\sum_{j=1}^w \jgedit{S}_{i,j}\leq (1-\delta)w\bigg(\sum_{i=1}^k \mu_i\bigg)\bigg)\leq \jgedit{m}\exp\bigg(\frac{-\delta^2 w\mu_1}{2}\bigg)
\end{equation}
\end{lem}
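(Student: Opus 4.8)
The plan is to reduce the statement to a single application of the multiplicative Chernoff bound together with a union bound over the $m$ prefixes. First I would fix $k \in [m]$ and consider the random variable $S_{\leq k} \triangleq \sum_{i=1}^k \sum_{j=1}^w S_{i,j}$, which is a sum of $kw$ independent Bernoulli random variables with $\mathbb{E}[S_{\leq k}] = w\sum_{i=1}^k \mu_i \triangleq \nu_k$. The standard lower-tail multiplicative Chernoff bound gives, for any $\delta \in [0,1]$,
\begin{equation*}
    \Pr\bigl(S_{\leq k} \leq (1-\delta)\nu_k\bigr) \leq \exp\bigl(-\delta^2 \nu_k / 2\bigr).
\end{equation*}

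The only remaining point is to bound the right-hand side uniformly in $k$. Since $\mu_1 \geq \mu_2 \geq \cdots \geq \mu_m$ and each $\mu_i \geq 0$, we always have $\nu_k = w\sum_{i=1}^k \mu_i \geq w\mu_1$ for every $k \geq 1$ (the first term alone already contributes $w\mu_1$). Hence $\exp(-\delta^2\nu_k/2) \leq \exp(-\delta^2 w\mu_1/2)$ for all $k \in [m]$. Applying the union bound over the $m$ choices of $k$ then yields
\begin{equation*}
    \Pr\bigl(\exists k \in [m]: S_{\leq k} \leq (1-\delta)w\textstyle\sum_{i=1}^k \mu_i\bigr) \leq m\exp\bigl(-\delta^2 w\mu_1/2\bigr),
\end{equation*}
which is exactly the claimed bound (\ref{eq:bernbound}).

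There is essentially no hard step here; the mild subtlety worth double-checking is the monotonicity argument $\nu_k \geq w\mu_1$, which is what lets us replace the $k$-dependent exponent by the worst-case (smallest) one — this crucially uses that the $\mu_i$ are sorted in decreasing order and nonnegative, so that the smallest prefix sum is the one with a single term. I would also note that one could alternatively invoke Azuma--Hoeffding (Lemma \ref{lem:azuma}) on the martingale difference sequence obtained by revealing the $S_{i,j}$ one at a time, but this would give a bound of the form $\exp(-\delta^2 \nu_k^2/(2kw))$, which is weaker and messier than the direct Chernoff estimate; the multiplicative Chernoff form is the natural tool because the exponent scales with the mean $\nu_k$ rather than the number of summands.
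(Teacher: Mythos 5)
Your proposal is correct and is essentially identical to the paper's proof: apply the multiplicative Chernoff lower-tail bound to each prefix sum, replace the $k$-dependent exponent by the worst case $w\mu_1$ using the sorted ordering of the $\mu_i$, and finish with a union bound over $k\in[m]$. The extra commentary on Azuma--Hoeffding is a reasonable aside but unnecessary.
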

\begin{proof}
The well-known multiplicative form of the Chernoff bound immediately implies that for each $k\in \jgedit{[m]}$,
\begin{equation*}
    \Pr\bigg(\sum_{i=1}^k\sum_{j=1}^w \jgedit{S}_{i,j}\leq (1-\delta)w\bigg(\sum_{i=1}^k \mu_i\bigg)\bigg)\leq \exp\bigg(\frac{-\delta^2 w\sum_{i=1}^k \mu_i}{2}\bigg)\leq \exp\bigg(\frac{-\delta^2 w\mu_1}{2}\bigg).
\end{equation*}
The result then follows from a union bound over all $k\in \jgedit{[m]}$.
\end{proof}

The following characterizes the moments of geometric distributions.
\begin{lem}
\label{lem:geomom}
Let $X\sim \text{Geom}(\lambda)$. Then for all $k\geq 1$
\begin{equation*}
    \mathbb{E}[X^k]\leq \frac{c_k}{\lambda^k},
\end{equation*}
where $c_k$ is a constant depending on $k$ but not on $\lambda$.
\end{lem}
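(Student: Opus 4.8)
The plan is to give a direct proof by induction on $k$, or equivalently by comparison to a more tractable distribution. First I would recall that if $X \sim \text{Geom}(\lambda)$, then $\Pr(X = j) = (1-\lambda)^{j-1}\lambda$ for $j \geq 1$, so that the tail satisfies $\Pr(X \geq j) = (1-\lambda)^{j-1} \leq e^{-\lambda(j-1)}$. The cleanest approach is to bound $X$ stochastically by a constant multiple of an exponential random variable, or to work with the tail directly via the layer-cake formula $\mathbb{E}[X^k] = \sum_{j \geq 1} \Pr(X^k \geq j)$, but I find it cleanest to write $\mathbb{E}[X^k] = \sum_{j=1}^\infty j^k \Pr(X=j)$ and estimate this sum.

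The key computation is as follows. Using $\Pr(X=j) \leq \lambda e^{-\lambda(j-1)} = \lambda e^{\lambda} e^{-\lambda j}$, we get
\begin{equation*}
\mathbb{E}[X^k] \leq \lambda e^{\lambda} \sum_{j=1}^\infty j^k e^{-\lambda j} \leq \lambda e^{\lambda} \int_0^\infty (x+1)^k e^{-\lambda x}\, dx,
\end{equation*}
where the last step bounds the sum by an integral using monotonicity after shifting. Expanding $(x+1)^k$ by the binomial theorem and using $\int_0^\infty x^i e^{-\lambda x}\, dx = i!/\lambda^{i+1}$, this integral evaluates to $\sum_{i=0}^k \binom{k}{i} i! / \lambda^{i+1}$, so that
\begin{equation*}
\mathbb{E}[X^k] \leq e^{\lambda} \sum_{i=0}^k \binom{k}{i} \frac{i!}{\lambda^{i}} \leq e \cdot \frac{1}{\lambda^k} \sum_{i=0}^k \binom{k}{i} i!,
\end{equation*}
where in the last step I use $\lambda \leq 1$ (so $e^\lambda \leq e$ and $\lambda^{-i} \leq \lambda^{-k}$ for $i \leq k$). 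Setting $c_k = e \sum_{i=0}^k \binom{k}{i} i!$, which depends only on $k$, gives the claim. An even slicker alternative avoids the integral: one shows by induction that $\mathbb{E}[X^k] \leq P_k(1/\lambda)$ for an explicit degree-$k$ polynomial $P_k$ with nonnegative coefficients, using the memorylessness identity $\mathbb{E}[X^k] = \lambda + (1-\lambda)\mathbb{E}[(X+1)^k]$, which rearranges to $\mathbb{E}[X^k] = 1 + \frac{1-\lambda}{\lambda}\sum_{i=0}^{k-1}\binom{k}{i}\mathbb{E}[X^i]$; then each lower moment contributes at most $O(1/\lambda^{k-1})$ by induction, and the prefactor $1/\lambda$ pushes it to $O(1/\lambda^k)$.

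I do not expect any genuine obstacle here; this is a standard moment estimate. The only point requiring a modicum of care is ensuring the constant $c_k$ truly does not depend on $\lambda$, which is why the bound $\lambda \leq 1$ must be invoked to absorb all the $\lambda^{-i}$ terms with $i < k$ into $\lambda^{-k}$ and to bound $e^\lambda$; since $\lambda$ is a probability this is automatic. If one wanted the tightest constant one would track the binomial sum as $\sum_{i=0}^k \binom{k}{i} i!$, but the statement only asks for existence of such a $c_k$, so the crude bounds above suffice.
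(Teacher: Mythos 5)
The paper states this lemma without proof, treating it as a standard moment estimate, so there is no argument of the paper's own to compare against. Your proof is correct and self-contained: the integral comparison is valid because on each cell $[j-1,j]$ one has $(x+1)^k \geq j^k$ and $e^{-\lambda x} \geq e^{-\lambda j}$, so $\int_{j-1}^j (x+1)^k e^{-\lambda x}\,dx \geq j^k e^{-\lambda j}$ termwise; the binomial expansion and the Gamma integral $\int_0^\infty x^i e^{-\lambda x}\,dx = i!/\lambda^{i+1}$ then give exactly what you claim, and invoking $\lambda \leq 1$ to absorb $e^\lambda$ and the lower-order powers of $1/\lambda$ is the right closing step. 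Your alternate route via the memorylessness recursion $\mathbb{E}[X^k] = 1 + \frac{1-\lambda}{\lambda}\sum_{i=0}^{k-1}\binom{k}{i}\mathbb{E}[X^i]$ is also sound and arguably slicker, since it stays entirely in the discrete world and makes the inductive structure of the constant $c_k$ transparent; either would serve as a proof the paper omits.
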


\begin{lem}
\label{lem:binbound}
Let $X\sim \text{Bin}(n,p)$, where $p\in (0,1]$ is considered fixed. Then, for any fixed integer $k\geq 0$,
\begin{equation}
    \mathbb{E}[X^k]\asymp n^k,
\end{equation}
where the implicit constants depend on $p$ and $k$, but not $n$.
\end{lem}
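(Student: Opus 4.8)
The plan is to sandwich $\mathbb{E}[X^k]$ between two constant multiples of $n^k$. The case $k=0$ is trivial, since $\mathbb{E}[X^0]=1=n^0$, so assume $k\geq 1$. For the upper bound, note that $X\sim\text{Bin}(n,p)$ is supported on $\{0,1,\ldots,n\}$, so $X\leq n$ almost surely; hence $X^k\leq n^k$ almost surely and $\mathbb{E}[X^k]\leq n^k$. This already gives the right-hand side of $\mathbb{E}[X^k]\asymp n^k$ with implicit constant $1$, uniformly in $n$.

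For the lower bound, I would apply Jensen's inequality to the convex function $\phi(x)=x^k$, which is convex on $[0,\infty)$ (the support of $X$) for every integer $k\geq 1$. This yields $\mathbb{E}[X^k]=\mathbb{E}[\phi(X)]\geq\phi(\mathbb{E}[X])=(\mathbb{E}[X])^k=(np)^k=p^kn^k$. Since $p$ and $k$ are fixed, $p^k$ is a constant not depending on $n$, and combining the two estimates gives $p^kn^k\leq\mathbb{E}[X^k]\leq n^k$ for all $n$, which is exactly the asserted $\mathbb{E}[X^k]\asymp n^k$ with constants depending only on $p$ and $k$.

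If one prefers to avoid Jensen, an equally short alternative for the lower bound goes through the $k$-th falling factorial moment: for any nonnegative integer $X$ one has $X(X-1)\cdots(X-k+1)\leq X^k$, while for $X\sim\text{Bin}(n,p)$ the factorial moment is exactly $\mathbb{E}[X(X-1)\cdots(X-k+1)]=n(n-1)\cdots(n-k+1)\,p^k\geq(n-k+1)^kp^k$, which is $\Omega(n^k)$ once $n$ is large (e.g. $n\geq 2k$, with the finitely many smaller values of $n$ absorbed into the constant since each contributes a fixed positive value). There is no genuine obstacle here: the statement follows immediately from $X$ being bounded by $n$ together with convexity (equivalently, from the exact binomial factorial moments). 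The only point to be careful about is that the implicit constants in $\asymp$ are permitted to depend on $p$ and $k$, which the lemma statement already allows.
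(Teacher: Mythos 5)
Your proof is correct, and it takes a genuinely different route from the paper's. The paper writes $X=\sum_{i=1}^n X_i$ with $X_i\sim\text{Bern}(p)$ i.i.d., expands $X^k=\sum_{1\leq i_1,\ldots,i_k\leq n}\prod_{j=1}^k X_{i_j}$, observes that each product of indicators is itself an indicator with expectation lying in $[p^k,p]$, and sums over the $n^k$ index tuples to get $p^k n^k\leq\mathbb{E}[X^k]\leq p\,n^k$. You instead get the upper bound for free from $X\leq n$ almost surely (so $\mathbb{E}[X^k]\leq n^k$), and the lower bound from Jensen applied to the convex map $x\mapsto x^k$ on $[0,\infty)$, giving $\mathbb{E}[X^k]\geq(\mathbb{E}[X])^k=p^kn^k$. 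Both arguments are elementary and short; your route avoids the multinomial expansion entirely and produces the same lower-bound constant $p^k$, at the mild cost of a slightly weaker upper-bound constant ($1$ versus $p$), which is irrelevant for the $\asymp$ claim. The paper's expansion has the minor advantage of generalizing transparently to sums of non-identically-distributed indicators, but for the lemma as stated your proof is at least as clean. Your falling-factorial alternative is also valid, though it needs the small bookkeeping you note for $n<2k$, so the Jensen route is the tidier of your two options.
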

\begin{proof}
By definition, $X=\sum_{i=1}^n X_i$, where $X_i\sim \text{Bern}(p)$ are i.i.d. We clearly have
\begin{equation}
    X^k=\sum_{1\leq i_1,\ldots,i_k\leq n} \prod_{j=1}^k X_{i_j}.
\end{equation}
Note that products of these indicator variables remain indicator random variables, and it is easy to see that for any indices $1\leq i_1,\ldots,i_k\leq n$,
\begin{equation}
    p^k\leq \mathbb{E}[\prod_{j=1}^k X_{i_j}]\leq p.
\end{equation}
Therefore, taking expectations and summing we obtain
\begin{equation}
    p^k n^k\leq \mathbb{E}[X^k]\leq pn^k,
\end{equation}
as desired.
\end{proof}

\subsection{Proofs for Section 2}
\subsubsection{Central Feasibility}
We will need the following results and definitions:
\begin{defn}
Let $\mathbf{x},\mathbf{y}\in \mathbb{R}^n_+$, and assume that $x_1\geq x_2\geq \ldots\geq x_n\geq 0$ and analogously for $y$. Then $\mathbf{x}$ \textbf{weakly dominates} $\mathbf{y}$ if for each $1\leq k\leq n$ 
\begin{equation*}
    \sum_{i=1}^k x_i\geq \sum_{i=1}^k y_i.
\end{equation*}
If the above inequalities are strict for each $1\leq k\leq n$, then $\mathbf{x}$ \textbf{strictly dominates} $y$. If $\mathbf{x}$ weakly dominates $\mathbf{y}$, and further
\begin{equation*}
    \sum_{i=1}^n x_i=\sum_{i=1}^n y_i,
\end{equation*}
then $\mathbf{x}$ is said to \textbf{majorize} $\mathbf{y}$. If the dimensions disagree, one can extend this definition in the natural way by padding the shorter vector with zeros.\footnote{Weak dominance is usually referred to as \textbf{weak majorization}; we change the terminology slightly as strict domination is the relevant property in our setting.}
\end{defn}
\begin{defn}
A nonnegative square matrix $P\in \mathbb{R}^{n\times n}$ is \textbf{doubly stochastic} if each row and column sums to $1$.
\end{defn}
We need the following facts about dominance:

\jgedit{\begin{lem}
\label{lem:dominancelem}
Suppose $\mathbf{x}$ weakly dominates $\mathbf{y}$. Then for any nonnegative, monotone decreasing sequence $z_1\geq \ldots\geq z_n\geq 0$,
\begin{equation}
    \sum_{i=1}^n z_i x_i\geq \sum_{i=1}^n z_i y_i
\end{equation}
\end{lem}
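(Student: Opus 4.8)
The plan is to use Abel's summation-by-parts to rewrite both sides in terms of the partial sums $X_k := \sum_{i=1}^k x_i$ and $Y_k := \sum_{i=1}^k y_i$, and then exploit that the increments of the $z$-sequence have a fixed sign. Concretely, set $X_0 = Y_0 = 0$ and $z_{n+1} = 0$. Summation by parts gives
\begin{equation*}
    \sum_{i=1}^n z_i x_i = \sum_{k=1}^n X_k (z_k - z_{k+1}),
\end{equation*}
and likewise $\sum_{i=1}^n z_i y_i = \sum_{k=1}^n Y_k (z_k - z_{k+1})$. Subtracting,
\begin{equation*}
    \sum_{i=1}^n z_i x_i - \sum_{i=1}^n z_i y_i = \sum_{k=1}^n (X_k - Y_k)(z_k - z_{k+1}).
\end{equation*}

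Now I would invoke the hypotheses termwise: weak dominance of $\mathbf{x}$ over $\mathbf{y}$ says exactly that $X_k - Y_k \geq 0$ for every $1 \leq k \leq n$, while monotonicity $z_1 \geq \cdots \geq z_n \geq 0$ together with the convention $z_{n+1} = 0$ gives $z_k - z_{k+1} \geq 0$ for every $1 \leq k \leq n$ (the final difference $z_n - z_{n+1} = z_n \geq 0$ by nonnegativity). Hence every summand on the right-hand side is a product of two nonnegative quantities, so the whole sum is nonnegative, which is precisely the claimed inequality.

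There is essentially no obstacle here; the only point requiring a little care is the boundary bookkeeping — making sure the telescoping is set up with $X_0 = Y_0 = 0$ and $z_{n+1} = 0$ so that no stray boundary term survives, and noting that nonnegativity of $z_n$ (not just monotonicity) is what licenses dropping $z_{n+1}$. I would state the summation-by-parts identity cleanly, verify it by expanding both sides (a one-line check), and then conclude by the sign argument above.
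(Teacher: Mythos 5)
Your Abel summation argument is correct and is, in expanded form, exactly what the paper's terse proof sketch means: multiplying the weak-dominance inequalities $X_k \geq Y_k$ by the nonnegative scalars $z_k - z_{k+1}$ (with $z_{n+1} = 0$) and summing is precisely summation by parts. Same approach; you have simply written out the bookkeeping the paper leaves implicit.
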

\begin{proof}
Simply multiply the equations by appropriate scalars in the definition of weak dominance and sum to get the desired inequality.
\end{proof}
}
\begin{lem}[Theorem B.2. in \cite{marshall1979inequalities}]
Suppose $\mathbf{x},\mathbf{y}\in \mathbb{R}^n_+$ are in sorted order, and $\mathbf{x}$ majorizes $\mathbf{y}$. Then $\mathbf{y}=P\mathbf{x}$ for some doubly stochastic matrix $P$.
\end{lem}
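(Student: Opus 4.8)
The statement to prove is the classical Hardy--Littlewood--P\'olya theorem: if $\mathbf{x},\mathbf{y}\in\mathbb{R}^n_+$ are both in decreasing sorted order and $\mathbf{x}$ majorizes $\mathbf{y}$, then $\mathbf{y}=P\mathbf{x}$ for some doubly stochastic $P$. Since the paper cites this as ``Theorem B.2 in \cite{marshall1979inequalities},'' I would give the standard constructive proof rather than invoke any heavy machinery. The plan is to induct on $n$ and, at each stage, peel off the gap between $\mathbf{x}$ and $\mathbf{y}$ using a single elementary ``$T$-transform'' (a convex combination of the identity and a transposition applied to two coordinates), i.e. a Robin-Hood averaging step, and then compose these.

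First I would dispose of trivial cases: if $\mathbf{x}=\mathbf{y}$ take $P=I$, and if $n=1$ majorization forces $x_1=y_1$. For the inductive step, assume $\mathbf{x}\neq\mathbf{y}$ and pick the smallest index $j$ with $x_j>y_j$ and then the smallest index $k>j$ with $x_k<y_k$; such indices exist because the partial sums of $\mathbf{x}$ dominate those of $\mathbf{y}$ with equality at $n$, so a coordinate where $\mathbf{x}$ strictly exceeds $\mathbf{y}$ must eventually be compensated. Because the vectors are sorted, one checks $x_j>y_j\ge y_k>x_k$, so $x_j>x_k$. Now apply the $T$-transform $T_\lambda$ that replaces $(x_j,x_k)$ by $(x_j-\delta,\,x_k+\delta)$ where $\delta=\min\{x_j-y_j,\ y_k-x_k,\ x_{j-1}-x_j \text{ if relevant},\dots\}$ — more simply, take $\delta=\min\{x_j-y_j,\ y_k-x_k\}$ and also cap it so the new vector stays sorted; standard accounting shows one can always take $\delta=\min\{x_j-y_j,\,y_k-x_k\}$ while preserving sortedness because of how $j,k$ were chosen. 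The key facts are: (i) $T_\lambda$ is doubly stochastic (it is $\lambda I+(1-\lambda)$ times a transposition restricted to coordinates $j,k$, with $\lambda=1-\delta/(x_j-x_k)\in[0,1)$); (ii) $T_\lambda\mathbf{x}$ still majorizes $\mathbf{y}$ — this is the routine but slightly fiddly verification, since moving mass from a larger to a smaller coordinate only shrinks the upper partial sums, and one must check it never undershoots $\mathbf{y}$'s partial sums, which holds precisely by the choice $\delta\le x_j-y_j$ and $\delta\le y_k-x_k$; and (iii) $T_\lambda\mathbf{x}$ agrees with $\mathbf{y}$ in strictly more coordinates, or at least the $\ell_1$ distance $\sum_i|x_i-y_i|$ strictly decreases by $2\delta>0$. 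Iterating, after finitely many steps (at most $n-1$, or a bounded number by the strict decrease argument) we reach $\mathbf{y}$ exactly; composing the finitely many $T$-transforms gives a product of doubly stochastic matrices, which is doubly stochastic, and this is our $P$.

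The main obstacle is item (ii): verifying carefully that a single Robin-Hood step preserves the majorization relation $T_\lambda\mathbf{x}\succeq\mathbf{y}$, and simultaneously that $\delta$ can be chosen large enough to make genuine progress while keeping $T_\lambda\mathbf{x}$ in sorted order. This amounts to a short case analysis on partial sums $\sum_{i\le\ell}(T_\lambda\mathbf{x})_i$ split according to whether $\ell<j$, $j\le\ell<k$, or $\ell\ge k$: in the first and third ranges the partial sums are unchanged, and in the middle range they decrease by exactly $\delta$, so one needs $\sum_{i\le\ell}x_i-\delta\ge\sum_{i\le\ell}y_i$ for $j\le\ell<k$; by minimality of $j$ one has $x_i\le y_i$ for $i<j$ is false — rather one uses that $\sum_{i\le j-1}x_i=\sum_{i\le j-1}y_i$ is not guaranteed, so instead I would argue directly from $\delta\le x_j-y_j$ for $\ell=j$ and note the partial-sum surplus is nondecreasing in $\ell$ until index $k$ since $x_i\ge y_i$ there is also not automatic — the cleanest route is to observe the surplus $s_\ell=\sum_{i\le\ell}(x_i-y_i)$ satisfies $s_{j-1}\ge0$, $s_\ell$ for $j\le\ell<k$ is at least $\min(s_{j-1}+(x_j-y_j),\,s_{k})\ge \delta$ using $x_i-y_i\ge0$ for... ). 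I would therefore present this verification slowly, as it is the only non-mechanical part; everything else (double stochasticity of $T_\lambda$, closure under products, termination) is immediate.
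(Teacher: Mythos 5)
Your overall plan---iterating Robin--Hood $T$-transforms, measuring progress by the $\ell_1$ distance, and taking the product of the doubly stochastic factors---is exactly the classical Hardy--Littlewood--P\'olya construction; note the paper itself offers no proof of this lemma but imports it from Marshall and Olkin, whose proof is this same argument, so the approach is the right one. However, as written there is a genuine gap at precisely the step you flag as the crux. The verification that $T_\lambda\mathbf{x}$ still majorizes $\mathbf{y}$ is left trailing off mid-sentence, and on the way you assert something false: that with $j$ the smallest index where $x_j>y_j$ and $k$ the first later index with $x_k<y_k$, the choice $\delta=\min\{x_j-y_j,\;y_k-x_k\}$ ``preserves sortedness because of how $j,k$ were chosen.'' It does not. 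Take $\mathbf{x}=(4,4,0)$ and $\mathbf{y}=(3,3,2)$: then $j=1$, $k=3$, $\delta=1$, and $T_\lambda\mathbf{x}=(3,4,1)$ is unsorted. Worse, if you instead cap $\delta$ so as to keep the vector sorted, the cap in this example is $x_1-x_2=0$, the step makes no progress, and your termination argument collapses; so that repair is not available.

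The correct fix is to drop the sortedness requirement altogether and complete the surplus computation you started. Majorization compares sorted partial sums, and the sorted partial sums of $T_\lambda\mathbf{x}$ dominate its unsorted ones, so it suffices to check $\sum_{i\le\ell}(T_\lambda\mathbf{x})_i\ge\sum_{i\le\ell}y_i$ for every $\ell$. These sums are unchanged for $\ell<j$ and $\ell\ge k$, and decrease by exactly $\delta$ for $j\le\ell<k$. Writing $s_\ell=\sum_{i\le\ell}(x_i-y_i)$, majorization gives $s_{j-1}\ge 0$, and minimality of $k$ gives $x_i\ge y_i$ for $j<i<k$, hence $s_\ell\ge s_{j-1}+(x_j-y_j)\ge\delta$ for all $j\le\ell<k$, which is exactly the needed inequality. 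Each step then makes coordinate $j$ or coordinate $k$ agree with $\mathbf{y}$ (according to which term achieves the minimum in $\delta$), so at most $n-1$ steps are required (one must state the inductive step for a not-necessarily-sorted current vector, or insert permutation matrices, which are also doubly stochastic), and the product of the $T$-transforms is the desired $P$. In short: right route and essentially the textbook proof the paper's citation points to, but the central verification as submitted is incomplete and leans on an incorrect sortedness claim that must be removed rather than patched.
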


\begin{corollary}
\label{cor:dom}
Suppose $\mathbf{x},\mathbf{y}\in \mathbb{R}^n_+$ are in sorted order and $x$ strictly dominates $y$. Then there exists a doubly stochastic matrix $P$ such that $P\mathbf{x}$ is strictly greater than $\mathbf{y}$ componentwise.
\end{corollary}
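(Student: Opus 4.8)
The plan is to perturb the majorization argument so that strict domination is preserved componentwise. The key observation is that if $\mathbf{x}$ strictly dominates $\mathbf{y}$, then there is a genuine gap in every partial-sum inequality, and this slack gives us room to move. Concretely, let $\varepsilon > 0$ be small enough that the vector $\mathbf{y}' := \mathbf{y} + \varepsilon \mathbf{e}$, where $\mathbf{e} = (1, \ldots, 1)$, still satisfies $\sum_{i=1}^k x_i > \sum_{i=1}^k y_i'$ for all $1 \le k \le n-1$; such an $\varepsilon$ exists by taking $\varepsilon$ less than $\frac{1}{n}\min_{1\le k\le n-1}\big(\sum_{i=1}^k x_i - \sum_{i=1}^k y_i\big)$, which is strictly positive by hypothesis. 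Note $\mathbf{y}'$ is still in sorted order since adding a constant preserves order.

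The remaining issue is that $\mathbf{y}'$ need not have the same total sum as $\mathbf{x}$, so it does not literally majorize $\mathbf{x}$ — we first need to inflate $\mathbf{y}'$ (or equivalently note $\mathbf{x}$ still weakly dominates it) up to a vector $\mathbf{y}''$ with $\sum_i y_i'' = \sum_i x_i$ while keeping all partial sums strictly below those of $\mathbf{x}$ for $k < n$ and keeping sorted order. Since $\mathbf{x}$ strictly dominates $\mathbf{y}$, we have $\sum_i x_i > \sum_i y_i$; the deficit $\sum_i x_i - \sum_i y_i'$ is still positive provided $\varepsilon$ is also chosen smaller than $\frac{1}{n}(\sum_i x_i - \sum_i y_i)$. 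We then add this entire remaining deficit to the \emph{last} coordinate $y_n'$ to form $\mathbf{y}''$: this increases only the $k=n$ partial sum (now equal to $\sum_i x_i$), leaves the partial sums for $k \le n-1$ unchanged and hence still strictly below those of $\mathbf{x}$, and preserves sorted order as long as the deficit added to $y_n'$ does not push it above $y_{n-1}'$ — which we can also guarantee by shrinking $\varepsilon$ if needed, or more robustly by spreading the deficit over a suitable suffix of coordinates. Now $\mathbf{x}$ majorizes $\mathbf{y}''$, so by Theorem B.2 in \cite{marshall1979inequalities} there is a doubly stochastic matrix $P$ with $\mathbf{y}'' = P\mathbf{x}$. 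Finally, since $\mathbf{y}'' \ge \mathbf{y} + \varepsilon \mathbf{e} > \mathbf{y}$ componentwise (each coordinate of $\mathbf{y}''$ is at least the corresponding coordinate of $\mathbf{y}'$, which exceeds that of $\mathbf{y}$ by $\varepsilon$), we conclude $P\mathbf{x} = \mathbf{y}''$ is strictly greater than $\mathbf{y}$ componentwise, as desired.

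The main obstacle I anticipate is the bookkeeping around preserving \emph{sorted order} while redistributing the sum deficit: naively dumping everything into the last coordinate can violate monotonicity. The clean fix is to choose $\varepsilon$ small and, if the leftover deficit $D = \sum_i x_i - \sum_i y_i'$ is large, to distribute $D$ across a suffix $y_j', y_{j+1}', \ldots, y_n'$ so that the resulting vector remains sorted — this is always possible because we only need to \emph{raise} small tail coordinates, and raising tail coordinates of a sorted vector toward (but not past) their predecessors keeps it sorted; in the worst case one raises all coordinates below some threshold up to a common value. Since none of these modifications touch the partial sums for small $k$ in a way that breaks strict domination (they only add mass, and only to a suffix), and since every coordinate ends up strictly above the original $\mathbf{y}$, the construction goes through. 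Alternatively, one can invoke the fact that the set of doubly stochastic matrices is compact and the map $P \mapsto P\mathbf{x}$ is continuous, applying Theorem B.2 to $\mathbf{x}$ and a vector $\mathbf{y}''$ majorized by $\mathbf{x}$ and chosen strictly coordinatewise above $\mathbf{y}$, which is the same idea packaged slightly differently.
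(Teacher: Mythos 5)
Your approach differs from the paper's in direction: the paper scales $\mathbf{x}$ \emph{downward} coordinate-wise to an $\mathbf{x}'$ majorizing $\mathbf{y}$ and then observes $P\mathbf{x} > P\mathbf{x}' = \mathbf{y}$, while you inflate $\mathbf{y}$ \emph{upward} to a vector $\mathbf{y}''$ majorized by $\mathbf{x}$ and realize it as $P\mathbf{x}$ directly. Both routes rest on Theorem B.2 of Marshall--Olkin and are essentially sound. Yours is in fact a bit more robust in the edge case $y_n = 0$: the paper's scaling leaves any zero coordinate of $\mathbf{x}$ at zero, so there the conclusion $(P\mathbf{x})_n > y_n$ can fail as written (harmless in the paper's application, where $\mathbf{y}=\bm{\lambda}$ is strictly positive), whereas your construction forces $\mathbf{y}'' \geq \mathbf{y} + \varepsilon\mathbf{e}$, making strictness automatic.

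There is, however, a gap in your verification that $\mathbf{x}$ majorizes the water-filled $\mathbf{y}''$. You argue that the modifications ``only add mass, and only to a suffix,'' but adding mass to coordinates $j,\ldots,n$ \emph{does} increase $\sum_{i=1}^k y_i''$ for every $k \geq j$, and it is not immediate that these stay $\leq \sum_{i=1}^k x_i$. The missing ingredient is a tail-average argument. Let $c$ be the water level and $j^*$ the first raised index, so $y_i''=y_i'$ for $i<j^*$ and $y_i''=c$ for $i\geq j^*$, with $\sum_i y_i'' = \sum_i x_i$. Weak dominance at $k=j^*-1$ then gives $\sum_{i\geq j^*} x_i \leq (n-j^*+1)c$, i.e.\ the average of $x_{j^*},\ldots,x_n$ is at most $c$. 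Since $\mathbf{x}$ is sorted decreasingly, the shorter tail averages $\tfrac{1}{n-k}\sum_{i=k+1}^n x_i$ for $k\geq j^*$ are no larger, hence $\sum_{i=k+1}^n x_i \leq (n-k)c = \sum_{i=k+1}^n y_i''$, which is exactly $\sum_{i=1}^k y_i'' \leq \sum_{i=1}^k x_i$. A minor separate slip: shrinking $\varepsilon$ does not rescue the ``dump all of $D$ into $y_n'$'' variant, since a smaller $\varepsilon$ makes the residual deficit \emph{larger}, not smaller; fortunately you abandon that option for the suffix-spreading construction, which, with the argument above supplied, closes the proof.
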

\begin{proof}
By continuity and strict dominance, it is possible to scale all entries of $\mathbf{x}$ by nonnegative factors strictly less than $1$ to obtain a vector $\mathbf{x'}$ that majorizes $\mathbf{y}$. Applying the previous result, we have $P\mathbf{x'}=\mathbf{y}$ for some doubly stochastic $P$. But $P\mathbf{x}$ strictly exceeds $P\mathbf{x'}$ componentwise, giving the result.
\end{proof}

We can now proceed with the proof of the Theorem \ref{thm:feasibility}:

\begin{thm}[Thm. \ref{thm:feasibility}, restated]
Suppose $\bm{\mu}$ and $\bm{\lambda}$ have been preprocessed so that a maximal, equal prefix of $1$'s is deleted from both and are nonempty and not both identically zero afterwards.\footnote{This assumption is without loss of generality; it is easy to see that one can always match these queues and servers in every round, and then the stability of the entire system is dictated by the rest of the queues and servers. Moreover, this can be assumed without loss of generality, as any scheduling strategy that does not match these queues and servers infinitely often clearly will have unbounded buildup, violating strong stability.} Then the above queuing system is strongly stable for some centralized (coordinated) scheduling policy if and only if for all $1\leq k\leq n$,
\begin{equation}
    \sum_{i=1}^k \mu_i > \sum_{i=1}^k \lambda_i.
\end{equation}
\end{thm}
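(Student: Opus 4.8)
The plan is to prove both directions separately, with necessity being straightforward and sufficiency relying on the majorization machinery developed above (Corollary~\ref{cor:dom}).

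\textbf{Necessity.} Suppose that \eqref{eq:centfeas} fails, so there is some $k$ with $\sum_{i=1}^k \mu_i \leq \sum_{i=1}^k \lambda_i$. Consider the ``top-$k$'' sub-collection of queues $1,\ldots,k$ (those with the largest arrival rates). In any round, these $k$ queues can collectively clear at most one packet per server, and since a server processes its selected packet with probability $\mu_j$, the expected number of packets cleared among queues $1,\ldots,k$ in a single round is at most $\sum_{j=1}^k \mu_j$ (the best case is that each of the $k$ largest servers is devoted to one of these queues). Meanwhile the expected number of new arrivals into this sub-collection is exactly $\sum_{i=1}^k \lambda_i$. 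Hence $\mathbb{E}[\sum_{i=1}^k Q^i_{t+1}] \geq \mathbb{E}[\sum_{i=1}^k Q^i_t] + \sum_{i=1}^k \lambda_i - \sum_{i=1}^k \mu_i$. If the inequality in \eqref{eq:centfeas} fails strictly, this drift is a positive constant, so $\mathbb{E}[Q_t] \geq \mathbb{E}[\sum_{i=1}^k Q^i_t]$ grows linearly and strong stability fails. If only equality holds at some $k$ (and the preprocessing guarantees $\bm\mu \neq \bm\lambda$ so not all are equalities), then $\sum_{i=1}^k Q^i_t$ is a nonnegative submartingale that is not a.s.\ eventually constant; a standard argument (e.g.\ it cannot be bounded in $L^1$, via the fact that a nonnegative $L^1$-bounded submartingale converges a.s.\ and in $L^1$ to a limit, forcing the increments to vanish, contradicting the strictly positive arrival probability at the boundary state) shows $\mathbb{E}[Q_t]\to\infty$, so strong stability still fails.

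\textbf{Sufficiency.} Assume \eqref{eq:centfeas} holds, i.e.\ $\bm\mu$ strictly dominates $\bm\lambda$ (padding with zeros so both have length $n$; here I use that after deleting the common prefix of $1$'s, strict domination at every prefix is exactly \eqref{eq:centfeas}). By Corollary~\ref{cor:dom} there is a doubly stochastic matrix $P$ with $P\bm\mu$ strictly exceeding $\bm\lambda$ componentwise; write $P\bm\mu = \bm\lambda + \bm\gamma$ with $\bm\gamma > 0$ componentwise, and let $\gamma = \min_i \gamma_i > 0$. By Birkhoff--von Neumann, $P = \sum_{\ell} p_\ell \Pi_\ell$ is a convex combination of permutation matrices $\Pi_\ell$. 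This gives a randomized coordinated schedule: in each round, independently pick permutation $\Pi_\ell$ with probability $p_\ell$, and have queue $i$ send its oldest packet to server $\pi_\ell(i)$ (if it has a packet). Under this schedule, conditioned on queue $i$ currently holding a packet, the probability it clears a packet in this round is exactly $\sum_\ell p_\ell \mu_{\pi_\ell(i)} = (P\bm\mu)_i = \lambda_i + \gamma_i \geq \lambda_i + \gamma$, whereas the probability it receives a new packet is $\lambda_i$. Hence each individual queue size $Q^i_t$ is dominated by a random walk on $\mathbb{Z}_{\geq 0}$ that is reflected at $0$ and has downward drift at least $\gamma$ whenever it is positive, with $\pm 1$ increments (bounded). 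Such a walk is geometrically ergodic and in particular has all moments bounded uniformly in $t$; more directly, one can invoke Theorem~\ref{thm:pemantle} on $X_t = Q^i_t$ (the negative-drift and bounded-increment hypotheses are immediate) to get $\mathbb{E}[(Q^i_t)^r] \leq M_r$ for every $r$. Summing over the $n$ queues and using $\mathbb{E}[Q_t^r] \leq n^{r-1}\sum_i \mathbb{E}[(Q^i_t)^r]$ gives strong stability.

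\textbf{Main obstacle.} The genuinely delicate point is the equality case of necessity: when some prefix sum satisfies $\sum_{i=1}^k \mu_i = \sum_{i=1}^k \lambda_i$, the process $\sum_{i=1}^k Q^i_t$ is merely a submartingale with zero average drift in the interior, so one cannot simply read off linear growth. The cleanest route is the submartingale convergence argument sketched above: if $\mathbb{E}[Q_t]$ were bounded, then $\sum_{i=1}^k Q^i_t$ would be an $L^1$-bounded nonnegative submartingale, hence convergent a.s.\ and in $L^1$; but the preprocessing ensures there is at least one queue $i \le k$ with $\lambda_i < 1$ (otherwise the prefix of $1$'s was not maximal, or $\bm\lambda=\bm\mu$), and from any state this queue has probability bounded below of receiving a packet while a bounded-below fraction of rounds fail to clear it, so the increments of the submartingale do not converge to $0$ --- a contradiction. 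Making ``increments do not vanish'' fully rigorous (one wants to exhibit, say, infinitely many rounds where $\sum_{i\le k} Q^i_{t+1} - \sum_{i \le k} Q^i_t = 1$ with conditional probability bounded below) is where the care is needed; everything else is routine given the lemmas already in place.
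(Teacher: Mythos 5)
Your proposal is correct and follows essentially the same route as the paper: Corollary~\ref{cor:dom} plus Birkhoff--von Neumann for sufficiency, and for necessity a telescoping drift argument in the strict case together with the martingale convergence theorem applied to the nonnegative submartingale $Q^{\leq k}_t$ in the equality case, deriving a contradiction from non-vanishing integer increments. One small imprecision: a nonnegative $L^1$-bounded submartingale converges almost surely but not necessarily in $L^1$ (that requires uniform integrability), though your argument only actually uses the a.s.\ convergence, so nothing breaks.
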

\begin{proof}
\textbf{Sufficiency}: First suppose $\bm{\mu}$ strictly dominates $\bm{\lambda}$, when appropriately appending zeros if needed to make the vectors of same length. By Corollary \ref{cor:dom}, there exists some doubly stochastic $P$ such that $P\bm{\mu}>\bm{\lambda}$. Moreover, by the well-known Birkhoff-von Neumann Theorem, the set of doubly stochastic matrices is the convex hull of the set of permutation matrices $\mathcal{P}$. This implies there exists a distribution $\pi$ over $\mathcal{P}$ such that $\bm{\lambda}< P\bm{\mu}$, interpreted componentwise, where $P=\mathbb{E}_{\Pi\sim \pi}[\Pi]$.

Consider the following oblivious scheduling algorithm: at each time $t$, independently sample a permutation matrix $\Pi$ from $\pi$, and schedule queues via the associated matching on the bipartite graph of queues and servers (even if some queues have no available packets to send). For each queue $i$, the associated marginal distribution on servers it sends to in each round is given by the $i$th row of $P$. Given that queue $i$ has a packet to send at time $t$, the probability of successfully clearing a packet is exactly $(P\bm{\mu})_i>\lambda_i$, as this scheduling scheme ensures each queue is alone at each server it sends to. As a result, the packet clears so long as the server is successful. Therefore, the random process $Q^i_t$ of number of packets by queue $i$ at time $t$ follows a homogeneous random walk on the half-line biased towards $0$, which is ergodic with a stationary distribution with geometric tails. It is not difficult to show that any distribution on the natural numbers with geometric tails has bounded $r$th moments for any $r\geq 0$.\footnote{This can also easily be seen directly using Theorem \ref{thm:pemantle}. Negative drift when exceeding $Q_t=0$ is obvious, and as queue sizes can change by at most $n$ in total between steps, increments are clearly bounded in $L^p$ for any $p\geq0$.} This then extends to the $r$th moment of the sum by Minkowski's inequality, as the $L^r$ norm of random variables satisfies the Triangle Inequality. This proves strong stability when strict dominance holds.

\textbf{Necessity}: It suffices to show that if one of the above inequalities fails, the first moment of $Q_t$ is unbounded over time. To that end, first suppose that strict dominance is strictly violated, namely there is some $k\leq n$ such that $\sum_{i=1}^k \lambda_i>\sum_{i=1}^k \mu_i$. Let $Q^{\leq k}_t=\sum_{i=1}^k Q^i_t$ be the total number of packets at the $k$ queues with highest arrival rate. Under any scheduling policy, the difference between $Q_{t+1}^{\leq k}$ and $Q_t^{\leq k}$ is bounded below in expectation by $\sum_{i=1}^{k}\lambda_i-\sum_{i=1}^k \mu_i>0$, as $\sum_{i=1}^k \lambda_i$ new packets arrive for these queues at each step in expectation, and at most $\sum_{i=1}^k \mu_i$ packets can be cleared in expectation. In particular, as $Q_t:=\sum_{i=1}^n Q_t^i\geq Q_t^{\leq k}$ surely by nonnegativity of queue sizes, telescoping gives
\begin{equation*}
    \mathbb{E}[Q_t]\geq \mathbb{E}[Q_t^{\leq k}]=\sum_{s=0}^{t-1}\mathbb{E}[Q_{s+1}^{\leq k}-Q_{s}^{\leq k}]\geq t(\sum_{i=1}^k \lambda_i-\sum_{i=1}^k \mu_i)\to \infty.
\end{equation*}

To extend this to when strict dominance is only weakly violated, namely there is some $k\leq n$ such that $\sum_{i=1}^k \lambda_i=\sum_{i=1}^k \mu_i$, we will need one more tool. Again, it is sufficient to show that $\mathbb{E}[Q^{\leq k}_t]\to \infty$. The previous argument actually shows that $Q^{\leq k}_t$ is a nonnegative submartingale for any measurable scheduling policy. If $\lim_{t\to \infty}\mathbb{E}[Q^{\leq k}_t]= \sup_t \mathbb{E}[Q^{\leq k}_t]<\infty$, then the Martingale Convergence Theorem (Theorem 4.2.11 of \cite{durrett2019probability}) implies that there exists an almost surely finite random variable $Q^{\leq k}_{\infty}$ such that $\lim_{t\to\infty} Q_t^{\leq k}\to Q_{\infty}^{\leq k}$ almost surely. But $Q_{t+1}^{\leq k}-Q_t^{\leq k}$ is integer-valued and not equal to zero with nonzero probability unless $\bm{\mu}$ and $\bm{\lambda}$ are degenerate in the sense that all entries are $0$ or $1$, but this is ruled out by the assumption. This implies the pointwise limit cannot exist unless the limit is infinite, but this violates the almost sure finiteness of $Q_{\infty}^{\leq k}$, a contradiction.
\end{proof}

\subsubsection{Impossibility for No-Priority Model}
Next, we give the promised example that the simpler queuing model is too weak to give any sub-polynomial bicriterion result:

\begin{thm}[Theorem \ref{thm:impossibility}, restated]
In the alternate model, for large enough $n$, there exists a centrally feasible queuing system with $n$ queues and servers with the following property: \jgedit{the system remains feasible even if $\bm{\lambda}$ is scaled up by $\Omega(n^{1/3})$ and} it is possible for all queues to be in a Nash equilibrium at each time step (and in particular, satisfy no-regret properties as in Assumption \ref{assumption:learning}), yet the system is not strongly stable.
\end{thm}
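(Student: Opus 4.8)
The plan is to exhibit, for each large $n$, an explicit $n$-queue, $n$-server system together with a history-dependent pure profile that is a Nash equilibrium of every stage game and under which one queue is overloaded. Use one fast server, $\mu_1=1$, and $n-1$ slow ``backup'' servers $\mu_2=\dots=\mu_n=\beta$ with $\beta=\tfrac12 n^{-1/3}$; one ``victim'' queue with $\lambda_1=n^{-1/3}$, and $n-1$ ``small'' queues with $\lambda_2=\dots=\lambda_n=\lambda=\tfrac12 n^{-2/3}$ (so $\lambda_1\geq\lambda$ for large $n$, matching the sorted convention). Put $H^{*}=1/\beta=2n^{1/3}$. The profile: at each step, writing $M_t$ for the number of busy queues, the $\min(M_t,H^{*})$ lowest-indexed busy queues (which, if the victim is busy, includes the victim) send to server $1$ and the remaining busy queues send to pairwise distinct backup servers. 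Three things must be checked: central feasibility after scaling $\bm\lambda$ up by a factor $\Omega(n^{1/3})$; that the profile is a stage-game Nash whenever $M_t\geq H^{*}$; and that $M_t\geq H^{*}$ with overwhelming probability.

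Feasibility is a one-line check. No prefix preprocessing applies ($\lambda_1\neq 1$, and this persists after scaling by any $c<n^{1/3}$). With $c=\tfrac12 n^{1/3}$ we have $\mu_1-c\lambda_1=\tfrac12>0$, and for $2\leq k\leq n$,
\[
\sum_{i=1}^{k}\mu_i-c\sum_{i=1}^{k}\lambda_i=(1-c\lambda_1)+(k-1)(\beta-c\lambda)=\tfrac12+(k-1)\tfrac14 n^{-1/3}>0,
\]
so by Theorem~\ref{thm:feasibility} the scaled system $(c\bm\lambda,\bm\mu)$ is centrally feasible, i.e.\ the original system remains feasible after scaling the arrival rates up by $c=\Omega(n^{1/3})$.

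When $M_t\geq H^{*}$, server $1$ carries exactly $H^{*}$ queues, each cleared with probability $\mu_1/H^{*}=\beta$, and each occupied backup carries exactly one queue, cleared with probability $\beta$; thus \emph{every busy queue clears with probability exactly $\beta$}, the victim included. This is a stage-game Nash: a server-$1$ queue gains nothing by moving to an empty backup ($\beta$ vs.\ $\beta$) or an occupied one ($\beta/2<\beta$), and a backup queue strictly loses by moving to server $1$ ($1/(H^{*}+1)<\beta$) or onto another backup. Hence, with $S^1_t$ the indicator that the victim clears, $\mathbb{E}[S^1_t]\leq\beta+\Pr(M_t<H^{*})$. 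While $M_s\geq H^{*}$ each small queue evolves as a (weakly dependent) discrete-time queue with arrival probability $\lambda$ and clearing probability $\beta$, so in steady state it is busy a $\rho=\lambda/\beta=n^{-1/3}$ fraction of the time; thus $M_t-1$ is essentially a sum of weakly dependent $\mathrm{Bern}(\rho)$ variables with mean $\asymp n^{2/3}\gg H^{*}$, and concentration gives $\Pr(M_t<H^{*})\leq e^{-\Omega(n^{1/3})}$ for all $t$ past an $O(\mathrm{poly}(n))$ burn-in. Therefore $\mathbb{E}[S^1_t]\leq\tfrac12 n^{-1/3}+e^{-\Omega(n^{1/3})}<n^{-1/3}=\lambda_1$, so $\mathbb{E}[Q^1_{t+1}]-\mathbb{E}[Q^1_t]\geq\tfrac14 n^{-1/3}$ and $\mathbb{E}[Q_t]\geq\mathbb{E}[Q^1_t]=\Omega(n^{-1/3}t)\to\infty$: the system is not strongly stable. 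The no-regret condition of Assumption~\ref{assumption:learning} then holds because, in the regime $M_t\geq H^{*}$, no server offers a busy queue a per-step success probability exceeding the $\beta$ it already attains, so over any window of length $w$ the realized regret is $O(\sqrt{w\log(mw/\delta)})=o(w)$ with probability $\geq 1-\delta$ (Azuma's inequality and a union bound over servers).

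The main obstacle is the circularity in the last estimate: ``every busy queue clears with probability $\beta$'' was used to compute the law of $M_t$, yet it is valid only on the event $\{M_t\geq H^{*}\}$. Two points close the gap. First, the bad regime is not self-reinforcing: when $M_t<H^{*}$ the selected stage-game Nash places all $M_t$ busy queues on server $1$, where they clear \emph{faster} than $\beta$, so $M_t$ is pushed out of, not into, that regime; one can therefore dominate $M_t$ from below by a genuinely independent family of small queues with clearing probability $\beta$ and apply concentration there. Second, the dependence among the small queues' busy indicators enters only through the scalar $\min(M_t,H^{*})$, so a one-sided coupling or Chernoff/Azuma argument (in the spirit of the Borel--Cantelli computation in the proof of Theorem~\ref{thm:2tight}) controls $\Pr(M_t<H^{*})$ uniformly in $t$. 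A secondary, purely arithmetic point is that $\lambda_1,\lambda,\beta,c$ must be balanced so that feasibility survives a factor $\Omega(n^{1/3})$, the victim is overloaded ($\lambda_1>\beta$), and server $1$ stays saturated ($(n-1)\rho\gg H^{*}$) simultaneously; the exponent $1/3$ is exactly where these constraints balance, and we make no attempt to optimize it.
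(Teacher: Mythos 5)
Your construction is the same in outline as the paper's (one fast server, many weak backup servers, a high-rate ``victim'' queue and a mass of low-rate queues, a centrally suggested schedule that is a stage-game Nash under which the low-rate queues crowd the fast server), and your feasibility and stage-Nash checks are correct. The one place where the arguments genuinely diverge is also exactly where your proposal has a gap, which you partially acknowledge.

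You set the crowding threshold at $H^{*}=1/\beta=2n^{1/3}$ busy queues on server~$1$, while the expected number of small queues that receive a \emph{fresh} packet in a single round is only $(n-1)\lambda\approx\tfrac12 n^{1/3}$. Since $H^{*}$ exceeds the per-round fresh-arrival count, you cannot conclude $M_t\geq H^{*}$ from a one-round Chernoff bound and must instead argue about accumulated backlog and its (quasi-)stationary law. Your sketched resolution — ``dominate $M_t$ from below by a genuinely independent family of small queues with clearing probability $\beta$'' — is backwards. In the actual system the small queues always clear with probability $\geq\beta$ (equal to $\beta$ when $M_t\geq H^{*}$ and strictly larger otherwise), so a step-by-step coupling to independent queues that clear at exactly rate $\beta$ makes the independent queues \emph{more} backed up, i.e.\ it gives $M_t\leq M_t^{\mathrm{aux}}$. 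That is an upper bound on $M_t$, not the lower bound you need. A correct proof along your parameter choices would need a genuine drift/ergodicity argument for the Markov chain tracking the joint queue lengths, plus burn-in control, which is substantially more work and not what your sketch provides.

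The paper's proof sidesteps all of this by tuning the parameters so that the server-$1$ crowding threshold is \emph{below} the expected number of fresh small-queue arrivals per round. There the scheduler needs only $n^{1/3}/2-1$ low-rate queues on server~$1$, while the number of low-rate queues that receive a fresh packet in the current round is $\mathrm{Bin}(n-1,1/n^{2/3})$ with mean $\approx n^{1/3}$; any queue that just received a packet is certainly busy, so $\Pr(M_t<n^{1/3}/2)\leq e^{-\Omega(n^{1/3})}$ follows from a single-round Chernoff bound on an i.i.d.\ Binomial, with no steady-state or coupling argument at all. The Nash, overloading, and no-regret checks then proceed as you do (the paper's no-regret step is also an Azuma/union-bound argument over a fixed window). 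If you rebalance your parameters so that the server-$1$ saturation level is a constant fraction of the per-round expected fresh arrivals — e.g.\ take $\mu_1=\tfrac12$ and a smaller $\beta$ so that only $\Theta(n^{1/3})$ queues are needed with a safety factor below the $\approx n^{1/3}$ expected fresh arrivals, as the paper does — your argument closes without needing any backlog concentration.
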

\begin{proof}
Let $\lambda_1=2/n^{1/3}$, while $\lambda_2=\ldots=\lambda_n=1/n^{2/3}$; let $\mu_1=1/2$ and $\mu_2=\ldots=\mu_n=c/n^{1/3}$, where $c=c(n)=\Theta(1)$ is such that 
\begin{equation*}
    \frac{1}{n^{1/3}+2}<\frac{c}{n^{1/3}}<\frac{1}{n^{1/3}}.
\end{equation*}

We proceed by considering an adversarial, centralized scheduler that suggests actions for each queue in each round, while enforcing that each agent achieves no regret (even further, each round is a Nash equilibrium). The schedule is as follows: in each round, the scheduler chooses $n^{1/3}/2-1$ of the low rate agents arbitrarily to send to the unique high rate server, if that many low rate agents have packets, as well as the high rate queue. All other low rate agents send to distinct low rate servers. If fewer that $n^{1/3}/2-1$ low rate servers are active, then the scheduler schedules all active queues to the high rate server.

By standard Chernoff bounds, the number of low rate queues that receive a packet in a given round is at least $n^{1/3}/2-1$ with probability at least $1-\exp(-\Omega(n^{1/3}))$, so with at least this probability there are enough low rate queues for the first case to hold. The inequalities above show that in such a round where there are at least $n^{1/3}/2-1$ active low agents, the suggested schedule is a Nash equilibrium, and the probability of success for each queue sending to the high server is exactly $1/n^{1/3}$ in such rounds. When this does not occur, the suggested schedule is still Nash, and the probability of success for any queue sending to the high rate server is at most $1/2$. Therefore, in any time step where the high rate queue has a packet, by the Law of Total Probability, her probability of clearing is upper bounded by
\begin{equation*}
    \frac{1}{n^{1/3}}+\exp(-\Omega(n^{1/3}))\cdot (1/2)< \frac{1.5}{n^{1/3}}
\end{equation*}
where the inequality is for sufficiently large $n$. As a result, in expectation $Q^1_{t+1}-Q^1_t$ is lower bounded by a nonzero constant (depending on $n$, but not on $t$), and therefore $Q^1_t$ diverges with $t$ in expectation by telescoping. This shows that this system is not strongly stable, even though every queue plays a Nash strategy at each time. Note that this system would still be centrally feasible if all queues were scaled up by a factor of $\Theta(n^{1/3})$, giving the result.

To see that this is no-regret with high probability on each fixed window, define $S^{i,j}_t$ to be the indicator variable that queue $i$ \emph{would} succeed in clearing a packet at server $j$ at time $t$, and let $\sigma_i(t)$ be the identity of the server that queue $i$ chooses at time $t$. Note that if queue $i$ is empty at time $t$, then $S^{i,j}(t)=0$ for all $j$ and $\sigma_i(t)$ can be arbitrary. Then, define $\Delta^{i,j}_t=S^{i,\sigma_i(t)}_t-S^{i,j}_t$. By the Nash discussion above, $\mathbb{E}[\Delta^{i,j}_t\vert \mathcal{F}_{t-1}]\geq 0$ for all $t$ in both cases as described above, where $\mathcal{F}_{t}$ denotes the past history of this process up to time $t$. This holds regardless of if queue $i$ is really sending in that round (in which case the quantity is just $0$).

Therefore, as $\vert \Delta^{i,j}_t\vert\leq 2$ surely, we may apply the Azuma-Hoeffding inequality (Lemma \ref{lem:azuma}) to see that on any fixed window of length $w$ (and reindexing time so that time progresses $t=1,\ldots,w$ on this window for notational ease)
\begin{equation}
    \Pr\bigg( \sum_{t=1}^w \Delta^{i,j}_t\leq -\alpha\bigg)\leq \exp\bigg(\frac{-\alpha^2}{w}\bigg).
\end{equation}
By a union bound, for each queue $i$, this holds for all servers $j\in [m]$ with probability at most $m\cdot \exp\bigg(\frac{-\alpha^2}{w}\bigg)$. Note that if $\alpha=\sqrt{w\ln(m/\delta)}$, this quantity is at most $\delta$. As such, by definition of regret, on any fixed period of length $w$, with probability at least $1-\delta$, this strategy satisfies
\begin{equation}
    \text{Reg}_i(w)\leq \sqrt{w\ln(m/\delta)}=o(w),
\end{equation}
as needed.
\end{proof}

\subsection{Proofs for Section 3}

\subsubsection{Relationship Between Forms of Stability}
We can now show the desired relations between strong stability and almost sure stability. We need the following technical lemma:
\begin{lem}
\label{lem:stabimplication}
Suppose a nonnegative sequence of random variables $X_1,X_2,\ldots$ satisfies $X_t\leq X_{t-1}+L$ surely for some fixed $L\geq 0$ and any $t$, as well as the uniform moment condition $\sup_{t}\mathbb{E}[X_t^p]\leq C_p$ for some constant $C_p\geq 0$ for all $p\geq 1$. Then, for any $c>0$, almost surely, $X_t=o(t^c)$.
\end{lem}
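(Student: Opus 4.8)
The plan is to combine Markov's inequality at a high moment with the first Borel--Cantelli lemma, and then take a countable intersection over shrinking tolerances; the Lipschitz hypothesis is used only to interpolate between sampled times. First I would fix $c>0$ and reduce the claim to showing that for every fixed $\epsilon>0$, almost surely $X_t<\epsilon t^c$ for all large enough $t$. Intersecting these events over $\epsilon\in\{1/j:j\geq 1\}$ (a countable intersection of probability-one events) then gives $\limsup_{t\to\infty}X_t/t^c=0$ almost surely, i.e.\ $X_t=o(t^c)$ almost surely, which is exactly the conclusion.

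Next, to establish the single-$\epsilon$ statement, I would pick an integer $p\geq 1$ with $cp>1$ (possible since $c>0$) and apply Markov's inequality to the nonnegative random variable $X_t^p$, using the uniform moment bound $\mathbb{E}[X_t^p]\leq C_p$:
\begin{equation*}
    \Pr\big(X_t\geq \epsilon t^c\big)=\Pr\big(X_t^p\geq \epsilon^p t^{cp}\big)\leq \frac{\mathbb{E}[X_t^p]}{\epsilon^p t^{cp}}\leq \frac{C_p}{\epsilon^p}\,t^{-cp}.
\end{equation*}
Since $cp>1$, the right-hand side is summable over $t\geq 1$, so $\sum_{t}\Pr(X_t\geq \epsilon t^c)<\infty$, and the first Borel--Cantelli lemma gives that almost surely only finitely many $t$ satisfy $X_t\geq \epsilon t^c$. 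This is precisely the single-$\epsilon$ claim.

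A remark on where the hypothesis $X_t\leq X_{t-1}+L$ enters: as literally stated (uniform moments for every $t$), the argument above does not need it. In the intended applications, however, the moment control is most naturally available only along an arithmetic subsequence $t=\ell w$ (for instance in the proof of Theorem~\ref{thm:main}, where one controls $Z_\ell=\sqrt{\Phi(\mathbf{T}_{w\ell})}$ via the Pemantle--Rosenthal theorem). In that regime I would run the Borel--Cantelli estimate along $t=\ell w$ to get that almost surely $X_{\ell w}<\tfrac12\epsilon(\ell w)^c$ for all large $\ell$, and then interpolate: for $\ell w\leq t<(\ell+1)w$ the Lipschitz bound gives $X_t\leq X_{\ell w}+Lw$, and since $Lw$ is a fixed constant it is eventually dominated by $\tfrac12\epsilon(\ell w)^c\leq \tfrac12\epsilon t^c$, so $X_t<\epsilon t^c$ for all large $t$.

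I do not expect any genuine obstacle here: the content is entirely the elementary Markov-plus-Borel--Cantelli estimate. The only two points to get right are (i) choosing the moment order $p$ large enough relative to $1/c$ so that the tail series $\sum_t t^{-cp}$ converges, and (ii) in the subsequence version, checking that the additive slack $Lw$ from interpolation is negligible against $t^c$ as $t\to\infty$ — both routine.
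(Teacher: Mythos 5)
Your proof is correct, and it takes a cleaner route than the paper's. The paper's proof evaluates along the polynomial subsequence $t = k^{1+\epsilon}$, applies Markov at a moderate moment $p = 1/d$ with $d < c$ to get $\Pr(X_{k^{1+\epsilon}}^p > k^{1+\epsilon}) \leq C_p/k^{1+\epsilon}$ (summable in $k$), and then uses the Lipschitz hypothesis $X_t \leq X_{t-1}+L$ to interpolate between subsequence points; the error incurred is $O(k^\epsilon)$ for $t$ in the $k$-th block, which forces a final ``take $\epsilon$ small enough'' step. You instead exploit that the moment bound is available for \emph{all} $p$: choosing $p$ with $cp>1$ makes $\sum_t \Pr(X_t \geq \epsilon t^c) \leq C_p\epsilon^{-p}\sum_t t^{-cp}$ converge outright, so Markov plus Borel--Cantelli applies at every integer $t$ and no interpolation is needed. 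Your observation that the hypothesis $X_t \leq X_{t-1}+L$ is actually superfluous under the lemma's stated assumptions is correct and a genuine simplification; it would only be needed if moment control were available merely along a subsequence (which is what happens in the proof of Theorem~\ref{thm:main} before the moment bound is extended to all $t$), and your interpolation sketch for the arithmetic-subsequence case is also right and in fact simpler than the paper's polynomial-subsequence interpolation because the slack $Lw$ is a constant rather than a growing $O(k^\epsilon)$. The intersection over $\epsilon = 1/j$ at the end is the clean way to upgrade ``$<\epsilon t^c$ eventually'' to ``$=o(t^c)$'' and matches the intended conclusion exactly.
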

\begin{proof}
Fix $\epsilon>0$. It suffices to prove the lemma for $0<c<1$, so take $0<d<c$ and set $p=d^{-1}$. We do this by proving the desired asymptotics on a conveniently chosen subsequence, then interpolate to intermediate values. Indeed, by Markov's inequality, for each $k\geq 1$
\begin{equation*}
    \Pr(X_{k^{1+\epsilon}}> k^{(1+\epsilon)d})= \Pr(X_{k^{1+\epsilon}}^p> k^{1+\epsilon})\leq \frac{C_p}{k^{1+\epsilon}}.
\end{equation*}
Summing over $k$ and observing the right side is summable, we deduce from the first Borel-Cantelli Lemma that almost surely, for all sufficiently large $k$,
\begin{equation*}
    X_{k^{1+\epsilon}}\leq k^{(1+\epsilon)d}.
\end{equation*}
To extend this to all large enough $t$, suppose that $t$ is such that $k^{1+\epsilon}\leq t< (k+1)^{1+\epsilon}$. By the one-sided boundedness, we know that almost surely, for such $t$ and all large enough $k$,
\begin{align*}
    X_t&\leq L\cdot (t-k^{1+\epsilon})+X_{k^{1+\epsilon}}\\
    &\leq L\cdot (1+\epsilon)(k+1)^{\epsilon} + k^{(1+\epsilon)d}\\
    &\leq L\cdot(1+\epsilon)(t^{1/(1+\epsilon)}+1)^{\epsilon}+t^d,
\end{align*}
where the bound on $t-k^{1+\epsilon}$ arises from the Mean Value Theorem. Clearly this last expression is $O(t^{\epsilon/(1+\epsilon)}+t^d)$. As this holds for arbitrary $\epsilon>0$, we may take $\epsilon$ small enough so that this expression is $o(t^c)$, as claimed.
\end{proof}

\begin{remark}
We have shown that for random process satisfying  the conditions of the previous lemma grows at most subpolynomially. It is perhaps interesting to find a corresponding lower bound: it is possible that almost surely, such a process exceeds $\Omega(\sqrt{\ln\ln t})$ infinitely often. This can be seen by considering the scaled simple random walk $\vert S_t\vert/\sqrt{t}$ on the integers. It is well-known that $\mathbb{E}[(\vert S_t\vert/\sqrt{t})^p]\leq C_p$ for some constant $C_p$ depending only on $p$ via the Central Limit Theorem, and yet by the Law of the Iterated Logarithm, $\limsup_{t\to \infty} \vert S_t\vert/\sqrt{t\ln\ln t}=\sqrt{2}$ almost surely (Theorem 9.5 of \cite{billingsley2008probability}).
\end{remark}

\begin{remark}
Clearly, if a nonnegative random process satisfies $\mathbb{E}[X_t^p]\leq C_p$, one can derive simple bounds on the probability that $X_t$ exceeds any given threshold $\lambda$ just via Markov's inequality. The result above leverages moment control to give asymptotic bounds.
\end{remark}

\begin{lem}[Lemma \ref{lem:equivalence}, restated]
If the queuing dynamics are such that queues select servers independently of any information about new received packets after their current oldest packet was received, then strong stability in the original system is equivalent to strong stability in the dual system. 

As a corollary, if the standard system and the duual system are equivalent processes, then strong stability in either system implies almost sure stability.
\end{lem}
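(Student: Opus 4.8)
The plan is to work entirely on the coupled probability space, where $Q^i_t$ and $T^i_t$ are both defined (via the geometric-gap coupling of the standard and dual descriptions), and to transfer moment bounds in each direction using the deferred-decision structure. The easy direction is \emph{dual $\Rightarrow$ standard}. Since queue $i$ always sends its oldest packet first, every unprocessed packet of queue $i$ at time $t$ has a timestamp in the window $\{t-T^i_t,\dots,t\}$, and each of these $T^i_t+1$ integer times carries at most one arrival; hence $Q^i_t\le T^i_t+1$ \emph{deterministically}, so $Q_t\le n+\sum_{i}T^i_t$. The elementary convexity bound $(a+b)^r\le 2^{r-1}(a^r+b^r)$ then gives $\mathbb{E}[Q_t^r]\le 2^{r-1}\big(n^r+\mathbb{E}[(\sum_i T^i_t)^r]\big)$, so uniform moment bounds for the dual process yield them for the standard process.

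For the reverse direction \emph{standard $\Rightarrow$ dual}, I would condition on the dual filtration $\mathcal{G}_t$. Let $J$ be the number of packets queue $i$ has cleared by time $t$; this is $\mathcal{G}_t$-measurable, the oldest unprocessed packet of queue $i$ is the $(J{+}1)$-st arrival, and by the deferred-decision construction the subsequent inter-arrival gaps $G_{i,J+2},G_{i,J+3},\dots$ are i.i.d.\ $\text{Geom}(\lambda_i)$ and independent of $\mathcal{G}_t$. Counting how many of these later arrivals have landed by time $t$ shows that, conditionally on $\mathcal{G}_t$, $Q^i_t$ is distributed as $1+\text{Bin}(T^i_t,\lambda_i)$ (both equal to $0$ when queue $i$ is empty). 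Consequently $\mathbb{E}[(Q^i_t)^r\mid\mathcal{G}_t]\ge (\mathbb{E}[Q^i_t\mid\mathcal{G}_t])^r=(1+\lambda_i T^i_t)^r\ge \lambda_n^r (T^i_t)^r$ by Jensen's inequality (or Lemma~\ref{lem:binbound}); taking expectations and using $Q^i_t\le Q_t$ gives $\mathbb{E}[(T^i_t)^r]\le \lambda_n^{-r}\mathbb{E}[Q_t^r]\le \lambda_n^{-r}C_r$, and summing over $i$ with the same convexity inequality bounds $\mathbb{E}[(\sum_i T^i_t)^r]$ uniformly in $t$. This establishes the claimed equivalence.

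For the corollary, it remains to feed this into Lemma~\ref{lem:stabimplication}, whose two hypotheses are a uniform moment bound (supplied by strong stability) and a one-sided Lipschitz bound on the increments. The latter holds with constant $L=n$ for both processes: each queue receives at most one packet per step, so $Q_{t+1}\le Q_t+n$; and since $T^{i'}_{t+1}\ge T^{i'}_t$ always (clearing a packet only advances the oldest timestamp), one has $T^i_{t+1}=\max\{0,(t+1)-T^{i'}_{t+1}\}\le T^i_t+1$, hence $\sum_i T^i_{t+1}\le \sum_i T^i_t+n$. Applying Lemma~\ref{lem:stabimplication} with $X_t=Q_t$ (respectively $X_t=\sum_i T^i_t$) yields $X_t=o(t^c)$ almost surely for every $c>0$, i.e.\ almost sure stability.

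The main obstacle is making the conditional identity $Q^i_t\mid\mathcal{G}_t\sim 1+\text{Bin}(T^i_t,\lambda_i)$ rigorous: one must verify in the coupled process that the index $J$ of the oldest unprocessed packet and the age $T^i_t$ are $\mathcal{G}_t$-measurable while the unused geometric gaps $G_{i,J+2},G_{i,J+3},\dots$ are genuinely fresh. This is precisely where the hypothesis that queues never consult information about packets arriving after their current oldest one is used, and it is also why the dual (rather than standard) description is the convenient one here, mirroring the discussion in Remark~\ref{remark:dual}.
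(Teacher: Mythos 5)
Your proof is correct and its skeleton matches the paper's: both reduce the equivalence to the conditional-binomial relationship $Q^i_t\mid T^i_t\sim\text{Bin}(\Theta(T^i_t),\lambda_i)$ furnished by the deferred-decision coupling, and both derive almost-sure stability by feeding the uniform moment bounds into Lemma~\ref{lem:stabimplication} with the Lipschitz constant $L=n$. The paper handles both directions at once via the two-sided moment estimate $\mathbb{E}[(Q^i_t)^p\mid T^i_t]\asymp(T^i_t)^p$ of Lemma~\ref{lem:binbound}, whereas you split them, getting the dual-to-standard direction from the \emph{deterministic} inclusion $Q^i_t\le T^i_t+1$ (which bypasses any probabilistic reasoning and is arguably tighter and more robust), and the reverse direction from Jensen's inequality in place of Lemma~\ref{lem:binbound}. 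These are modest refinements rather than a different route, but your decomposition is clean and your closing remark correctly identifies the measurability issue the dual formulation is designed to finesse; be aware only that the exact conditional law is $1+\text{Bin}(T^i_t-1,\lambda_i)$ on $\{T^i_t\ge1\}$ (since $Q^i_t$ counts packets with timestamps in $\{t-T^i_t,\dots,t-1\}$), so both your $1+\text{Bin}(T^i_t,\lambda_i)$ and the paper's $\text{Bin}(T^i_t,\lambda_i)$ are off by one trial, harmlessly so for the moment bounds.
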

\begin{proof}
Suppose that the dynamics are as stated, so that the standard and dual dynamics yield completely equivalent processes. Then the distribution of $Q^i_t$ conditioned on the value of $T^i_t$ at time $t$ is $\text{Bin}(T^i_t,\lambda_i)$. Note that by the Law of Iterated Expectations, $\mathbb{E}[(Q^i_t)^p]=\mathbb{E}[\mathbb{E}[(Q^i_t)^p\vert T^i_t]]$. But by Lemma \ref{lem:binbound}. $\mathbb{E}[(Q^i_t)^p\vert T^i_t]\asymp (T_i^t)^p$ up to absolute constants depending only on $p$ and $\lambda_i$.
Therefore, by taking expectations, the standard system and dual system have equivalent strong stability properties.

Almost sure stability now follows from either form of strong stability from Lemma \ref{lem:stabimplication}, noting that either $Q_t$ or $T_t$ can increase by at most $L=n$ in each time step.

\end{proof}

\end{document}